\newtheorem{theorem}{Theorem}
\newtheorem{lemma}[theorem]{Lemma}
\newtheorem{definition}[theorem]{Definition}
\newtheorem{observation}[theorem]{Observation}
\newtheorem{proposition}[theorem]{Proposition}
\newtheorem{problem}[theorem]{Problem}
\newtheorem{claim}[theorem]{Claim} 
\newtheorem{corollary}[theorem]{Corollary}
\newcommand{\falsevalue}{0}
\newcommand{\truevalue}{1}
\newcommand{\treeautomaton}{\mathcal{A}}
\newcommand{\statestreeautomaton}{Q}
\newcommand{\transitionstreeautomaton}{\Delta}
\newcommand{\finalstatestreeautomaton}{F}
\newcommand{\lang}{\mathcal{L}}
\newcommand{\transitionsdpcore}{\Delta}
\newcommand{\widthdecomposition}[2]{w_{#1}(#2)}
\newcommand{\treewidthvalue}{k}
\newcommand{\alphabet}{\Sigma}
\newcommand{\analgorithm}{\mathfrak{A}}
\newcommand{\decompositiongraph}[1]{\mathcal{G}(#1)}
\newcommand{\agraph}{G}
\newcommand{\bgraph}{H}
\newcommand{\Nplus}{\mathbb{N}_{+}}
\newcommand{\N}{\mathbb{N}}
\newcommand{\Z}{\mathbb{Z}}
\newcommand{\abag}{\mathfrak{b}}
\newcommand{\bagset}{B}
\newcommand{\allsubterms}{\mathrm{Sub}} 
\newcommand{\aterm}{\tau}
\newcommand{\nodes}[1]{\mathsf{Nodes}(#1)}
\newcommand{\anode}{p}
\newcommand{\aposition}{p}
\newcommand{\vertexsetname}{V}
\newcommand{\vertexset}[1]{\vertexsetname_{#1}}
\newcommand{\edgesetname}{E}
\newcommand{\edgeset}[1]{\edgesetname_{#1}}
\newcommand{\incidencerelationname}{\rho}
\newcommand{\incidencerelation}[1]{\incidencerelationname_{#1}}
\newcommand{\edgeendpointsname}{\mathrm{endpts}}
\newcommand{\edgeendpoints}[1]{\edgeendpointsname_{#1}}
\newcommand{\anedge}{e}
\newcommand{\avertex}{v}
\newcommand{\introvertextype}[1]{\mathtt{IntroVertex}\{#1\}}
\newcommand{\introedgetype}[2]{\mathtt{IntroEdge}\{#1,#2\}}
\newcommand{\forgetvertextype}[1]{\mathtt{ForgetVertex}\{#1\}}
\newcommand{\jointype}{\mathtt{Join}}
\newcommand{\leaftype}{\mathtt{Leaf}}
\newcommand{\introvertexgeneric}[1]{\mathtt{IntroVertex}\{#1\}}
\newcommand{\introedgegeneric}[2]{\mathtt{IntroEdge}\{#1,#2\}}
\newcommand{\forgetvertexgeneric}[1]{\mathtt{ForgetVertex}\{#1\}}
\newcommand{\joingeneric}{\mathtt{Join}}
\newcommand{\finalwitnessgeneric}{\mathtt{Final}}
\newcommand{\joingenericcore}{\mathtt{Join}}
\newcommand{\initialsetgenericcore}{\mathtt{LeafSet}}
\newcommand{\finalwitnessgenericcore}{\mathtt{Final}}
\newcommand{\cleaningfunctioncore}{\mathtt{Clean}}
\newcommand{\accepteddecompositions}[1]{\mathsf{Acc}(#1)}
\newcommand{\accepteddecompositionsboundedwidth}[1]{\mathsf{Acc}(#1)} 
\newcommand{\numberusefulsetssimple}[1]{\delta_{#1}}
\newcommand{\maxsizeusefulsetsimple}[1]{\mu_{#1}}
\newcommand{\bitlengthusefulwitnesssimple}[1]{\beta_{#1}}
\newcommand{\children}{\mathrm{Children}}
\newcommand{\asymbol}{a}
\newcommand{\arity}{\mathfrak{r}}
\newcommand{\arityvalue}{r}
\newcommand{\labelingfunction}{\lambda}
\newcommand{\topbagname}{B}
\newcommand{\topbag}[1]{\topbagname(#1)}
\newcommand{\allabstractdecompositionspathwidth}[1]{\mathsf{IPD}_{#1}}
\newcommand{\allabstractdecompositionstreewidth}[1]{\mathsf{ITD}_{#1}}
\newcommand{\instructivetreedecompositionclass}{\mathsf{ITD}}
\newcommand{\abstractdecomposition}{\tau}
\newcommand{\sigmaabstractdecomposition}{\sigma}
\newcommand{\rootnode}[1]{\mathsf{root}(#1)}
\newcommand{\sizedecomposition}{n}
\newcommand{\vertexone}{u}
\newcommand{\vertextwo}{v}
\newcommand{\allwitnesses}{\mathcal{W}}
\newcommand{\awitness}{\mathsf{w}}
\newcommand{\dynamizationfunctionname}[1]{\Gamma[{#1}]}
\newcommand{\dynamizationfunction}[2]{\dynamizationfunctionname{#1,#2}}
\newcommand{\abstractalphabet}[1]{\Sigma_{#1}}
\newcommand{\atree}{T}
\newcommand{\finitepowerset}[1]{\mathcal{P}_{\mathrm{fin}}(#1)}
\newcommand{\powerset}[1]{\mathcal{P}(#1)}
\newcommand{\powersetchoosek}[2]{\mathcal{P}(#1,#2)}
\newcommand{\Terms}[1]{\mathrm{Terms}(#1)}
\newcommand{\astate}{q}
\newcommand{\graphproperty}{\mathbb{P}}
\newcommand{\dpcoregraphproperty}[1]{\mathbb{G}[#1]}
\newcommand{\dpcoregraphpropertyclass}[2]{\mathbb{G}[#1,#2]}
\newcommand{\boldS}{\mathbf{S}}
\newcommand{\asetgraphs}{S}
\newcommand{\dpcore}{\mathsf{D}}
\newcommand{\numberproperties}{\ell}
\newcommand{\realizationclass}{\mathcal{A}}
\newcommand{\true}{\mathit{true}}
\newcommand{\witnessset}{S}
\newcommand{\allgraphs}{\textsc{Graphs}}
\newcommand{\allgraphstreewidth}[1]{\textsc{GraphsTw}[#1]}
\newcommand{\invariantmeasure}{\mathcal{M}}
\newcommand{\invariant}{\mathcal{I}}
\newcommand{\languageclass}{\mathsf{L}}
\newcommand{\graphfunction}{\mathcal{G}}
\newcommand{\decompositionclass}{\mathsf{C}}
\newcommand{\alphabetclass}{\Sigma}
\newcommand{\isomorphic}{\sim}
\newcommand{\isomorphismclosure}[1]{\mathsf{ISO}(#1)} 
\newcommand{\isomorphism}{\phi}
\newcommand{\isomorphismvertices}{\phi_1}
\newcommand{\isomorphismedges}{\phi_2}
\newcommand{\maximumvertex}{a}
\newcommand{\maximumedge}{b}
\newcommand{\topmapname}{\theta}
\newcommand{\topmap}[1]{\topmapname[#1]}
\newcommand{\invariantCore}{\mathtt{Inv}}
\newcommand{\combinator}{\mathcal{C}}
\newcommand{\indicatorfunction}[2]{#1(#2)}
\newcommand{\combinatorproperty}[1]{\hat{#1}}
\newcommand{\domain}{Dom}
\newcommand{\image}{Im}
\newcommand{\dprefutation}{R}
\newcommand{\xvertex}{x}
\newcommand{\numbercolors}{c}
\newcommand{\vertexcoloring}{\alpha}
\newcommand{\flowValue}{m}
\newcommand{\flowWitness}{\awitness}
\newcommand{\flowAssignedValue}{f}
\newcommand{\flow}{\mathsf{Flow}}
\newcommand{\headmap}{h}
\newcommand{\tailmap}{t}
\newcommand{\labels}{\mathsf{Labels}}
\newcommand{\vertexcover}{X}
\newcommand{\vertexcoverParameter}{r}
\newcommand{\vertexcoverProperty}{\texttt{\mbox{$\mathtt{VertexCover}$}}}
\newcommand{\vertexcoverCore}{\texttt{\mbox{$\mathtt{C}$-$\mathtt{VertexCover}$}}}
\newcommand{\vertexcoverPred}{\texttt{\mbox{$\mathtt{P}$-$\mathtt{VertexCover}$}}}
\newcommand{\vertexcoverset}{R}
\newcommand{\vertexcoversize}{s}
\newcommand{\qconn}{\gamma}
\newcommand{\Pconn}{P}
\newcommand{\relabelingfunction}{\gamma}
\title{From Width-Based Model Checking to \\ Width-Based Automated Theorem Proving}
\date{$ $}
\author{
Mateus de Oliveira Oliveira$^{1,2}$ and Sam Urmian$^2$ \\ 
\\ 
$^1$Stockholm University, Sweden \\
$^2$University of Bergen, Norway \\ 
\\
{\small \texttt{oliveira@dsv.su.se, sam.urmian@uib.no}} \\ 
}
\begin{document}

\maketitle
\begin{abstract}
In the field of parameterized complexity theory, the study of  graph width measures has been 
intimately connected with the development of width-based model checking algorithms for combinatorial
properties on graphs. In this work, we introduce a general framework to convert a large class of width-based
model-checking algorithms into algorithms that can be used to test the validity of graph-theoretic conjectures on classes of graphs 
of bounded width. Our framework is modular and can be applied with respect to several well-studied width
measures for graphs, including treewidth and cliquewidth.

As a quantitative application of our framework, we prove analytically that for several long-standing graph-theoretic
conjectures, there exists an algorithm that takes a number $k$ as input and correctly
determines in time double-exponential in $k^{O(1)}$ whether the conjecture is valid on
all graphs of treewidth at most $k$. These upper bounds, which may be regarded as upper-bounds
on the size of proofs/disproofs for these conjectures on the class of graphs of treewidth at most $k$, 
improve significantly on theoretical upper bounds obtained using previously available techniques.\\
\\ 
\noindent{\bf Keywords: }{Dynamic Programming Cores, Width Measures, Automated Theorem Proving}
\end{abstract}

\section{Introduction} \label{section:Introduction}
\subsection{Motivation}
\label{subsection:Motivation}
When mathematicians are not able to solve a conjecture about a given class of
mathematical objects, it is natural to try to test the validity of the
conjecture on a smaller, or better behaved class of objects. In the realm of
graph theory, a common approach is to try to analyze the conjecture on 
restricted classes of graphs, defined by fixing some structural parameter.
In this work, we push forward this approach from a computational perspective 
by focusing on parameters derived from graph width measures. Prominent
examples of such parameters are the {\em treewidth} of a graph, which intuitively
quantifies how much a graph is similar to a tree 
\cite{robertson1984graph,bertele1973non,halin1976s}
and the cliquewidth of a graph, which intuitively quantifies how much a graph
is similar to a clique \cite{courcelle2000upper}.  
More specifically, we are concerned with the following problem. 

\begin{problem}[Width-Based ATP]
\label{problem:MainQuestionWBATP}
Given a graph property $\graphproperty$ and a positive integer $\treewidthvalue$, is it the case that every graph of width at most $\treewidthvalue$ belongs to $\graphproperty$? 
\end{problem}

Problem \ref{problem:MainQuestionWBATP} provides a width-based approach to the field of automated theorem proving (ATP).
For instance, consider Tutte's celebrated $5$-flow conjecture \cite{tutte1954contribution}, which states that every bridgeless graph has a nowhere-zero $5$-flow. Let $\texttt{HasBridge}$ be the graph property comprising all graphs that have a bridge and $\texttt{NZFlow}(5)$ be graph property comprising all graphs that admit a nowhere-zero $5$-flow. Then, proving Tutte's $5$-flow conjecture is equivalent to showing
that every graph belongs to the graph property $\texttt{HasBridge} \vee \texttt{NZFlow}(5)$. Since Tutte's conjecture has been 
unresolved for many decades, one possible approach for gaining understanding about this conjecture is 
to try to determine, for gradually increasing values of $k$,
whether every graph of width at most $\treewidthvalue$, with respect to some fixed width-measure, belongs to  $\texttt{HasBridge} \vee \texttt{NZFlow}(5)$. What makes this kind of question interesting from a proof
theoretic point of view is that several important classes of graphs have small width with respect to some width measure.
For instance, trees and forests have treewidth at most $1$, series-parallel graphs and outerplanar graphs have treewidth at most $2$, $k$-outerplanar graphs have treewidth at most $3k-1$, co-graphs have cliquewidth at most $2$, any distance hereditary graph has cliquewidth $3$, etc \cite{biedl2015triangulating,bodlaender1998partial,brandstadt1999graph,bodlaender1986classes,bodlaender2008combinatorial,kammer2007determining}. Therefore, proving the validity of a given conjecture on classes of graphs of small width corresponds to proving the conjecture on interesting classes of graphs. 

\subsection{Our Results}
\label{subsection:Results}
In this work, we introduce a general and modular framework that allows one to convert width-based dynamic programming algorithms
for the model checking of graph properties into algorithms that can be used to address Problem \ref{problem:MainQuestionWBATP}. 
More specifically, our main contributions are threefold. 

\begin{enumerate}
\item We start by defining the notions of a {\em treelike decomposition class} (Definition \ref{definition:TreelikeDecompositionClass}) 
and of a {\em treelike width-measure} (Definition \ref{definition:WidthMeasure}). These two notions can be used to express 
several well studied width measures for graphs, such as treewidth \cite{bodlaender1997treewidth}, pathwidth \cite{korach1993tree}, carving width \cite{thilikos2000constructive}, 
cutwidth \cite{chung1989graphs,thilikos2005cutwidth}, bandwidth \cite{chung1989graphs}, cliquewidth \cite{courcelle2000upper}, etc, and some more recent measures 
such as ODD-width \cite{andrade2019width}. 
It is worth noting that treelike width measures strictly generalize width measures defined by some prominent formalisms, such as families 
of graph grammars \cite{BauderonC87,BlumensathC06}. On the one hand, each graph grammar in such a family corresponds to a class of graphs of {\em constant} cliquewidth \cite{glikson2003nce}. 
On the other hand, the class of hypercube graphs has unbounded cliquewidth \cite{bonomo2016graph} but constant ODD-width \cite{andrade2019width}, 
and ODD-width is a treelike width measure.   
\item Subsequently, we introduce the notion of a
{\em treelike dynamic programming core} (Definition \ref{definition:DPCore}), 
a formalism for the specification of dynamic programming algorithms
operating on treelike decompositions. Our formalism combines two points 
of view for dynamic programming: the {\em symbolic} point of view 
\cite{DBLP:journals/tcs/CourcelleD16} that allows one to use symbolic tree 
automata to reason about classes of graphs of bounded width, 
and the {\em combinatorial} point of view \cite{baste2022diversity}
that provides a general methodology for the specification
of state-of-the-art combinatorial algorithms for model-checking graph properties
on classes of bounded width. Treelike DP-cores satisfying certain 
{\em coherency} and {\em finiteness} properties are suitable for the 
study of the computational complexity of the problem of determining whether
certain graph-theoretic conjectures are valid on the class of graphs 
of width at most $k$ (for several suitable notions of width).
Additionally, our formalism allows one to establish upper bounds on
the size of a hypothetical counter-example of width at most $k$
in terms of the computational complexity of algorithms for model-checking
the graph properties arising in the statement of the conjecture. 
\item Our main result (Theorem \ref{theorem:InclusionTestCombinators}) states that if a graph property $\graphproperty$ is a Boolean
combination (see Section \ref{subsection:CombinatorsCombination}) 
of graph properties $\graphproperty_1,\dots,\graphproperty_{\numberproperties}$ 
that can be decided by coherent and finite DP-cores $\dpcore_1,\dpcore_{2},\dots,\dpcore_{\numberproperties}$, then the process of 
determining whether every graph of width at most $\treewidthvalue$ belongs to $\graphproperty$ can be decided 
roughly\footnote{The precise statement of Theorem \ref{theorem:InclusionTestCombinators} involves other parameters that are negligible in typical applications.} in time
\[
2^{O(\beta(\treewidthvalue)\cdot \mu(\treewidthvalue))} 
\leq 2^{2^{O(\beta(\treewidthvalue))})},
\]
where $\mu(\treewidthvalue)$ and $\beta(\treewidthvalue)$ are respectively the maximum multiplicity and 
the maximum bitlength of a DP-core from the list $\dpcore_{1},\dots,\dpcore_{\numberproperties}$ (see Section \ref{subsection:ComplexityMeasures}).
Additionally, if a counterexample of width at most $\treewidthvalue$ exists, then a term of height at most $2^{O(\beta(\treewidthvalue)\cdot \mu(\treewidthvalue))}$ 
representing such a counterexample can be constructed (Corollary \ref{corollary:SizeCounterexample}). 
\end{enumerate}

We illustrate our approach by specializing on graphs of bounded treewidth. Mores specifically, we show that several long-standing conjectures in graph theory 
can be tested on the class of graphs of treewidth at most $\treewidthvalue$ 
in time double exponential in $k^{O(1)}$.
Examples of such conjectures include Hadwiger conjecture
\cite{hadwiger1943klassifikation}, Tutte's flow conjectures
\cite{tutte1954contribution} and  Barnette's conjecture \cite{tutte1969recent}
(Section \ref{section:Applications}). Our upper bounds improve significantly on upper bounds obtained using previous techniques. For instance, we show
that Hadwiger's conjecture for $c$ colors can be verified on the class of graphs
of treewidth at most $k$ in time $2^{2^{O(k\log k + c^2)}}$, while previously, the 
best known upper bound was of the form $p^{p^{p^{p}}}$ 
for $p=(k+1)^{c-1}$ \cite{kawarabayashi2009hadwiger}. 

\subsection{Related Work}
\label{subsection:RelatedWork}

General automata-theoretic frameworks for the development of dynamic programming algorithms have been introduced under a wide variety of contexts 
\cite{gnesi1981dynamic,papusha2016automata,pardo1997automata,kumar1988cdp,parker1989partial,morales2000parallel,leon2009tools,bannach2019positive,DBLP:journals/algorithms/BannachB19,baste2022diversity}.
In most of these contexts, automata are used to encode the space of solutions of combinatorial problems when a graph $G$ is given at the input. 
For instance, given a tree decomposition of width $\treewidthvalue$ of a graph $\agraph$, one can construct a tree automaton representing the set of proper $3$-colorings of $\agraph$
\cite{DBLP:journals/algorithms/BannachB19}. 

In our framework, treelike DP-cores are used to define 
graph properties. For instance, one can define a treelike DP-core $\dpcore$, where for each 
$\treewidthvalue\in \N$, $\dpcore[\treewidthvalue]$ is a finite representation of the set of all graphs of treewidth at most $\treewidthvalue$ that are $3$-colorable. 
In our setting, it is essential that graphs of width $\treewidthvalue$ are encoded as terms over a finite alphabet whose size only depends on $\treewidthvalue$. 
We note that the idea of representing families of graphs as tree languages over a finite alphabet has been used in a wide variety of contexts
\cite{PilipczukBojanczyk2016,AdlerGroheKreutzer2008,CourcelleEngelfriet2012,Elberfeld2016,Flum2002query,DBLP:journals/tcs/CourcelleD16}. Nevertheless,
the formalisms arising in these contexts are usually designed to be compatible with logical algorithmic meta-theorems, and for this reason, 
tree automata are meant to be compiled from logical specifications, rather than to be programmed.
In contrast, our combinatorial framework allows one to easily specify
state-of-the-art dynamic programming algorithms operating on treelike decompositions, and 
to safely combine these algorithms (just like plugins) for the purpose of 
width-based automated theorem proving. 

The monadic second-order logic of graphs (MSO$_2$ logic) extends first-order logic by introducing quantification over sets of
vertices and over sets of edges. This logic is powerful enough to express several well studied graph-theoretic 
properties such as connectivity, Hamiltonicity, $3$-colorability, and many others. Given that for each $\treewidthvalue\in \N$, 
the MSO$_2$ theory of graphs of treewidth at most $\treewidthvalue$
is decidable \cite{seese1991structure,Courcelle1990MSO}, we have that if a graph property $\graphproperty$ is
definable in MSO$_2$ logic, then there is an algorithm that takes an integer $\treewidthvalue$ as input, and
correctly determines whether every graph of treewidth at most $\treewidthvalue$ belongs to $\graphproperty$.
A similar result can be proved with respect to graphs of constant cliquewidth using MSO$_1$ logic \cite{seese1991structure}, and for 
graphs of bounded ODD width using FO logic \cite{andrade2019width}. One issue with addressing 
Problem \ref{problem:MainQuestionWBATP} using this logic-theoretic approach is that algorithms obtained
in this way are usually based on quantifier-elimination. As a consequence, the 
function upper-bounding the running time of these algorithms in terms of the width parameter grows 
as a tower of exponentials whose height depends on the number of quantifier alternations of the logical sentence
given as input to the algorithm. For instance, the algorithm obtained using this approach to test the validity of 
Hadwiger's conjecture restricted to $c$ colors on graphs of treewidth at most $k$ has a very large
dependency on the width parameter. In \cite{kawarabayashi2009hadwiger}, the time necessary to perform
this task was estimated in $f(\numbercolors,\treewidthvalue)\leq p^{p^{p^{p}}}$,
where $p=(\treewidthvalue+1)^{(\numbercolors-1)}$ \cite{kawarabayashi2009hadwiger}.
Our approach yields a much smaller upper bound of the form $2^{2^{O(\treewidthvalue \cdot \log \treewidthvalue + c^2)}}$. 
Significant reductions in complexity, with respect to the logical approach, can also be observed for other conjectures.

Algorithms with optimal dependency on the width parameter have been obtained for many graph properties 
\cite{DBLP:journals/eatcs/LokshtanovMS11,pilipczuk2011problems} under standard complexity theoretic assumptions, such as 
the exponential time hypothesis (ETH) \cite{DBLP:journals/jcss/ImpagliazzoPZ01,DBLP:journals/jcss/ImpagliazzoP01} and 
related conjectures \cite{calabro2009complexity,DBLP:conf/innovations/CarmosinoGIMPS16}. It is worth noting that in many cases, 
the development of such optimal algorithms requires the use of advanced techniques borrowed from diverse subfields
of mathematics, such as structural graph theory \cite{robertson1986graph,DBLP:conf/soda/BasteST20}, 
algebra \cite{DBLP:conf/esa/RooijBR09,bodlaender2015deterministic,DBLP:conf/stacs/NederlofPSW22}, 
combinatorics \cite{raymond2017recent,DBLP:journals/siamdm/LokshtanovMSZ19}, etc. 
Our framework allows one to incorporate several of these techniques in the development of faster algorithms for width-based automated theorem 
proving.

\section{Preliminaries} \label{section:Preliminaries} 
We let $\N$ denote the set of natural numbers and $\Nplus$ denote the set of
positive natural numbers. We let $[0] = \emptyset$, and for each $n\in
\Nplus$, we define $[n] = \{1,...,n\}$. 
Given a set $S$, the set of finite subsets of $S$ is denoted by
$\finitepowerset{S}$.

In this work, a {\em graph} is a triple $\agraph =
(\vertexsetname,\edgesetname,\incidencerelationname)$ where $\vertexsetname
\subseteq \N$ is a {\em finite} set of {\em vertices}, $\edgesetname \subseteq
\N$ is a finite set {\em edges}, and $\incidencerelationname \subseteq
\edgesetname\times \vertexsetname$ is an {\em incidence relation}.  For each
edge $\anedge\in \edgesetname$, we let $\edgeendpointsname(\anedge) =
\{\avertex\in \vertexsetname\;:\;
(\anedge,\avertex)\in\incidencerelationname\}$ be the set of vertices incident
with $\anedge$.  In what follows, we may write $\vertexset{\agraph}$,
$\edgeset{\agraph}$ and $\incidencerelation{\agraph}$ to denote the sets
$\vertexsetname$, $\edgesetname$ and $\incidencerelationname$ respectively. We
let $|\agraph| = |\vertexset{\agraph}|+|\edgeset{\agraph}|$ be the {\em size}
of $\agraph$.  We let $\allgraphs$ denote the set of all graphs. For us, the
{\em empty graph} is the graph $(\emptyset,\emptyset,\emptyset)$ with no
vertices, no edges, and no incidence pairs.

An {\em isomorphism} from a graph $\agraph$ to a graph $\bgraph$ is a pair
$\isomorphism  = (\isomorphismvertices,\isomorphismedges)$ where
$\isomorphismvertices:\vertexset{\agraph}\rightarrow \vertexset{\bgraph}$ is a
bijection from the vertices of $\agraph$ to the vertices of $\bgraph$ and
$\isomorphismedges:\edgeset{\agraph}\rightarrow \edgeset{\bgraph}$ is a
bijection from the edges of $\agraph$ to the edges of $\bgraph$ with the
property that for each 
vertex $\avertex\in \vertexset{\agraph}$
and each 
edge $\anedge \in \edgeset{\agraph}$, 
$(\anedge,\avertex)\in \incidencerelation{\agraph}$ if and only if
$(\isomorphismedges(\anedge),\isomorphismvertices(\avertex))\in \incidencerelation{\bgraph}$. 
If such a bijection exists, we say that $\agraph$
and $\bgraph$ are {\em isomorphic}, and denote this fact by $\agraph\sim \bgraph$. 

A {\em graph property} is any subset $\graphproperty \subseteq \allgraphs$
closed under isomorphisms. That is to say, for each two isomorphic graphs
$\agraph$ and $\bgraph$ in $\allgraphs$, $\agraph \in\graphproperty$ if and
only if $\bgraph\in\graphproperty$. Note that the sets $\emptyset$ and
$\allgraphs$ are graph properties. Given a set $\asetgraphs$ of graphs, the
{\em isomorphism closure} of $\asetgraphs$ is defined as the set
$\isomorphismclosure{\asetgraphs} = \{\agraph\in \allgraphs \;:\; \exists
\bgraph \in \asetgraphs, \agraph\isomorphic \bgraph\}.$

Given a graph property $\graphproperty$, a {\em $\graphproperty$-invariant} is
a function $\invariant:\graphproperty \rightarrow S$, for some set $S$, 
that is invariant under graph isomorphisms.  More precisely, $\invariant(\agraph) =
\invariant(\bgraph)$ for each two isomorphic graphs $\agraph$ and $\bgraph$ in
$\graphproperty$. If $\graphproperty=\allgraphs$, we may say that $\invariant$
is simply a {\em graph invariant}. For instance, chromatic number, clique
number, dominating number, etc., as well as width measures such as treewidth and 
cliquewidth, 
are all graph invariants. In this work, the set $S$ will be typically $\N$,
when considering width measures, or $\{0,1\}^*$ when considering other invariants encoded in binary.
In order to avoid confusion we may use the letter $\invariantmeasure$ to denote invariants
corresponding to width measures, and the letter $\invariant$ to denote general invariants.

A {\em ranked alphabet} is a finite non-empty set $\alphabet$ together with
function $\arity:\alphabet\rightarrow \N$ that specifies the arity of
each symbol in $\alphabet$. The arity of $r$ is the maximum arity 
of a symbol in $\alphabet$. 
A term over $\alphabet$ is a pair $\aterm = (\atree,\labelingfunction)$ where $\atree$ is a
rooted tree, where the children of each node are totally ordered,
and $\labelingfunction:\nodes{\atree}\rightarrow \alphabet$ is a
function that labels each node $\anode$ in $\nodes{\atree}$ with a
symbol from $\alphabet$ of arity $|\children(\anode)|$, i.e., the number of children
of $\anode$. In particular, leaf nodes
are labeled with symbols of arity $0$.
We may write $\nodes{\aterm}$ to refer to $\nodes{\atree}$. We
write $|\aterm|$ to denote $|\nodes{\atree}|$. The height of $\aterm$ is
defined as the height of $\atree$.  We denote by $\Terms{\alphabet}$
the set of all terms over $\alphabet$. 
If $\aterm_1=(\atree_1,\labelingfunction_1),...,\aterm_{\arityvalue}=(\atree_\arityvalue,\labelingfunction_\arityvalue)$
are terms in $\Terms{\alphabet}$, and $\asymbol\in \alphabet$ is a symbol of arity $\arityvalue$,
then we let $\asymbol(\aterm_1,...,\aterm_\arityvalue)$ denote the term $\aterm =
(\atree,\labelingfunction)$ where $\nodes{\atree} = \{u\}\cup
\nodes{\atree_1}\cup \dots \cup \nodes{\atree_{\arityvalue}}$ for some fresh node $u$,
$\rootnode{\atree} = u$, $\labelingfunction(u) = \asymbol$, and
$\labelingfunction|_{\nodes{\atree_j}} = \labelingfunction_j$ for each $j\in
[\arityvalue]$.
A tree automaton is a tuple $\treeautomaton =
(\alphabet,\statestreeautomaton,\finalstatestreeautomaton,\transitionstreeautomaton)$
where $\alphabet$ is a ranked alphabet, $\statestreeautomaton$ is a finite set
of states, $\finalstatestreeautomaton$ is a final set of states, and
$\transitionstreeautomaton$ is a set of transitions (i.e. rewriting rules) of the form
$\asymbol(\astate_1,\dots,\astate_{r})\rightarrow \astate$, where $\asymbol$ is
a symbol of arity $r$, and $\astate_1,\dots,\astate_{r},\astate$ are states in
$\statestreeautomaton$. A term $\aterm$ is accepted by $\treeautomaton$ if it
can be rewritten into a final state in $\finalstatestreeautomaton$ by
transitions in $\transitionstreeautomaton$. 
The language of $\treeautomaton$, denoted by $\lang(\treeautomaton)$, is the set of all terms accepted by
$\treeautomaton$.
We refer to \cite{TATA2008} for basic concepts on tree automata theory.

\section{Treelike Width Measures}
\label{section:TreelikeWidthMeasures}

In this section, we introduce the notion of a {\em treelike width measure}. Subsequently, 
we show that prominent width measures such as treewidth and cliquewidth 
fulfil the conditions of our definition. We start by introducing the notion of a {\em treelike decomposition class}. 

\begin{definition}
\label{definition:TreelikeDecompositionClass}
Let $\arityvalue\in \N$. A treelike
decomposition class of arity $\arityvalue$ is a sequence 
$\decompositionclass =
\{(\alphabetclass_{\treewidthvalue},\languageclass_{\treewidthvalue},
\graphfunction_{\treewidthvalue})\}_{\treewidthvalue\in
\N},
$
where for each $\treewidthvalue\in\N$,
$\alphabetclass_{\treewidthvalue}$ is a ranked alphabet of arity at most $\arityvalue$, 
$\languageclass_{\treewidthvalue}$ is a regular tree language over $\alphabetclass_{\treewidthvalue}$, and 
$\graphfunction_{\treewidthvalue}:\languageclass_{\treewidthvalue}\rightarrow \allgraphs$ is a function that assigns
a graph $\graphfunction_{\treewidthvalue}(\abstractdecomposition)$ to each $\abstractdecomposition\in \languageclass_{\treewidthvalue}$.
Additionally, we require that for each $\treewidthvalue\in \N$, $\alphabetclass_{\treewidthvalue}\subseteq \alphabetclass_{\treewidthvalue+1}$, 
$\languageclass_{\treewidthvalue}\subseteq \languageclass_{\treewidthvalue+1}$, and 
$\graphfunction_{k+1}|_{\languageclass_{k}} = \graphfunction_{\treewidthvalue}$. 
\end{definition}

Terms in the set $\languageclass(\decompositionclass) = \bigcup_{\treewidthvalue\in \N}
\languageclass_{\treewidthvalue}$ are called {\em {$\decompositionclass$-decompositions}}.
For each such a term $\abstractdecomposition$, we may write simply $\graphfunction(\abstractdecomposition)$
to denote $\graphfunction_k(\abstractdecomposition)$. The {\em $\decompositionclass$-width} of a 
$\decompositionclass$-decomposition $\abstractdecomposition$, denoted by $\widthdecomposition{\decompositionclass}{\abstractdecomposition}$, 
is the minimum $\treewidthvalue$ such that $\abstractdecomposition\in \languageclass_{\treewidthvalue}$. 
The {\em $\decompositionclass$-width} of a graph $\agraph$, denoted by $\widthdecomposition{\decompositionclass}{\agraph}$,
is the minimum $\decompositionclass$-width of a $\decompositionclass$-decomposition $\abstractdecomposition$ 
with $\graphfunction(\abstractdecomposition) \simeq \agraph$. 
We let $\widthdecomposition{\decompositionclass}{\agraph} =\infty$ if no such minimum $\treewidthvalue$
exists.

For each $\treewidthvalue\in \N$, we may write $\decompositionclass_{\treewidthvalue}=(\alphabetclass_{\treewidthvalue},\languageclass_{\treewidthvalue},\graphfunction_{\treewidthvalue})$
to denote the $\treewidthvalue$-th triple in $\decompositionclass$. The {\em graph property defined 
by $\decompositionclass_\treewidthvalue$} is the set $\dpcoregraphproperty{\decompositionclass_\treewidthvalue} =
\isomorphismclosure{\{\graphfunction(\abstractdecomposition)\;:\;
\abstractdecomposition\in \languageclass_{\treewidthvalue}\}}$. Note that every graph in 
$\dpcoregraphproperty{\decompositionclass_\treewidthvalue}$ has $\decompositionclass$-width
at most $\treewidthvalue$, and that 
$\dpcoregraphproperty{\decompositionclass_\treewidthvalue}\subseteq \dpcoregraphproperty{\decompositionclass_{\treewidthvalue+1}}$. We let $\dpcoregraphproperty{\decompositionclass} = \bigcup_{\treewidthvalue\in \N}
\dpcoregraphproperty{\decompositionclass_{\treewidthvalue}}$ be the graph property defined by 
$\decompositionclass$. We note that the $\decompositionclass$-width of any graph in 
$\dpcoregraphproperty{\decompositionclass}$ is finite.

\begin{definition}[Treelike Width Measure]
\label{definition:WidthMeasure}
Let $\graphproperty$ be a graph property and $\invariantmeasure:\graphproperty \rightarrow \N$ be 
a $\graphproperty$-invariant. We say that $\invariantmeasure$ is a {\em treelike width measure} if there is a treelike 
decomposition class $\decompositionclass$ such that $\graphproperty = \dpcoregraphproperty{\decompositionclass}$, 
and for each graph $\agraph\in \graphproperty$, $\widthdecomposition{\decompositionclass}{\agraph} = \invariantmeasure(\agraph)$.
In this case, we say that $\decompositionclass$ is a {\em realization} of $\invariantmeasure$. 
\end{definition}

The next theorem states that several well studied width measures for graphs are treelike.
The proof of this theorem can be found in Appendix \ref{appendix:proofTheoremTreewidthTreelike}. 

\begin{theorem}
\label{theorem:TreewidthTreelike}
The width measures treewidth, pathwidth, carving width, cutwidth, cliquewidth and ODD width\cite{andrade2019width} are treelike width measures. 
\end{theorem}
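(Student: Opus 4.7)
The plan is to exhibit, for each of the five width measures, an explicit treelike decomposition class $\decompositionclass$ realizing the measure in the sense of Definition \ref{definition:WidthMeasure}. In every case I will follow the same template: design a ranked alphabet $\alphabetclass_{\treewidthvalue}$ of ``annotated operations'' whose arities match the standard combinators used by width-based dynamic programming over the measure; define $\graphfunction_{\treewidthvalue}$ by evaluating a term bottom-up, each operation performing its natural action on partially-constructed boundaried graphs via the $\oplus$ operation from the Preliminaries; and define $\languageclass_{\treewidthvalue}$ as the set of syntactically well-formed terms, which is checkable by a small deterministic tree automaton whose states track the active label set at the current node.

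For treewidth I would use the standard nice-tree-decomposition operations $\leaftype$, $\introvertextype{i}$, $\introedgetype{i}{j}$, $\forgetvertextype{i}$, $\jointype$ with $i,j \in [\treewidthvalue+1]$; a tree automaton enforces that the active label set is maintained coherently (leaves have empty bag, intro labels a free slot, forget requires the slot to be active, intro-edge requires both endpoints active, join requires matching bags on both children). Pathwidth is the same restriction where $\jointype$ is forbidden, yielding a unary decomposition class. Cutwidth admits a term representation over the alphabet ``append vertex $v$ to a linear order'' together with labels in $[\treewidthvalue+1]$ tracking the open edges crossing the current cut. Carving width is handled by branch decompositions of the edge set, encoded as binary terms in which each internal node is labeled by its interface and the automaton verifies that the interfaces on both sides are consistent. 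For cliquewidth I would use the standard $\treewidthvalue$-expression alphabet with operations ``create vertex with label $i$,'' ``relabel $i \to j$,'' ``add all edges between labels $i$ and $j$,'' and ``disjoint union,'' with $i,j \in [\treewidthvalue]$.

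The nesting conditions $\alphabetclass_{\treewidthvalue} \subseteq \alphabetclass_{\treewidthvalue+1}$, $\languageclass_{\treewidthvalue} \subseteq \languageclass_{\treewidthvalue+1}$, and $\graphfunction_{\treewidthvalue+1}|_{\languageclass_{\treewidthvalue}} = \graphfunction_{\treewidthvalue}$ are immediate from the construction, since each alphabet $\alphabetclass_{\treewidthvalue}$ is a prefix of $\alphabetclass_{\treewidthvalue+1}$ obtained by restricting labels to the smaller range, and the automaton and evaluation function depend only on the labels that actually appear in the term. Regularity is witnessed explicitly by the tree automata just described, so the realization is effective (automatic). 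The identity $\widthdecomposition{\decompositionclass}{\agraph} = \invariantmeasure(\agraph)$ for each graph $\agraph$ reduces, in every case, to the classical fact that nice decompositions preserve width: e.g., every tree decomposition of width $\treewidthvalue$ of $\agraph$ can be converted into a nice one of the same width, and conversely the evaluation of any such term yields a graph isomorphic to $\agraph$ together with a tree decomposition of width at most $\treewidthvalue$ readable off the term.

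The main obstacle will be this last equivalence for the less standard measures: although it is folklore that treewidth, pathwidth, and cliquewidth are exactly captured by their nice-decomposition/$\treewidthvalue$-expression counterparts, for cutwidth and carving width one must be careful that the chosen term encoding neither inflates nor deflates the width parameter. This amounts to verifying that the labels carried along the term are exactly the edges crossing the currently relevant cut, so that the minimum $\treewidthvalue$ over terms representing $\agraph$ in $\languageclass_{\treewidthvalue}$ matches the classical definition as a minimum over linear layouts or branch decompositions. Once this correspondence is spelled out for each measure, the rest of the verification is routine and the proof is complete; the full details would be deferred to an appendix as announced in the excerpt.
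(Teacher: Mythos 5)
Your proposal follows essentially the same route as the paper: the paper's own proof is a short appendix that simply points to the classical term characterizations you reconstruct here (Downey--Fellows parse trees for treewidth, their $\jointype$-free restriction for pathwidth, Courcelle--Olariu $k$-expressions for cliquewidth, and slice-type decompositions for cutwidth and carving width), noting in each case that well-formedness is recognized by a tree automaton tracking the active interface. One caveat: your sketch for carving width describes ``branch decompositions of the edge set,'' which is the definition of \emph{branchwidth}; carving width is defined via carvings whose leaves are the \emph{vertices} of $\agraph$, with the width of a tree edge being the number of graph edges crossing the induced vertex bipartition, so the term encoding should track the set of open edges crossing the cut between processed and unprocessed vertices (this is exactly what the tree-slice decompositions cited by the paper do), and with that correction the argument goes through as you describe.
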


In our results related to width-based automated theorem proving, we will need to take into consideration the time necessary to construct a description 
of the languages associated with a treelike decomposition class. 
Let $\decompositionclass = \{(\alphabet_{\treewidthvalue},\languageclass_{\treewidthvalue},\graphfunction_{\treewidthvalue})\}_{\treewidthvalue\in \N}$ 
be a treelike decomposition class of arity $\arityvalue$. An automation for $\decompositionclass$ is a sequence
$\realizationclass = \{\realizationclass_{\treewidthvalue}\}_{\treewidthvalue\in \N}$ of tree automata where for each
$\treewidthvalue\in \N$, $\lang(\realizationclass_{\treewidthvalue}) = \languageclass_{\treewidthvalue}$. We say that $\realizationclass$ has complexity 
$f:\N\rightarrow \N$ if for each $\treewidthvalue\in \N$, $\realizationclass_{\treewidthvalue}$ has at most $f(\treewidthvalue)$ states, and 
there is an algorithm $\analgorithm$ that takes a number $\treewidthvalue\in \N$ as input, and constructs $\treeautomaton_{\treewidthvalue}$ in 
time $\treewidthvalue^{O(1)}\cdot f(\treewidthvalue)^{O(\arityvalue)}$.

\section{A DP-Friendly Realization of Treewidth}
\label{DPRealization}

As stated in Theorem \ref{theorem:TreewidthTreelike}, treewidth fulfills our definition of a treelike width measure. Indeed, suitable 
encodings of graphs of treewidth at most $\treewidthvalue$ as terms over a finite alphabet (whose size may depend on $\treewidthvalue$) 
can be used to realize treewidth as a treelike width measure. Examples of such encodings have been considered in the literature under a
wide variety of contexts \cite{downey2012parameterized,PilipczukBojanczyk2016,AdlerGroheKreutzer2008,CourcelleEngelfriet2012,Elberfeld2016,Flum2002query}.
In this section, we introduce the notion of a {\em $\treewidthvalue$-instructive tree decomposition}, and use this notion to define an alternative
realization of treewidth as a treelike width measure. 
At the same time that $\treewidthvalue$-instructive tree decompositions provide a representation of graphs of treewidth at most $\treewidthvalue$
as terms over a finite alphabet, 
such decompositions have a close correspondence with the notion edge-introducing 
tree decompositions, which are often used to define dynamic programming algorithms parameterized by treewidth. This correspondence allows one 
to translated dynamic programming algorithms that decide graph properties by processing edge-introducing tree decompositions of width at most $\treewidthvalue$ into 
dynamic programming algorithms that decide such properties by processing $\treewidthvalue$-instructive tree decompositions. 
These translated algorithms can then be used in the context of automated theorem proving.

\begin{definition}
\label{definition:kInstructiveAlphabet}
For each $\treewidthvalue\in \N$, we let 
$$
\begin{array}{l}
\abstractalphabet{\treewidthvalue}   = \{\leaftype,\; \introvertextype{\vertexone},\;
\forgetvertextype{\vertexone},\; \\
\hphantom{a}\hspace{5.1cm}  \introedgetype{\vertexone}{\vertextwo}, \; \jointype  
  \;:\; \vertexone,\vertextwo\in [\treewidthvalue+1], \vertexone\neq \vertextwo\}. 
\end{array}
$$	
where $\leaftype$ is a symbol of arity $0$, $\introvertextype{\vertexone}$, $\forgetvertextype{\vertexone}$
and $\introedgetype{\vertexone}{\vertextwo}$ are symbols of arity $1$, and $\jointype$ is a symbol of arity $2$. 
We call $\abstractalphabet{\treewidthvalue}$ the $\treewidthvalue$-instructive alphabet.
\end{definition}

Intuitively, the elements of $\abstractalphabet{\treewidthvalue}$ should be regarded as instructions
that can be used to construct graphs inductively. Each such a graph has an associated set $\abag\subseteq [\treewidthvalue+1]$ 
of {\em active labels}. In the base case, the instruction $\leaftype$ creates an empty graph with an empty set of active labels. 
Now, let $\agraph$ be a graph with set of active 
labels $\abag$. For each $\vertexone\in [\treewidthvalue+1]\backslash \abag$, the instruction $\introvertextype{\vertexone}$ adds a new vertex
to $G$, labels this vertex with $\vertexone$, and adds $\vertexone$ to $\abag$. For each $\vertexone\in \abag$, the 
instruction $\forgetvertextype{\vertexone}$ erases the label from the current vertex labeled with $\vertexone$, and 
removes $\vertexone$ from $\abag$. The intuition is that the label $\vertexone$ is now free and may be used later in the 
creation of another vertex. For each $\vertexone,\vertextwo\in \abag$, the instruction $\introedgetype{\vertexone}{\vertextwo}$
introduces a new edge between the current vertex labeled with $\vertexone$ and the current vertex labeled with $\vertextwo$.
We note that multiedges are allowed in our graphs.  
Finally, if $\agraph$ and $\agraph'$ are two graphs, each having $\abag$ as the set of active labels, then the instruction
$\jointype$ creates a new graph by identifying, for each $\vertexone\in \abag$, the vertex of $\agraph$ labeled 
with $\vertexone$ with the vertex of $\agraph'$ labeled with $u$. 

\begin{figure}[h]
\centering
\includegraphics[scale=0.3]{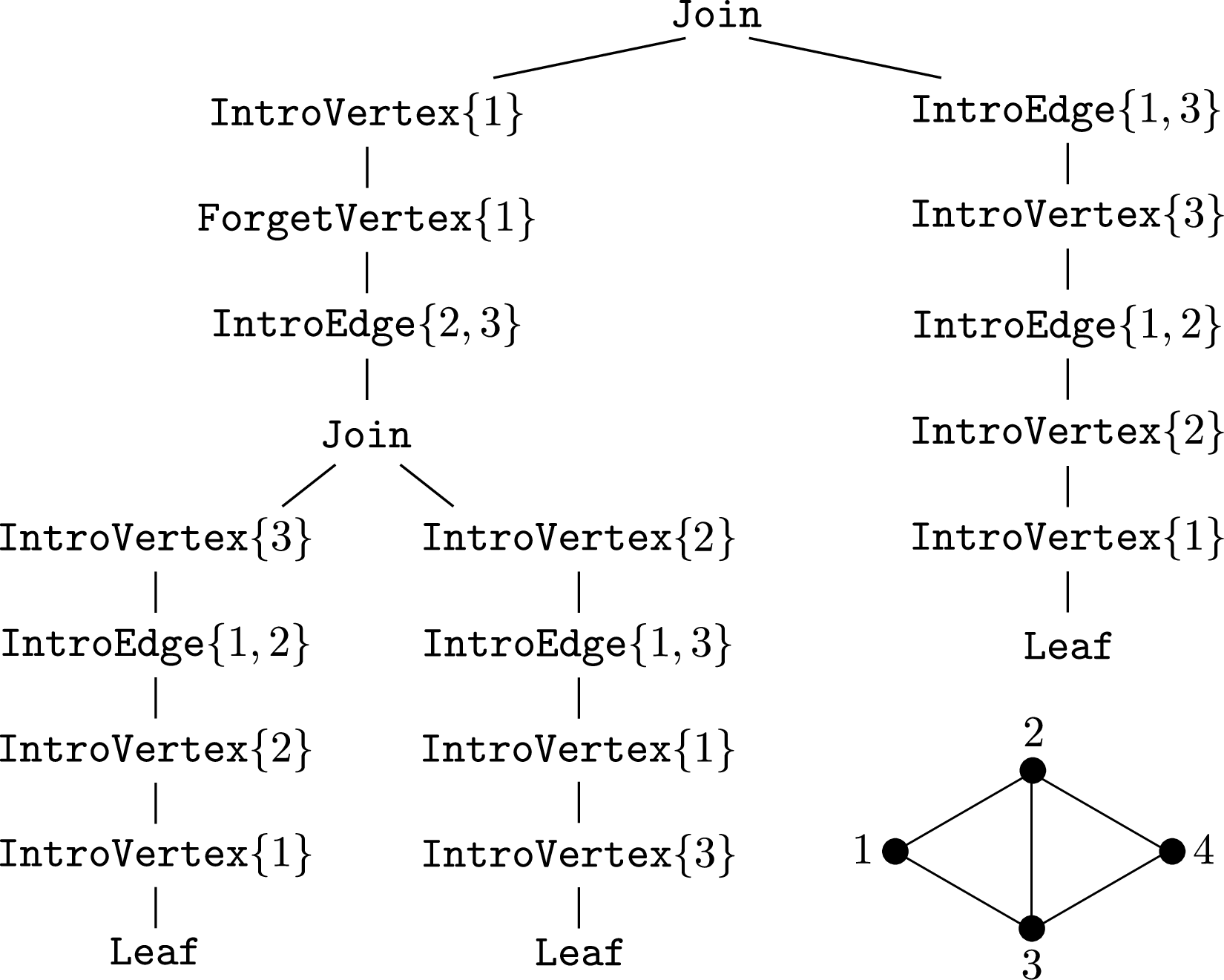}
\caption{A $2$-instructive tree decomposition $\abstractdecomposition$, and the 
graph $\decompositiongraph{\abstractdecomposition}$ associated with $\abstractdecomposition$.
Note that the graph has four vertices even though only labels from the set $\{1,2,3\}$ are used in the instructions occurring in the tree. Intuitively, once a vertex has been forgotten, its label can be reused when introducing a new vertex. \label{figure:TreeDecomposition}}
\end{figure}

A graph constructed according to the process described above can be encoded by a term over the alphabet $\abstractalphabet{\treewidthvalue}$. We let 
$\allabstractdecompositionstreewidth{\treewidthvalue}$ be the set of all terms over 
$\abstractalphabet{\treewidthvalue}$ that encode the construction of some graph.
The terms in $\allabstractdecompositionstreewidth{\treewidthvalue}$ are called $\treewidthvalue$-instructive tree decompositions (see Appendix \ref{formalDefinitionITD} for a formal definition). The following lemma, 
whose proof can be found in Appendix
\ref{section:ProofLemmaTreewidthIsTreelikeConfirmation},
establishes a close correspondence between graphs of treewidth at most $k$ and $k$-instructive tree decompositions. 

\begin{lemma}
\label{lemma:TreewidthIsTreelikeConfirmation}
Let $\agraph\in \allgraphs$ and $\treewidthvalue\in \N$. 
Then $\agraph$ has treewidth at most $\treewidthvalue$ 
if and only if there exists a $\treewidthvalue$-instructive tree decomposition $\abstractdecomposition$ 
such that $\graphfunction(\abstractdecomposition) \simeq \agraph$. 
\end{lemma}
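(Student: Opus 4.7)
The plan is to prove both directions by reducing to the classical equivalence between (nice) tree decompositions and the inductive build-up of graphs of bounded treewidth. For the forward direction, I would start with an arbitrary tree decomposition $(T,\{X_t\}_{t\in V(T)})$ of $\agraph$ of width at most $\treewidthvalue$, and first convert it into a \emph{nice} tree decomposition in the standard sense, whose nodes are of four kinds: leaf (empty bag), introduce-vertex, forget-vertex, and join. I would then augment this with \emph{introduce-edge} nodes: for each edge $\anedge$ of $\agraph$, I pick any node whose bag contains both endpoints of $\anedge$ and insert exactly one introduce-edge node for $\anedge$ at that place, making sure each edge is introduced exactly once. Finally, I would label vertices with elements of $[\treewidthvalue+1]$: process the nice tree top-down, maintain a partial injection from the current bag into $[\treewidthvalue+1]$, assign a fresh label from $[\treewidthvalue+1]\setminus \abag$ at each introduce-vertex (possible since $|X_t|\le \treewidthvalue+1$), release the label at each forget-vertex, and propagate the labeling consistently through join nodes (which requires the nice tree decomposition to have equal bags at a join and its two children, a standard property). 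Reading this labeled tree bottom-up yields a term $\abstractdecomposition \in \allabstractdecompositionstreewidth{\treewidthvalue}$, and a straightforward induction shows $\graphfunction(\abstractdecomposition) \simeq \agraph$ (the isomorphism holds rather than equality because in the inductive definition of $\graphfunction$ vertices and edges are renumbered as they are introduced).

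For the backward direction, given $\abstractdecomposition \in \allabstractdecompositionstreewidth{\treewidthvalue}$, I would construct a tree decomposition of $\graphfunction(\abstractdecomposition)$ directly from the syntactic tree underlying $\abstractdecomposition$. For each position $\aposition$ of $\abstractdecomposition$, let $\sigmaabstractdecomposition_{\aposition}$ be the subterm rooted at $\aposition$ and define the bag at $\aposition$ to be $X_{\aposition} = \image(\topmapname[\sigmaabstractdecomposition_{\aposition}])$, the set of currently active vertices. Since $\abstractdecomposition$ is accepted by the automaton $\treeautomaton_{\treewidthvalue}$, the state at position $\aposition$ is a subset of $[\treewidthvalue+1]$, hence $|X_{\aposition}| \le \treewidthvalue+1$, giving width at most $\treewidthvalue$. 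I would verify the three tree-decomposition axioms by induction on the structure of $\abstractdecomposition$: every vertex of $\graphfunction(\abstractdecomposition)$ appears at the unique introduce-vertex node that creates it; every edge is introduced by some $\introedgetype{\vertexone}{\vertextwo}$ node, where by definition both endpoints are active and thus present in the bag; and for each vertex $\avertex$, the set of positions whose bag contains $\avertex$ is the contiguous subtree lying between the introduce-vertex node that creates $\avertex$ and the forget-vertex node (if any) that removes it, passing unchanged through introduce-edge and join nodes while $\avertex$ remains active.

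I expect the main technical obstacles to be twofold. First, in the forward direction, inserting introduce-edge nodes while preserving the nice-tree structure and the labeling bookkeeping requires some care, since one must ensure every edge is introduced \emph{exactly once} — not zero times and not twice — and that the insertions do not destroy the bag-size bound. Second, in the backward direction, the connectivity axiom for each vertex $\avertex$ needs a careful case analysis across join nodes: if $\avertex$ is active at a join, it must be active in both children (because the automaton transition for $\jointype$ requires identical active sets), so the subtree of positions containing $\avertex$ remains connected across joins; elsewhere, introduce-edge nodes leave the active set unchanged, and only the distinguished introduce-vertex and forget-vertex nodes delimit the top and bottom of this subtree. Once these two points are handled, the lemma follows, and the isomorphism $\graphfunction(\abstractdecomposition)\simeq \agraph$ in the forward direction together with the identity $\graphfunction(\abstractdecomposition) = \graphfunction(\abstractdecomposition)$ in the backward direction give the desired equivalence.
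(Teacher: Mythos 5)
Your proposal is correct and follows essentially the same route as the paper: the paper's forward direction likewise converts an arbitrary tree decomposition into a nice edge-introducing one and constructs a bag-injective proper $(k+1)$-coloring (built top-down exactly as you describe) to turn nodes into labeled instructions, while its backward direction constructs, by induction on the height of $\abstractdecomposition$, precisely the tree decomposition whose bags are the images of the active-label maps $\topmapname[\sigmaabstractdecomposition_{\aposition}]$. The one detail you gloss over in the backward direction --- that bags at positions below the right child of a $\jointype$ node must be renumbered so that they name vertices of the final graph consistently, since the $\oplus$ operation renumbers vertices --- is handled explicitly in the paper's inductive claim but is routine bookkeeping.
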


\section{Treelike Dynamic Programming Cores}
\label{section:DPCores}

In this section, we introduce the notion of a {\em treelike dynamic-programming core}
(treelike DP-core), a formalism intended to capture the behavior of dynamic programming
algorithms operating on treelike decompositions. Our formalism generalizes and refines
the notion of dynamic programming core introduced in \cite{baste2022diversity}.
There are two crucial differences. First, our framework can be used to define DP-cores
for classes of dense graphs, such as graphs of constant cliquewidth, whereas the DP-cores
devised in \cite{baste2022diversity} are specialized to work on tree decompositions. 
Second, and most importantly, in our framework, graphs of width $\treewidthvalue$ can be 
represented as terms over ranked alphabets whose size depends only on $\treewidthvalue$.
This property makes our framework modular and particularly suitable for applications
in the realm of automated theorem proving. 

\begin{definition}[Treelike DP-Cores]
\label{definition:DPCore}
A treelike dynamic programming core is a sequence of $6$-tuples
$\dpcore =
\{(\alphabetclass_k,\allwitnesses_{\treewidthvalue},\finalwitnessgenericcore_{\treewidthvalue},\transitionsdpcore_{\treewidthvalue},\cleaningfunctioncore_{\treewidthvalue}, \invariantCore_{\treewidthvalue})\}_{\treewidthvalue\in \N}$ where for each $\treewidthvalue\in \N$,
\begin{enumerate}
\item $\alphabetclass_k$ is a ranked alphabet;
\item $\allwitnesses_k$ is a decidable subset of $\{0,1\}^*$;
\item $\finalwitnessgenericcore_{\treewidthvalue}: \allwitnesses_{\treewidthvalue}\rightarrow \{\falsevalue,\truevalue\}$ is a function; 
\item $\transitionsdpcore_{\treewidthvalue}$ is a set containing
\begin{itemize}
\item 
a {\em finite} set 
$\hat{\asymbol}\subseteq \finitepowerset{\allwitnesses_{\treewidthvalue}}$ 
for each symbol $\asymbol$ of arity $0$, 
\item 
a function $\hat{\asymbol}:\allwitnesses_{\treewidthvalue}^{\times \arity(\asymbol)}\rightarrow \finitepowerset{\allwitnesses_{\treewidthvalue}}$ for each symbol $\asymbol$ of arity $\arity(\asymbol)\geq 1$; 
\end{itemize}
\item $\cleaningfunctioncore_{\treewidthvalue}:\finitepowerset{\allwitnesses_{\treewidthvalue}} \rightarrow \finitepowerset{\allwitnesses_{\treewidthvalue}}$ is a function;
\item $\invariantCore_{\treewidthvalue}:\finitepowerset{\allwitnesses_{\treewidthvalue}} \rightarrow \{0,1\}^*$ is a function. 
\end{enumerate}
\end{definition}

We let $\dpcore[\treewidthvalue]= (\alphabetclass_k,\allwitnesses_{\treewidthvalue},\finalwitnessgenericcore_{\treewidthvalue},
\transitionsdpcore_{\treewidthvalue},\cleaningfunctioncore_{\treewidthvalue}, \invariantCore_{\treewidthvalue})$ denote the 
$\treewidthvalue$-th tuple of $\dpcore$. We may write $\dpcore[\treewidthvalue].\alphabetclass$ to denote the set
$\alphabetclass_{\treewidthvalue}$, $\dpcore[\treewidthvalue].\allwitnesses$ to denote the set $\allwitnesses_{\treewidthvalue}$,
and so on. 

Intuitively, for each $\treewidthvalue$, $\dpcore[\treewidthvalue]$ is a description of a dynamic programming algorithm that
operates on terms from $\Terms{\alphabet_k}$. This algorithm processes such a term $\abstractdecomposition$ from the leaves towards the root, and assigns a set of local witnesses to each node of $\abstractdecomposition$.
The algorithm starts by assigning the set $\dpcore[\treewidthvalue].\hat{\asymbol}$ to each leaf node labeled with symbol $\asymbol$. 
Subsequently, the set of local witnesses to be assigned to each internal node $\aposition$ is computed by taking into 
consideration the label of the node, and the set (sets) of local witnesses assigned to the child (children) of $\aposition$.
The algorithm accepts $\abstractdecomposition$ if at the end of the process, the set of local witnesses associated with the
root node $\rootnode{\abstractdecomposition}$ has some final local witness, i.e., some local witness
$\awitness\in \allwitnesses$ such that
$\dpcore[\treewidthvalue].\finalwitnessgeneric(\awitness) = \truevalue$.

The function $\dpcore[\treewidthvalue].\cleaningfunctioncore$ is used to 
remove redundant local witnesses during the processing of a term. 
The function $\dpcore[\treewidthvalue].\invariantCore$ is useful in the context of optimization problems. For instance, 
given a set $\witnessset$ of local witnesses encoding weighted partial solutions to a given problem, 
$\dpcore[\treewidthvalue].\invariantCore(\witnessset)$ may return (a binary encoding of) the minimum/maximum weight of a
partial solution in the set. 

The process described above is formalized in our framework using the notion
of {\em $\treewidthvalue$-th dynamization} of a dynamic core 
$\dpcore$, which is a function $\dynamizationfunction{\dpcore}{\treewidthvalue}$
that assigns a set $\dynamizationfunction{\dpcore}{\treewidthvalue}(\abstractdecomposition)$ of local witnesses to each term 
$\abstractdecomposition\in \Terms{\treewidthvalue}$. 
Given a symbol $\asymbol$ of arity $\arityvalue$ in the set $\dpcore[\treewidthvalue].\alphabetclass$, and subsets 
$\witnessset_1,\dots,\witnessset_{r} \subseteq \dpcore[\treewidthvalue].\allwitnesses$,
we let
$\dpcore[\treewidthvalue].\hat{\asymbol}(\witnessset_1,\dots,\witnessset_{r})$
denote the following set: 
\begin{equation}
\label{equation:NextStepDynamization}
\dpcore[\treewidthvalue].\cleaningfunctioncore\left(\bigcup_{i\in [r],\awitness_i\in \witnessset_i}
\dpcore[\treewidthvalue].\hat{\asymbol}(\awitness_1,\dots,\awitness_r)\right).
\end{equation}
Using this notation, for each $\treewidthvalue\in \N$,
the function
$\dynamizationfunction{\dpcore}{\treewidthvalue}$ is defined by induction on the structure of 
$\abstractdecomposition$ as follows. 

\begin{definition}[Dynamization]
\label{definition:Dynamization}
Let $\dpcore$ be a treelike DP-core. For each $\treewidthvalue\in \N$, the $\treewidthvalue$-th 
dynamization of $\dpcore$ is the function
$\dynamizationfunction{\dpcore}{\treewidthvalue}:\Terms{\dpcore[\treewidthvalue].\alphabetclass} \rightarrow 
\finitepowerset{\dpcore[\treewidthvalue].\allwitnesses}$ inductively defined as follows. 
\begin{enumerate}
\setlength\itemsep{0em}
\item If $\aterm = \asymbol$ for some symbol $\asymbol\in \dpcore[\treewidthvalue].\alphabetclass$ of arity $0$, then 
	$\dynamizationfunction{\dpcore}{\treewidthvalue}(\abstractdecomposition)=\dpcore[\treewidthvalue].\hat{\asymbol}$. 
\item If $\abstractdecomposition = \asymbol(\abstractdecomposition_1,\dots,\abstractdecomposition_{r})$ for some
$\asymbol\in \dpcore[\treewidthvalue].\alphabetclass$ of arity $\arityvalue$,
and some terms $\abstractdecomposition_1,\dots,\abstractdecomposition_{\arityvalue}$ in $\Terms{\dpcore[\treewidthvalue].\alphabetclass}$, then 
$\dynamizationfunction{\dpcore}{\treewidthvalue}(\abstractdecomposition) = \dpcore[\treewidthvalue].\hat{\asymbol}(\dynamizationfunction{\dpcore}{\treewidthvalue}(\abstractdecomposition_1),\dots,\dynamizationfunction{\dpcore}{\treewidthvalue}(\abstractdecomposition_{\arityvalue})).$
\end{enumerate}
\end{definition}

For each $\treewidthvalue\in \N$, we say that a term $\abstractdecomposition\in \Terms{\dpcore[\treewidthvalue].\alphabetclass}$
is {\em accepted} by $\dpcore[\treewidthvalue]$ if the set $\dynamizationfunction{\dpcore}{\treewidthvalue}(\aterm)$ contains a final local
witness, i.e., a local witness $\awitness$ with $\dpcore[\treewidthvalue].\finalwitnessgenericcore(\awitness) = 1$. We let
$\accepteddecompositions{\dpcore[\treewidthvalue]}$ denote the set of all terms accepted by $\dpcore[\treewidthvalue]$. We let 
$\accepteddecompositions{\dpcore} = \bigcup_{\treewidthvalue\in \N} \accepteddecompositions{\dpcore[\treewidthvalue]}$.
The combination of the notion of a DP-core with the notion of a treelike 
decomposition class can be used to define graph properties. 

\begin{definition}[Graph Property of a DP-Core]
\label{definition:GraphClassFromCore}
Let $\decompositionclass$ be a treelike decomposition class, and $\dpcore$ be a treelike DP-core. For each $\treewidthvalue\in \N$, 
the graph property of $\dpcore[\treewidthvalue]$ is the set  
$$\dpcoregraphpropertyclass{\dpcore[\treewidthvalue]}{\decompositionclass} = 
\isomorphismclosure{\{\graphfunction(\abstractdecomposition)\;:\;
\abstractdecomposition\in \languageclass_k \cap 
\accepteddecompositions{\dpcore[\treewidthvalue]}\}}.$$
The graph property defined by 
$\dpcore$ is the set
$$\dpcoregraphpropertyclass{\dpcore}{\decompositionclass} =
\bigcup_{\treewidthvalue}
\dpcoregraphpropertyclass{\dpcore[\treewidthvalue]}{\decompositionclass}.$$
\end{definition}

We note that for each $\treewidthvalue\in \N$, $\dpcoregraphpropertyclass{\dpcore[\treewidthvalue]}{\decompositionclass}\subseteq \dpcoregraphproperty{\decompositionclass_{\treewidthvalue}}$, 
and hence, $\dpcoregraphpropertyclass{\dpcore}{\decompositionclass} \subseteq \dpcoregraphproperty{\decompositionclass}$. 

\subsection{Coherency}

In order to be useful in the context of model-checking and automated theorem proving, DP-cores need to
behave coherently with respect to distinct treelike decompositions of the same graph. This intuition is 
formalized by the following definition. 

\begin{definition}[Coherency]
\label{definition:Coherency}
Let $\decompositionclass = \{(\alphabetclass_k,\languageclass_k,\graphfunction_k)\}_{\treewidthvalue\in \N}$ be a treelike
decomposition class, and $\dpcore$ be a treelike DP-core. We say that $\dpcore$ is $\decompositionclass$-coherent
if for each $\treewidthvalue\in \N$, $\alphabetclass_k = \dpcore[\treewidthvalue].\alphabet$, and for each
$k,k'\in \N$, and each $\abstractdecomposition\in \languageclass_{\treewidthvalue}$ and $\abstractdecomposition'\in \languageclass_{\treewidthvalue'}$ 
with $\graphfunction(\abstractdecomposition)\simeq\graphfunction(\abstractdecomposition')$,
\begin{enumerate}
\item\label{coherencyOne} $\abstractdecomposition\in \accepteddecompositions{\dpcore[\treewidthvalue]}$ if and only if
$\abstractdecomposition'\in \accepteddecompositions{\dpcore[\treewidthvalue']}$, and
\item\label{coherencyTwo} $\dpcore[\treewidthvalue].\invariantCore(\dynamizationfunction{\dpcore}{\treewidthvalue}(\abstractdecomposition)) =
	\dpcore[\treewidthvalue'].\invariantCore(\dynamizationfunction{\dpcore}{\treewidthvalue'}(\abstractdecomposition'))$.
\end{enumerate}
\end{definition}

Let $\dpcore$ be a $\decompositionclass$-coherent treelike DP-core. Condition \ref{coherencyOne} of Definition \ref{definition:Coherency} guarantees
that if a graph $\agraph$ belongs to $\dpcoregraphproperty{\dpcore,\decompositionclass}$, then for each $\treewidthvalue\in \N$ and each 
$\decompositionclass$-decomposition $\abstractdecomposition$ of width at most $\treewidthvalue$ such that $\decompositiongraph{\abstractdecomposition}\simeq G$, 
we have that $\abstractdecomposition\in \accepteddecompositionsboundedwidth{\dpcore[\treewidthvalue]}$. On the other hand, if $G$
does not belong to $\dpcoregraphproperty{\dpcore,\decompositionclass}$, then no $\decompositionclass$-decomposition $\abstractdecomposition$ with 
$\decompositiongraph{\abstractdecomposition}\simeq G$ belongs to $\accepteddecompositionsboundedwidth{\dpcore}$. This discussion is formalized
in the following proposition. 

\begin{proposition}
\label{proposition:CoherencySequenceTuples}
Let $\decompositionclass = \{(\alphabetclass_k,\languageclass_k,\graphfunction_k)\}_{\treewidthvalue\in \N}$ be a treelike
decomposition class, and $\dpcore$ be a $\decompositionclass$-coherent treelike DP-core. Then for each $\treewidthvalue\in \N$, 
and each $\abstractdecomposition\in \languageclass_{\treewidthvalue}$, we have that
$\decompositiongraph{\abstractdecomposition}\in \dpcoregraphproperty{\dpcore,\decompositionclass}$
if and only if $\abstractdecomposition\in \accepteddecompositionsboundedwidth{\dpcore[\treewidthvalue]}$. 
\end{proposition}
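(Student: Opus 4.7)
The proof will be an almost immediate unfolding of the relevant definitions, with the crux being an application of condition \ref{coherencyOne} of Definition \ref{definition:Coherency}. The plan is to prove the two implications separately.

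For the forward direction, that is, if $\abstractdecomposition \in \accepteddecompositionsboundedwidth{\dpcore[\treewidthvalue]}$ then $\decompositiongraph{\abstractdecomposition}\in \dpcoregraphproperty{\dpcore,\decompositionclass}$, I will simply observe that by hypothesis $\abstractdecomposition \in \languageclass_{\treewidthvalue} \cap \accepteddecompositionsboundedwidth{\dpcore[\treewidthvalue]}$, so by Definition \ref{definition:GraphClassFromCore} the graph $\graphfunction(\abstractdecomposition)$ (and hence its entire isomorphism class) is contained in $\dpcoregraphpropertyclass{\dpcore[\treewidthvalue]}{\decompositionclass}$, which is in turn a subset of $\dpcoregraphpropertyclass{\dpcore}{\decompositionclass}$. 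This direction does not require coherency; it is just bookkeeping of the definitions.

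For the backward direction, assume $\decompositiongraph{\abstractdecomposition}\in \dpcoregraphproperty{\dpcore,\decompositionclass}$. By Definition \ref{definition:GraphClassFromCore} together with the definition of isomorphism closure, there must exist some $\treewidthvalue' \in \N$ and some $\abstractdecomposition' \in \languageclass_{\treewidthvalue'} \cap \accepteddecompositionsboundedwidth{\dpcore[\treewidthvalue']}$ with $\graphfunction(\abstractdecomposition') \simeq \decompositiongraph{\abstractdecomposition}$. Since $\simeq$ is transitive (and reflexive, applied to $\graphfunction(\abstractdecomposition)$), we obtain $\graphfunction(\abstractdecomposition) \simeq \graphfunction(\abstractdecomposition')$. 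Now the $\decompositionclass$-coherency of $\dpcore$, specifically condition \ref{coherencyOne} of Definition \ref{definition:Coherency} applied to the pair $(\abstractdecomposition, \abstractdecomposition')$, yields that $\abstractdecomposition \in \accepteddecompositionsboundedwidth{\dpcore[\treewidthvalue]}$ if and only if $\abstractdecomposition' \in \accepteddecompositionsboundedwidth{\dpcore[\treewidthvalue']}$. Since the right-hand side holds by choice of $\abstractdecomposition'$, we conclude $\abstractdecomposition \in \accepteddecompositionsboundedwidth{\dpcore[\treewidthvalue]}$.

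There is no real obstacle in this proof: it is essentially a sanity check that the coherency definition has been phrased correctly to guarantee that membership in $\dpcoregraphproperty{\dpcore,\decompositionclass}$ can be witnessed locally by \emph{any} $\decompositionclass$-decomposition of the graph, rather than just some privileged one. The only delicate point is remembering that $\dpcoregraphproperty{\dpcore,\decompositionclass}$ is closed under isomorphism, so one must transit through the isomorphic representative $\abstractdecomposition'$ before invoking coherency; condition \ref{coherencyTwo} plays no role in this particular proposition (it will matter only for subsequent results about invariants).
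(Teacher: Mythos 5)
Your proof is correct and follows essentially the same route as the paper's: the forward direction is a direct unfolding of Definition \ref{definition:GraphClassFromCore} (and, as you note, needs no coherency), while the backward direction extracts an accepted decomposition $\abstractdecomposition'$ of an isomorphic graph from the isomorphism closure and invokes condition \ref{coherencyOne} of Definition \ref{definition:Coherency}. Your additional remark that condition \ref{coherencyTwo} plays no role here is also accurate.
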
 
\begin{proof}
Let $\treewidthvalue\in \N$ and $\abstractdecomposition\in \languageclass_{\treewidthvalue}$. 
Suppose that $\abstractdecomposition\in \accepteddecompositionsboundedwidth{\dpcore[\treewidthvalue]}$. Then, 
by Definition \ref{definition:GraphClassFromCore}, $\decompositiongraph{\abstractdecomposition}\in \dpcoregraphproperty{\dpcore[\treewidthvalue],\decompositionclass}$, 
and therefore,
we have that 
$\decompositiongraph{\abstractdecomposition}\in \dpcoregraphproperty{\dpcore,\decompositionclass}$. Note that this direction holds even 
if $\dpcore$ is not coherent. 

For the converse, we do need coherency. Suppose $\decompositiongraph{\abstractdecomposition}\in \dpcoregraphproperty{\dpcore,\decompositionclass}$. 
Then, there is some $\treewidthvalue'\in \N$ and some $\abstractdecomposition'\in \accepteddecompositionsboundedwidth{\dpcore[\treewidthvalue']}$ 
such that $\decompositiongraph{\abstractdecomposition'}\simeq \decompositiongraph{\abstractdecomposition}$. Since $\dpcore$ is $\decompositionclass$-coherent,
we can infer from Definition \ref{definition:Coherency}.\ref{coherencyOne} that 
$\abstractdecomposition\in \accepteddecompositionsboundedwidth{\dpcore[\treewidthvalue]}$. 
\end{proof}

Coherent DP-cores may be used to define not only graph properties but also graph invariants, as specified in Definition \ref{definition:DPInvariant}.

\begin{definition}[Invariant of a DP-Core]
\label{definition:DPInvariant}
Let $\decompositionclass$ be a decomposition class and $\dpcore$ be a 
$\decompositionclass$-coherent treelike DP-core. The $\dpcoregraphproperty{\dpcore,\decompositionclass}$-invariant defined by $\dpcore$ is
the function 
$\invariant[\dpcore,\decompositionclass]:\dpcoregraphpropertyclass{\dpcore}{\decompositionclass}\rightarrow \{0,1\}^*$
that assigns to each graph $\agraph\in \dpcoregraphpropertyclass{\dpcore}{\decompositionclass}$, the string 
	$\dpcore[\treewidthvalue].\invariantCore(\dynamizationfunction{\dpcore}{\treewidthvalue}(\abstractdecomposition))$, 
where
$\abstractdecomposition$ is an arbitrary $\decompositionclass$-decomposition
such that $\decompositiongraph{\abstractdecomposition}\simeq \agraph$, and 
$\treewidthvalue = \widthdecomposition{\decompositionclass}{\abstractdecomposition}$ is the $\decompositionclass$-width of $\abstractdecomposition$.
\end{definition}

We note that Condition \ref{coherencyTwo} of Definition \ref{definition:Coherency}
requires that 
for any two $\decompositionclass$-decompositions $\abstractdecomposition$ and $\abstractdecomposition'$, of $\decompositionclass$-width 
$\treewidthvalue = \widthdecomposition{\decompositionclass}{\abstractdecomposition}$ and $\treewidthvalue'=\widthdecomposition{\decompositionclass}{\abstractdecomposition'}$, respectively, if $\decompositiongraph{\abstractdecomposition}\simeq \decompositiongraph{\abstractdecomposition'}$, then

\[\dpcore[\treewidthvalue].\invariantCore(\dynamizationfunction{\dpcore}{\treewidthvalue}(\abstractdecomposition)) = 
\dpcore[\treewidthvalue'].\invariantCore(\dynamizationfunction{\dpcore}{\treewidthvalue'}(\abstractdecomposition')).\]

Therefore, for each graph $\agraph\in \dpcoregraphproperty{\dpcore,\decompositionclass}$, the value 
$\invariant[\dpcore,\decompositionclass](\agraph)$ is well-defined, and invariant under graph isomorphism. This value is also independent of the $\decompositionclass$-decomposition of $\agraph$ chosen to perform the computation of the invariant.

In Appendix \ref{subsection:CombinatorsCombination}, we illustrate the notion of a coherent DP-core using the vertex cover problem as an example. More specifically, for each $r\in \N$, we define an $\instructivetreedecompositionclass$-coherent DP-core $\vertexcoverCore_{\vertexcoverParameter}$ corresponding to the graph property $\vertexcoverProperty_{\vertexcoverParameter}$, the set of graphs that have a vertex-cover of size at most $r$. Here, $\instructivetreedecompositionclass$ is the instructive tree decomposition class defined in Section \ref{DPRealization}, which realizes the width measure treewidth. For each $\treewidthvalue\in \N$, 
$\vertexcoverCore_{\vertexcoverParameter}[\treewidthvalue]$ accepts a 
$\treewidthvalue$-instructive tree 
decomposition $\abstractdecomposition$ 
if and only if the graph $\decompositiongraph{\abstractdecomposition}$ has a vertex cover of size at most $\vertexcoverParameter$. By adapting $\vertexcoverCore_{\vertexcoverParameter}$, 
we obtain another DP-core $\dpcore$ which can be used to compute the vertex cover number of a graph. More specifically, for each $\treewidthvalue\in \N$, and each $\treewidthvalue$-instructive tree decomposition $\abstractdecomposition$, the string $\dpcore[\treewidthvalue].\invariantCore(\dynamizationfunction{\dpcore}{\treewidthvalue}(\abstractdecomposition))$ 
is (a binary encoding of) the minimum size of a vertex cover in the graph
$\decompositiongraph{\abstractdecomposition}$.

\subsection{Complexity Measures}
\label{subsection:ComplexityMeasures}

In order to analyze the behavior of treelike DP-cores 
from a quantitative point of view, we define the notions of 
{\em bitlength} and {\em multiplicity} of a DP-core $\dpcore$. 
We say that a set $\witnessset$ of local witnesses is {\em $(\dpcore,\treewidthvalue,\sizedecomposition)$-useful}
if there is some $\abstractdecomposition\in \Terms{\dpcore[\treewidthvalue].\alphabet}$ of size $|\abstractdecomposition|\leq \sizedecomposition$ such that
$\dynamizationfunction{\dpcore}{\treewidthvalue}(\abstractdecomposition) =
\witnessset$. 
The {\em bitlength} of $\dpcore$ is the function
$\bitlengthusefulwitnesssimple{\dpcore}$ that assigns to each pair
$(\treewidthvalue,\sizedecomposition)$ the maximum number of bits 
$\bitlengthusefulwitnesssimple{\dpcore}(\treewidthvalue,\sizedecomposition)$
in a $(\dpcore,\treewidthvalue,\sizedecomposition)$-useful witness, while 
the {\em multiplicity} $\maxsizeusefulsetsimple{\dpcore}$ of $\dpcore$ is 
the function that assigns to each pair $(\treewidthvalue,\sizedecomposition)$
the maximum number of elements 
$\maxsizeusefulsetsimple{\dpcore}(\treewidthvalue,\sizedecomposition)$ 
in a $(\dpcore,\treewidthvalue,\sizedecomposition)$-useful set. 
Finally, we let $\numberusefulsetssimple{\dpcore}(\treewidthvalue,\sizedecomposition)$ denote the number of 
$(\dpcore,\treewidthvalue,\sizedecomposition)$-useful sets. We call $\numberusefulsetssimple{\dpcore}$ the {\em deterministic state complexity} 
of $\dpcore$. 

An important class of DP-cores is the class of cores
where the maximum number of bits in a useful local witness corresponding to a
term $\abstractdecomposition$ is independent of the size of 
$\abstractdecomposition$. In other words, the number of bits may 
depend on $\treewidthvalue$ but not on $|\abstractdecomposition|$.  

\begin{definition}[Finite DP-cores]
We say that a treelike DP-core $\dpcore$ is
{\em finite} if there is a function $f:\N\rightarrow \N$ such that 
for each $n\in \N$, $\bitlengthusefulwitnesssimple{\dpcore}(\treewidthvalue,n)\leq f(\treewidthvalue)$. 
\end{definition} 

If $\dpcore$ is a finite DP-core then we may write simply
$\bitlengthusefulwitnesssimple{\dpcore}(\treewidthvalue)$
and $\maxsizeusefulsetsimple{\dpcore}(\treewidthvalue)$
to denote the functions 
$\bitlengthusefulwitnesssimple{\dpcore}(\treewidthvalue,\sizedecomposition)$, 
and $\maxsizeusefulsetsimple{\dpcore}(\treewidthvalue,\sizedecomposition)$
respectively. 
\\ 

We say that a DP-core is {\em internally polynomial}, if there is an 
algorithm $\analgorithm$ that when given a number $\treewidthvalue\in \N$ as input,
simulates the functions in $\dpcore[\treewidthvalue]$ in time
polynomial in $\treewidthvalue$ and in the size of the input to these functions
(see Appendix \ref{section:formalDefinitionInternallyPolynomial} for a formal definition).
We note that typical dynamic programming algorithms operating on tree-like 
decompositions give rise to internally polynomial DP-cores.
Note that the fact that $\dpcore$ is internally polynomial does not imply that one can determine whether a given
term $\abstractdecomposition$ is accepted by $\dpcore$ in time polynomial in
$|\abstractdecomposition|$. The complexity of this test 
is governed by the bitlength and multiplicity of the DP-core in question
(see Theorem \ref{theorem:ModelChecking}).

\subsection{Model Checking and Invariant Computation}
\label{subsection:ModelChecking}

Let $\decompositionclass = \{(\alphabet_{\treewidthvalue},\languageclass_{\treewidthvalue},\graphfunction_{\treewidthvalue})\}_{\treewidthvalue\in \N}$ 
be a treelike decomposition class, and $\dpcore$ be a $\decompositionclass$-coherent treelike DP-core. Given a $\decompositionclass$-decomposition of width 
at most $\treewidthvalue$, we can use the notion of dynamization (Definition \ref{definition:Dynamization}), to check whether 
the graph $\decompositiongraph{\abstractdecomposition}$ encoded by $\abstractdecomposition$ belongs to the graph property 
$\dpcoregraphproperty{\dpcore,\decompositionclass}$ represented by $\dpcore$. 
The next theorem states that the complexity of this model-checking task is essentially governed by the bitlength and by the multiplicity of $\dpcore$.
We note that in typical applications the arity $\arityvalue$ of a decomposition class is a constant (most often 1 or 2), and the width $\treewidthvalue$ 
is smaller than $\bitlengthusefulwitnesssimple{\dpcore}(\treewidthvalue,\sizedecomposition)$ for each $n\in N$. 
Nevertheless, for completeness, we explicitly include the dependence on $\treewidthvalue^{O(1)}$ and $\arityvalue^{O(1)}$ in the calculation of the running time. 

\begin{theorem}[Model Checking]
\label{theorem:ModelChecking}
Let $\decompositionclass = \{(\alphabet_{\treewidthvalue},\languageclass_{\treewidthvalue},\graphfunction_{\treewidthvalue})\}_{\treewidthvalue\in \N}$
be a treelike decomposition class of arity $r$, $\dpcore$ be an internally polynomial $\decompositionclass$-coherent treelike
DP-Core, and let $\abstractdecomposition$ be a $\decompositionclass$-decomposition of $\decompositionclass$-width at most $\treewidthvalue$ and size $|\abstractdecomposition|=n$. 
\begin{enumerate}
\item One can determine whether $\decompositiongraph{\abstractdecomposition}
\in \dpcoregraphpropertyclass{\dpcore}{\decompositionclass}$ in time 
$$T(\treewidthvalue,\sizedecomposition) = \sizedecomposition \cdot \treewidthvalue^{O(1)} \cdot  \arityvalue^{O(1)}\cdot \bitlengthusefulwitnesssimple{\dpcore}(\treewidthvalue,\sizedecomposition)^{O(1)} \cdot \maxsizeusefulsetsimple{\dpcore}(\treewidthvalue,\sizedecomposition)^{\arityvalue+ O(1)}.$$
\item One can compute the invariant $\invariant[\dpcore,\decompositionclass](\decompositiongraph{\abstractdecomposition})$ in time
$$T(\treewidthvalue,\sizedecomposition)+ \treewidthvalue^{O(1)}\cdot \bitlengthusefulwitnesssimple{\dpcore}(\treewidthvalue,\sizedecomposition)^{O(1)}\cdot \maxsizeusefulsetsimple{\dpcore}(\treewidthvalue,\sizedecomposition)^{O(1)}.$$
\end{enumerate}
\end{theorem}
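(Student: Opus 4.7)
The plan is to implement the dynamization function from Definition \ref{definition:Dynamization} directly as a bottom-up dynamic program on the decomposition tree. Concretely, for each node $p$ of $\abstractdecomposition$ (labeled with some symbol $a$ of arity $r_p \leq \arityvalue$), I compute and store the set $\witnessset_p = \dynamizationfunction{\dpcore}{\treewidthvalue}(\abstractdecomposition_p)$, where $\abstractdecomposition_p$ is the subterm rooted at $p$. At a leaf node, this is just $\dpcore[\treewidthvalue].\hat{a}$; at an internal node with children $p_1,\dots,p_{r_p}$ whose sets $\witnessset_{p_1},\dots,\witnessset_{p_{r_p}}$ are already available, I compute $\witnessset_p$ using Equation (\ref{equation:NextStepDynamization}): enumerate all tuples $(\awitness_1,\dots,\awitness_{r_p}) \in \witnessset_{p_1} \times \dots \times \witnessset_{p_{r_p}}$, invoke the transition $\dpcore[\treewidthvalue].\hat{a}$ on each tuple, take the union, and apply $\dpcore[\treewidthvalue].\cleaningfunctioncore$. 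At the root, the graph $\decompositiongraph{\abstractdecomposition}$ belongs to $\dpcoregraphpropertyclass{\dpcore}{\decompositionclass}$ iff $\witnessset_{\rootnode{\abstractdecomposition}}$ contains a witness $\awitness$ with $\dpcore[\treewidthvalue].\finalwitnessgenericcore(\awitness)=1$; checking this reduces to scanning the root set and evaluating $\finalwitnessgenericcore$ on each element.

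\noindent\textbf{Correctness.} Correctness of the dynamization computation is immediate by induction on the structure of $\abstractdecomposition$ from Definition \ref{definition:Dynamization}. The acceptance test correctly decides whether $\abstractdecomposition \in \accepteddecompositionsboundedwidth{\dpcore[\treewidthvalue]}$ by the definition of acceptance stated right after Definition \ref{definition:Dynamization}. Then coherency of $\dpcore$ with respect to $\decompositionclass$, via Proposition \ref{proposition:CoherencySequenceTuples}, upgrades this to a correct membership test for the graph property $\dpcoregraphpropertyclass{\dpcore}{\decompositionclass}$. For the invariant, I evaluate $\dpcore[\treewidthvalue].\invariantCore(\witnessset_{\rootnode{\abstractdecomposition}})$, and Definition \ref{definition:DPInvariant} combined with Condition \ref{coherencyTwo} of coherency guarantees this coincides with $\invariant[\dpcore,\decompositionclass](\decompositiongraph{\abstractdecomposition})$.

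\noindent\textbf{Complexity.} The main technical step is bookkeeping of the per-node cost using Definition \ref{definition:PolynomialTimeComputable}. At an internal node of arity $r_p \leq \arityvalue$, every intermediate set encountered during the computation of $\witnessset_p$ is, by construction, a $(\dpcore,\treewidthvalue,\sizedecomposition)$-useful set (the children's sets are such because $|\abstractdecomposition_{p_i}|\leq n$, and $\witnessset_p$ itself is useful because $|\abstractdecomposition_p|\leq n$). Hence each child set has at most $\maxsizeusefulsetsimple{\dpcore}(\treewidthvalue,\sizedecomposition)$ witnesses, each of bitlength at most $\bitlengthusefulwitnesssimple{\dpcore}(\treewidthvalue,\sizedecomposition)$. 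Enumerating tuples contributes a factor of $\maxsizeusefulsetsimple{\dpcore}(\treewidthvalue,\sizedecomposition)^{\arityvalue}$; by internal polynomiality, each call to $\hat{a}$ takes time polynomial in $\treewidthvalue+\arityvalue\cdot\bitlengthusefulwitnesssimple{\dpcore}(\treewidthvalue,\sizedecomposition)$; the resulting union has polynomial bitlength in $\maxsizeusefulsetsimple{\dpcore}(\treewidthvalue,\sizedecomposition)$, $\bitlengthusefulwitnesssimple{\dpcore}(\treewidthvalue,\sizedecomposition)$, $\arityvalue$ and $\treewidthvalue$, and the cleaning step is polynomial in its input size. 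Multiplying by the number of nodes $\sizedecomposition$ yields the claimed bound $T(\treewidthvalue,\sizedecomposition)$. For part (2), after reaching the root I invoke $\dpcore[\treewidthvalue].\invariantCore$ once on $\witnessset_{\rootnode{\abstractdecomposition}}$; since this set has at most $\maxsizeusefulsetsimple{\dpcore}(\treewidthvalue,\sizedecomposition)$ witnesses and internal polynomiality makes $\invariantCore$ run in time polynomial in $\treewidthvalue$ plus the input size, this adds exactly the additive term in the second bound.

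\noindent\textbf{Main obstacle.} There is no deep conceptual obstacle: correctness is essentially a definition unwinding combined with Proposition \ref{proposition:CoherencySequenceTuples}. The only place requiring care is the per-node complexity accounting, where one has to track, simultaneously, (i) that every set produced in the process is useful so that the multiplicity bound applies, and (ii) that every polynomial factor arising from internal polynomiality is expressed strictly in terms of $\treewidthvalue$, $\arityvalue$, $\bitlengthusefulwitnesssimple{\dpcore}$ and $\maxsizeusefulsetsimple{\dpcore}$ and not in terms of the raw number of tuples enumerated. Getting (ii) right so that the final exponent on $\maxsizeusefulsetsimple{\dpcore}(\treewidthvalue,\sizedecomposition)$ lands on $\arityvalue+O(1)$ rather than $O(\arityvalue)$ is the most delicate part of the bookkeeping.
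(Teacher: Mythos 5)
Your proposal is correct and follows essentially the same route as the paper's own proof: a bottom-up computation of the dynamization over the subterms, with correctness reduced to acceptance via Proposition \ref{proposition:CoherencySequenceTuples} and the per-node cost bounded by enumerating the $\maxsizeusefulsetsimple{\dpcore}(\treewidthvalue,\sizedecomposition)^{\arityvalue}$ tuples and invoking internal polynomiality for each transition, the cleaning step, and the final $\invariantCore$ call. No substantive differences.
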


The proof of Theorem \ref{theorem:ModelChecking} can be found in Appendix \ref{section:ProofTheoremModelChecking}.

\subsection{Inclusion Test}
\label{subsection:InclusionTest}

Let $\decompositionclass$ be a treelike decomposition class, and $\dpcore$ be a treelike DP-core.
As discussed in the introduction, the problem of determining whether $\dpcoregraphproperty{\decompositionclass}\subseteq \dpcoregraphproperty{\dpcore,\decompositionclass}$ 
can be regarded as a task in the realm of automated theorem proving. A width-based approach to testing whether this inclusion holds is to test for increasing values of 
$\treewidthvalue$, whether the inclusion $\dpcoregraphproperty{\decompositionclass_{\treewidthvalue}} \subseteq \dpcoregraphproperty{\dpcore,\decompositionclass}$ holds.
It turns out that if $\dpcore$ is $\decompositionclass$-coherent, then testing whether $\dpcoregraphproperty{\decompositionclass_{\treewidthvalue}} \subseteq \dpcoregraphproperty{\dpcore,\decompositionclass}$ 
reduces to testing whether all $\decompositionclass$-decompositions of width at most $\treewidthvalue$ are accepted by $\dpcore[\treewidthvalue]$, as stated in Lemma \ref{lemma:UniversalityInclusion} below. 
We note that this is not necessarily true if $\dpcore$ is not $\decompositionclass$-coherent. 

\begin{lemma}
\label{lemma:UniversalityInclusion}
Let $\decompositionclass = \{(\alphabet_{\treewidthvalue},\languageclass_{\treewidthvalue},\graphfunction_{\treewidthvalue})\}_{\treewidthvalue\in \N}$
be a treelike decomposition class and $\dpcore$ be a $\decompositionclass$-coherent treelike DP-core. Then, for each $\treewidthvalue\in \N$, 
$\dpcoregraphproperty{\decompositionclass_{\treewidthvalue}}\subseteq \dpcoregraphpropertyclass{\dpcore}{\decompositionclass}$ if and only if
$\languageclass_{\treewidthvalue} \subseteq \accepteddecompositions{\dpcore[\treewidthvalue]}$. 
\end{lemma}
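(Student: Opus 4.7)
The plan is to prove both inclusions directly from the definitions, appealing to Proposition \ref{proposition:CoherencySequenceTuples} for the nontrivial direction.

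For the forward direction, I would take an arbitrary $\abstractdecomposition \in \languageclass_{\treewidthvalue}$ and aim to show $\abstractdecomposition \in \accepteddecompositions{\dpcore[\treewidthvalue]}$. By Definition \ref{definition:TreelikeDecompositionClass}, the graph $\decompositiongraph{\abstractdecomposition}$ lies in $\dpcoregraphproperty{\decompositionclass_{\treewidthvalue}}$, and so by the hypothesis it lies in $\dpcoregraphpropertyclass{\dpcore}{\decompositionclass}$. Now I can invoke Proposition \ref{proposition:CoherencySequenceTuples}, which exactly says that for a $\decompositionclass$-coherent DP-core, membership of $\decompositiongraph{\abstractdecomposition}$ in $\dpcoregraphpropertyclass{\dpcore}{\decompositionclass}$ is equivalent to $\abstractdecomposition \in \accepteddecompositionsboundedwidth{\dpcore[\treewidthvalue]}$, delivering exactly what I need. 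This is the only place in the argument where coherency is used, and it is the main obstacle insofar as it is the step that justifies transferring a property stated in terms of one decomposition to any other decomposition of the same width.

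For the backward direction, I would unfold the definitions. Take $\agraph \in \dpcoregraphproperty{\decompositionclass_{\treewidthvalue}}$. By the definition of this set there exists some $\abstractdecomposition \in \languageclass_{\treewidthvalue}$ with $\graphfunction(\abstractdecomposition) \simeq \agraph$. The hypothesis $\languageclass_{\treewidthvalue} \subseteq \accepteddecompositions{\dpcore[\treewidthvalue]}$ gives $\abstractdecomposition \in \accepteddecompositions{\dpcore[\treewidthvalue]}$, so $\abstractdecomposition$ lies in $\languageclass_{\treewidthvalue} \cap \accepteddecompositions{\dpcore[\treewidthvalue]}$. By Definition \ref{definition:GraphClassFromCore}, $\graphfunction(\abstractdecomposition)$ belongs to $\dpcoregraphpropertyclass{\dpcore[\treewidthvalue]}{\decompositionclass}$, and since this set is closed under isomorphism and contained in $\dpcoregraphpropertyclass{\dpcore}{\decompositionclass}$, we conclude $\agraph \in \dpcoregraphpropertyclass{\dpcore}{\decompositionclass}$, as required. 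This direction does not need coherency, matching the remark in the statement that coherency is essential only for the nontrivial implication.

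Overall the proof is a short definitional unpacking whose entire content is already packaged in Proposition \ref{proposition:CoherencySequenceTuples}. The only conceptual point worth highlighting in the write-up is why coherency is \emph{needed} in the forward direction: without it, the decomposition $\abstractdecomposition$ witnessing that $\agraph \in \dpcoregraphpropertyclass{\dpcore}{\decompositionclass}$ could be a different $\decompositionclass$-decomposition of $\agraph$, possibly of width larger than $\treewidthvalue$ and possibly accepted by $\dpcore$ at that larger width only, in which case no conclusion about the specific $\abstractdecomposition \in \languageclass_{\treewidthvalue}$ we started with could be drawn.
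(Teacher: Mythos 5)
Your proof is correct and follows essentially the same route as the paper's: the direction from $\languageclass_{\treewidthvalue} \subseteq \accepteddecompositions{\dpcore[\treewidthvalue]}$ to the graph-property inclusion is pure definition-unpacking, and the other direction uses coherency to transfer acceptance from an arbitrary accepted decomposition of the same graph back to the given one. The only cosmetic difference is that you delegate that coherency step to Proposition \ref{proposition:CoherencySequenceTuples}, whereas the paper re-derives the same argument inline; citing the already-proven proposition is a perfectly valid and slightly more economical way to do it.
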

\begin{proof}
Suppose that $\languageclass_\treewidthvalue \subseteq \accepteddecompositionsboundedwidth{\dpcore[\treewidthvalue]}$.
Let $\agraph\in \dpcoregraphproperty{\decompositionclass_{\treewidthvalue}}$. Then, there is some $\treewidthvalue\in \N$, and 
some $\treewidthvalue$-instructive tree decomposition $\abstractdecomposition \in \languageclass_{\treewidthvalue}$, 
such that $\decompositiongraph{\abstractdecomposition}$ is isomorphic to $\agraph$. 
Since, by assumption, $\abstractdecomposition$ also belongs to $\accepteddecompositionsboundedwidth{\dpcore[\treewidthvalue]}$, 
we have that both $\decompositiongraph{\abstractdecomposition}$ and $\agraph$ belong to $\dpcoregraphproperty{\dpcore[\treewidthvalue],\decompositionclass}$. Therefore, 
$\agraph$ also belongs to $\dpcoregraphproperty{\dpcore,\decompositionclass}$.
Since $\agraph$ was chosen to be an arbitrary graph in $\dpcoregraphproperty{\decompositionclass_{\treewidthvalue}}$, we have that
$\dpcoregraphproperty{\decompositionclass_{\treewidthvalue}} \subseteq \dpcoregraphproperty{\dpcore,\decompositionclass}$. We note that for this direction, we did not need the assumption that $\dpcore$ is $\decompositionclass$-coherent. 

For the converse, we do need the assumption that $\dpcore$ is coherent. 
	Suppose that $\dpcoregraphproperty{\decompositionclass_{\treewidthvalue}} \subseteq
\dpcoregraphproperty{\dpcore,\decompositionclass}$. 
Let $\abstractdecomposition\in \languageclass_{\treewidthvalue}$. Then, by 
the definition of graph property associated with a treelike decomposition class, we have 
that $\decompositiongraph{\abstractdecomposition}$ belongs to $\dpcoregraphproperty{\decompositionclass_{\treewidthvalue}}$. 
Therefore, by our supposition, $\decompositiongraph{\abstractdecomposition}$ also belongs to 
$\dpcoregraphproperty{\dpcore,\decompositionclass}$. This means 
that for some $\treewidthvalue'$, there is some $\treewidthvalue'$-instructive tree decomposition $\abstractdecomposition'$ in 
$\accepteddecompositionsboundedwidth{\dpcore[\treewidthvalue']}$ such that 
$\decompositiongraph{\abstractdecomposition'}$ is isomorphic to
$\decompositiongraph{\abstractdecomposition}$. But since $\dpcore$ is coherent,
this implies that $\abstractdecomposition$ also belongs to
$\accepteddecompositionsboundedwidth{\dpcore[\treewidthvalue]}$. Since $\abstractdecomposition$ was chosen to 
be an arbitrary treelike decomposition in $\languageclass_{\treewidthvalue}$, we have that 
$\languageclass_{\treewidthvalue} \subseteq
\accepteddecompositionsboundedwidth{\dpcore[\treewidthvalue]}$.
\end{proof}

Lemma \ref{lemma:UniversalityInclusion} implies that if $\dpcore$ is coherent, then in order to show that 
$\dpcoregraphproperty{\decompositionclass_{\treewidthvalue}}\nsubseteq \dpcoregraphproperty{\dpcore,\decompositionclass}$
it is enough to show that there is some $\decompositionclass$-decomposition $\abstractdecomposition$ of width at most
$\treewidthvalue$ that belongs to $\languageclass_{\treewidthvalue}$ but not to $\accepteddecompositionsboundedwidth{\dpcore[\treewidthvalue]}{}$.
We will reduce this later task to the task of constructing a dynamic programming refutation (Definition \ref{definition:DPRefutation}). 

Let $\decompositionclass$ be a decomposition class with automation $\realizationclass$, and 
let $\dpcore$ be a $\decompositionclass$-coherent treelike DP-core. An $(\realizationclass,\dpcore,\treewidthvalue)$-pair
is a pair of the form $(\astate,\witnessset)$ where $\astate$ is a state of $\treeautomaton_{\treewidthvalue}$
and $\witnessset\subseteq \dpcore[\treewidthvalue].\allwitnesses$. We say that such pair 
$(\astate,\witnessset)$ is {\em $(\realizationclass,\dpcore,\treewidthvalue)$-inconsistent} if $\astate$ is a 
final state of $\realizationclass_{\treewidthvalue}$, but $\witnessset$ has no final local witness for $\dpcore$.

\begin{definition}[DP-Refutation]
\label{definition:DPRefutation}
Let $\decompositionclass = \{(\alphabet_{\treewidthvalue},\languageclass_{\treewidthvalue},\graphfunction_{\treewidthvalue})\}_{\treewidthvalue\in \N}$ 
be a decomposition class, $\realizationclass$ be an automation for $\decompositionclass$, $\dpcore$ be a $\decompositionclass$-coherent treelike DP-core, and 
$\treewidthvalue\in \N$. An $(\realizationclass,\dpcore,\treewidthvalue)$-refutation is a sequence of $(\realizationclass,\dpcore,\treewidthvalue)$-pairs
$$\dprefutation \equiv 
(\astate_1,\witnessset_1)(\astate_2,\witnessset_2)\dots (\astate_m,\witnessset_m)$$ satisfying the following conditions: 
\begin{enumerate}
\setlength\itemsep{0.5em}
\item \label{DPRefutation:Two} \label{condition: dp-refutation-one} $(\astate_m,\witnessset_m)$ is $(\realizationclass,\dpcore,\treewidthvalue)$-inconsistent. 
\item \label{DPRefutation:Three} \label{condition:dp-refutation-two} For each $i\in [m]$, 
\begin{enumerate}
	\item either $(\astate_i,\witnessset_i) = (\astate,\dpcore[\treewidthvalue].\hat{\asymbol})$ for some symbol $\asymbol$ of arity $0$ in $\alphabet_{\treewidthvalue}$, and some 
state $\astate$ such that $\asymbol\rightarrow \astate$ is a transition of $\treeautomaton_{\treewidthvalue}$, or 
\item $(\astate_i,\witnessset_i) = (\astate, \dpcore[\treewidthvalue].\hat{\asymbol}(\witnessset_{j_1},\dots,\witnessset_{j_{\arity(a)}}))$, for
some $j_1,\dots,j_{\arity(\asymbol)}<i$, some symbol $\asymbol\in \alphabet_{\treewidthvalue}$ of arity $\arity(\asymbol)>0$, and some state $\astate$ such that 
$\asymbol(\astate_{j_1},\dots,\astate_{j_{\arity(\asymbol)}})\rightarrow \astate$ is a transition of $\realizationclass_{\treewidthvalue}$. 
\end{enumerate}
\end{enumerate}
\end{definition}

%
The following theorem shows that if $\decompositionclass$ is a decomposition class with automation $\realizationclass$ and $\dpcore$ is a 
$\decompositionclass$-coherent treelike DP-core, then showing that 
$\dpcoregraphproperty{\decompositionclass}\nsubseteq \dpcoregraphproperty{\dpcore,\decompositionclass}$, 
is equivalent to showing the existence of some $(\realizationclass,\dpcore,\treewidthvalue)$-refutation.

\begin{theorem}
\label{theorem:DP-CoreRefutation}
Let $\decompositionclass = \{(\alphabet_{\treewidthvalue},\languageclass_{\treewidthvalue},\graphfunction_{\treewidthvalue})\}_{\treewidthvalue\in \N}$ 
be a decomposition class with automation $\realizationclass$, and $\dpcore$ be a $\decompositionclass$-coherent treelike DP-core. For each $\treewidthvalue\in \N$, 
we have that $\dpcoregraphproperty{\decompositionclass_{\treewidthvalue}}\nsubseteq \dpcoregraphproperty{\dpcore,\decompositionclass}$
if and only if some $(\realizationclass,\dpcore,\treewidthvalue)$-refutation exists. 
\end{theorem}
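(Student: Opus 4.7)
The plan is to reduce the theorem, via Lemma \ref{lemma:UniversalityInclusion}, to the equivalent statement: there exists a term $\abstractdecomposition \in \languageclass_{\treewidthvalue} \setminus \accepteddecompositions{\dpcore[\treewidthvalue]}$ if and only if an $(\realizationclass,\dpcore,\treewidthvalue)$-refutation exists. Indeed, Lemma \ref{lemma:UniversalityInclusion}, whose hypothesis of $\decompositionclass$-coherency holds here, yields $\dpcoregraphproperty{\decompositionclass_{\treewidthvalue}} \nsubseteq \dpcoregraphproperty{\dpcore,\decompositionclass}$ iff $\languageclass_{\treewidthvalue} \nsubseteq \accepteddecompositions{\dpcore[\treewidthvalue]}$, so it suffices to produce (or extract) such a counterexample term from a refutation and vice versa.

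For the forward direction, suppose $\abstractdecomposition \in \languageclass_{\treewidthvalue}$ is a term not accepted by $\dpcore[\treewidthvalue]$. Since $\abstractdecomposition \in \lang(\realizationclass_{\treewidthvalue})$, there is an accepting run of $\realizationclass_{\treewidthvalue}$ on $\abstractdecomposition$ that labels each node $\aposition$ with a state $\astate(\aposition)$, with the root receiving a final state. Simultaneously, Definition \ref{definition:Dynamization} associates to each $\aposition$ the set $\witnessset(\aposition) = \dynamizationfunction{\dpcore}{\treewidthvalue}(\sigmaabstractdecomposition)$, where $\sigmaabstractdecomposition$ is the subterm of $\abstractdecomposition$ rooted at $\aposition$. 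Enumerating the nodes in post-order as $\aposition_1,\dots,\aposition_m$ and setting $(\astate_i,\witnessset_i) = (\astate(\aposition_i), \witnessset(\aposition_i))$, each pair satisfies clause (a) or (b) of Definition \ref{definition:DPRefutation}(\ref{condition:dp-refutation-two}) by construction of the run and of the dynamization, and the last pair is $(\realizationclass,\dpcore,\treewidthvalue)$-inconsistent since $\astate_m$ is final while $\witnessset_m = \dynamizationfunction{\dpcore}{\treewidthvalue}(\abstractdecomposition)$ contains no final local witness.

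For the converse, given a refutation $(\astate_1,\witnessset_1)\dots(\astate_m,\witnessset_m)$, I construct by induction on $i \in [m]$ a term $\abstractdecomposition_i \in \Terms{\alphabet_{\treewidthvalue}}$ satisfying the following invariant: some run of $\realizationclass_{\treewidthvalue}$ on $\abstractdecomposition_i$ reaches state $\astate_i$ at the root, and $\dynamizationfunction{\dpcore}{\treewidthvalue}(\abstractdecomposition_i) = \witnessset_i$. In clause (a) of Definition \ref{definition:DPRefutation}(\ref{condition:dp-refutation-two}) I set $\abstractdecomposition_i = \asymbol$ for the given arity-$0$ symbol; in clause (b) I set $\abstractdecomposition_i = \asymbol(\abstractdecomposition_{j_1},\dots,\abstractdecomposition_{j_{\arity(\asymbol)}})$. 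Both parts of the invariant follow directly from the inductive hypothesis together with the transition relation of $\realizationclass_{\treewidthvalue}$ and Definition \ref{definition:Dynamization}. Applied to $i=m$, the construction yields a term $\abstractdecomposition_m \in \lang(\realizationclass_{\treewidthvalue}) = \languageclass_{\treewidthvalue}$ (because $\astate_m$ is final) whose dynamization $\witnessset_m$ contains no final local witness, so $\abstractdecomposition_m \notin \accepteddecompositions{\dpcore[\treewidthvalue]}$, as required.

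The point where care is needed is this backward direction: a refutation is effectively a DAG, since a single pair $(\astate_j,\witnessset_j)$ may be consumed by several later pairs, so the inductive construction duplicates the subterm $\abstractdecomposition_j$ at each reference and can produce a tree whose size is exponential in $m$. The duplication is harmless for correctness because both $\realizationclass_{\treewidthvalue}$ and $\dynamizationfunction{\dpcore}{\treewidthvalue}$ act locally on subterms, but it signals that any quantitative bound on the size of the extracted counterexample must be derived from the combinatorial structure of the refutation rather than from this unfolding argument on its own.
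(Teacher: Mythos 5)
Your proof is correct and follows essentially the same route as the paper's: the forward direction builds the refutation from a run of $\realizationclass_{\treewidthvalue}$ paired with the dynamization values at each node, and the converse inductively unfolds a refutation into a term that lies in $\languageclass_{\treewidthvalue}$ but is rejected by $\dpcore[\treewidthvalue]$. The only difference is organizational — you factor the graph-level/term-level equivalence through Lemma \ref{lemma:UniversalityInclusion} up front, where the paper re-derives the coherency argument inline in its converse direction; your version is slightly cleaner for it, and your closing remark about the DAG-to-tree blowup correctly anticipates the size bound the paper records in Corollary \ref{corollary:DP-CoreRefutation}.
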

\begin{proof}
Suppose that $\dpcoregraphproperty{\decompositionclass_{\treewidthvalue}}\nsubseteq \dpcoregraphproperty{\dpcore,\decompositionclass}$.
Then, there is a graph $\agraph$ that belongs to $\dpcoregraphproperty{\decompositionclass_{\treewidthvalue}}$,
but not to $\dpcoregraphproperty{\dpcore,\decompositionclass}$. 
Since $\agraph\in \dpcoregraphproperty{\decompositionclass_{\treewidthvalue}}$, we have that for some 
$\abstractdecomposition\in \languageclass_{\treewidthvalue}$, $\agraph$ is isomorphic to 
$\decompositiongraph{\abstractdecomposition}$. 
Since 
$\agraph\notin \dpcoregraphpropertyclass{\dpcore}{\decompositionclass}$
we have that 
$\abstractdecomposition \notin \accepteddecompositions{\dpcore}$, and therefore, 
$\abstractdecomposition \notin \accepteddecompositions{\dpcore[\treewidthvalue]}$.

Let $\allsubterms(\abstractdecomposition) =\{\sigmaabstractdecomposition \;:\;\sigmaabstractdecomposition
\mbox{ is a subterm of }\abstractdecomposition\}$ be the set 
of subterms\footnote{Note that $|\allsubterms(\abstractdecomposition)|$ may be smaller than $|\abstractdecomposition|$,
since a given subterm may occur in several positions of $\abstractdecomposition$.} of $\abstractdecomposition$, and 
$\sigmaabstractdecomposition_1,\sigmaabstractdecomposition_2,\dots,\sigmaabstractdecomposition_{m}$ be a
topological ordering of the elements in $\allsubterms(\abstractdecomposition)$. Since we ordered the subterms
topologically, for each $i,j\in [m]$, if $\sigmaabstractdecomposition_i$ is a subterm of $\sigmaabstractdecomposition_j$, then $i\leq j$. Additionally, 
$\sigmaabstractdecomposition_m = \abstractdecomposition$.
Now, consider the 
sequence 
$$
\dprefutation \equiv (\astate_1,\witnessset_1)(\astate_2,\witnessset_2)\dots (\astate_m,\witnessset_m).
$$ 

Since $\sigmaabstractdecomposition_1,\sigmaabstractdecomposition_2,\dots,\sigmaabstractdecomposition_m$ are subterms of $\abstractdecomposition$ and ordered topologically, 
we have that for each $i\in[m]$, $\sigmaabstractdecomposition_i$ is either a symbol of arity zero or there is a symbol
$\asymbol$ of arity $\arity(\asymbol)>0$, and $j_{1},\dots,j_{\arity(\asymbol)} < i$ 
such that $\sigmaabstractdecomposition_i = \asymbol(\sigmaabstractdecomposition_{j_1},\dots,\sigmaabstractdecomposition_{j_{\arity(\asymbol)}})$. 
\begin{itemize}
\item If $\sigmaabstractdecomposition_i = \asymbol$ has arity zero, and $\astate$ is the unique state of $\treeautomaton_{\treewidthvalue}$ such that 
$\asymbol\rightarrow \astate$ is a transition of $\treeautomaton_{\treewidthvalue}$, then we set $(\astate_i,\witnessset_i) = (\astate,\dpcore[\treewidthvalue].\hat{\asymbol})$.
\item If $\sigmaabstractdecomposition_i = \asymbol(\sigmaabstractdecomposition_{j_1},\dots,\sigmaabstractdecomposition_{j_{\arity(\asymbol)}})$ for some symbol $\asymbol$ of arity $\arity(\asymbol)>0$, 
and $\astate$ is the unique state of $\treeautomaton_{\treewidthvalue}$ such that $\asymbol(\astate_{j_1},\dots,\astate_{j_{\arity(\asymbol)}})\rightarrow \astate$ is a transition of 
$\treeautomaton_\treewidthvalue$, 
we set $(\astate_i,\witnessset_i) = (\astate, \dpcore[\treewidthvalue].\hat{\asymbol}(\witnessset_{j_1},\dots,\witnessset_{j_{\arity(\asymbol)}}))$.
\end{itemize}

By construction, $\dprefutation$ satisfies Condition \ref{DPRefutation:Three} of Definition \ref{definition:DPRefutation}. 
Now, we know that $\abstractdecomposition\in \languageclass_\treewidthvalue$ and that 
$\astate_m$ is the state reached by $\abstractdecomposition$ in $\treeautomaton_k$, and therefore, $\astate_m$ is a final state. On the other hand, $\witnessset_m = \dynamizationfunction{\dpcore}{\treewidthvalue}(\abstractdecomposition)$ and  $\abstractdecomposition\notin \accepteddecompositions{\dpcore[\treewidthvalue]}$, and therefore, $\witnessset_m$ has no final local witness. Therefore, the pair $(\astate_m,\witnessset_m)$ is an $(\realizationclass,\dpcore,\treewidthvalue)$-inconsistent pair, so Condition \ref{DPRefutation:Two} of Definition \ref{definition:DPRefutation}
is satisfied. Consequently, the first direction of Theorem \ref{theorem:DP-CoreRefutation} is proved, i.e., $\dprefutation$ is an $(\realizationclass,\dpcore,\treewidthvalue)$-refutation.

For the converse, assume that
$\dprefutation \equiv (\astate_1,\witnessset_1)(\astate_2,\witnessset_2)\dots (\astate_m,\witnessset_m)$ is an $(\realizationclass,\dpcore,\treewidthvalue)$-refutation.
Using this refutation, we will construct a sequence 
of terms
$\sigmaabstractdecomposition_1,\sigmaabstractdecomposition_2,\dots,\sigmaabstractdecomposition_{m}$
with the following property: for each 
$i\in [m]$, $\sigmaabstractdecomposition_i \in \Terms{\alphabet_{\treewidthvalue}}$
and $\witnessset_i = \dynamizationfunction{\dpcore}{\treewidthvalue}(\sigmaabstractdecomposition_i)$.
Since $\astate_m$ is a final state for $\realizationclass_\treewidthvalue$ but 
$\witnessset_m$ has no final local witness for $\dpcore[\treewidthvalue]$, we have that 
$\sigmaabstractdecomposition_m$ is in $\languageclass_{\treewidthvalue}$ but not in $\accepteddecompositionsboundedwidth{\dpcore[\treewidthvalue]}$.
In other words, $\decompositiongraph{\sigmaabstractdecomposition_m}$ is in $\dpcoregraphproperty{\decompositionclass_\treewidthvalue}$ but not in  
$\dpcoregraphproperty{\dpcore[\treewidthvalue],\decompositionclass}$. 
Since $\dpcore$ is $\decompositionclass$-coherent, we have that for each $\treewidthvalue'\in \N$, there is no term $\sigma'\in \accepteddecompositionsboundedwidth{\dpcore[\treewidthvalue']}$ 
with $\decompositiongraph{\sigma_m} \simeq \decompositiongraph{\sigma'}$ (otherwise, $\sigma_m$ would belong to $\accepteddecompositionsboundedwidth{\dpcore[\treewidthvalue]}$).
 Therefore, $\decompositiongraph{\sigmaabstractdecomposition_m}$ is not 
in $\dpcoregraphproperty{\dpcore,\decompositionclass}$ either. We infer that $\dpcoregraphproperty{\decompositionclass_{\treewidthvalue}}\nsubseteq \dpcoregraphproperty{\dpcore,\decompositionclass}$.

Now, for each $i\in \N$, the construction of $\sigmaabstractdecomposition_i$ proceeds as follows. If there is a symbol $\asymbol\in \alphabet_{\treewidthvalue}$ of arity 
$0$ such that $\asymbol\rightarrow \astate_i$ is a transition of $\treeautomaton_{\treewidthvalue}$, and $\witnessset_i = \dpcore[\treewidthvalue].\hat{\asymbol}$, 
then  we let $\sigmaabstractdecomposition_i = \asymbol$. On the other hand, if there is a symbol $a\in \alphabet_{\treewidthvalue}$ of arity 
$\arity(\asymbol)>0$ and some $j_1,\dots,j_r<i$ such that $\asymbol(\astate_{j_1},\dots,\astate_{j_{\arity(\asymbol)}}) \rightarrow \astate_i$ is a transition of $\realizationclass_{\treewidthvalue}$
and $\witnessset_i = \dpcore[\treewidthvalue].\hat{\asymbol}(\witnessset_{j_1},\dots,\witnessset_{j_{\arity(\asymbol)}})$, 
then we let $\sigmaabstractdecomposition_i = \asymbol(\sigmaabstractdecomposition_{j_1},\dots,\sigmaabstractdecomposition_{j_{\arity(\asymbol)}})$.
It should be clear that for each $i\in [m]$, $\sigmaabstractdecomposition_i$ is a term in $\Terms{\alphabet_{\treewidthvalue}}$. Furthermore, 
using Definition \ref{definition:Dynamization}, it follows by induction on $i$ that for each $i\in [m]$, 
$\witnessset_i = \dynamizationfunction{\dpcore}{\treewidthvalue}(\sigmaabstractdecomposition_i)$.
This concludes the proof of the theorem.
\end{proof}

Theorem \ref{theorem:DP-CoreRefutation} implies 
the existence of a simple forward-chaining style algorithm 
for determining whether $\dpcoregraphproperty{\decompositionclass_{\treewidthvalue}} \subseteq \dpcoregraphproperty{\dpcore,\decompositionclass}$ when $\dpcore$
is a {\em finite} and $\decompositionclass$-coherent treelike DP-core. 

\begin{theorem}[Inclusion Test]
\label{theorem:Inclusion}
Let $\decompositionclass = \{(\alphabet_{\treewidthvalue},\languageclass_{\treewidthvalue},\graphfunction_{\treewidthvalue})\}_{\treewidthvalue\in \N}$ 
be a treelike decomposition class with automation $\realizationclass$, and let 
$\dpcore$ be a finite, internally polynomial $\decompositionclass$-coherent 
treelike DP core. Let $r$ be the arity of $\decompositionclass$ and $f(k)$ be 
the complexity of $\realizationclass$. Then, one can determine whether 
$\dpcoregraphproperty{\decompositionclass_\treewidthvalue}\subseteq \dpcoregraphproperty{\dpcore,\decompositionclass}$
in time 
%
%
$$f(\treewidthvalue)^{O(r)}\cdot 2^{O(r\cdot \bitlengthusefulwitnesssimple{\dpcore}(\treewidthvalue)\cdot \maxsizeusefulsetsimple{\dpcore}(\treewidthvalue))} 
\leq f(\treewidthvalue)^{O(r)}\cdot 2^{r\cdot 2^{O(\bitlengthusefulwitnesssimple{\dpcore}(\treewidthvalue))})}.$$

\end{theorem}

The proof of Theorem \ref{theorem:Inclusion} can be found in 
Appendix \ref{section:proofTheoremInclusionTest}.

\subsection{On the Size of Counter-Examples}
\label{subsection:SizeCounterExamples}

The proof of Theorem \ref{theorem:DP-CoreRefutation} provides us with an
algorithm to extract, from a given $(\realizationclass,\dpcore,\treewidthvalue)$-refutation $\dprefutation$, a $\decompositionclass$-decomposition $\abstractdecomposition$
of width at most $\treewidthvalue$ such that $\decompositiongraph{\abstractdecomposition}\notin \dpcoregraphproperty{\dpcore,\decompositionclass}$.
The graph $\decompositiongraph{\abstractdecomposition}$ corresponding to $\abstractdecomposition$ may be regarded as a counter-example for the
conjecture $\dpcoregraphproperty{\decompositionclass} \subseteq \dpcoregraphproperty{\dpcore,\decompositionclass}$. 
Note that if the refutation has length $m$, then the height of the $\decompositionclass$-decomposition $\abstractdecomposition$ is at most $m-1$. Here, by height we mean the length of the longest path between a leaf node of $\abstractdecomposition$ and the root node. This implies that if  $\decompositionclass$
is a treelike decomposition class of arity $\arityvalue$, the number of nodes of $\abstractdecomposition$ is at most $m$, if $\arityvalue=1$, and at most 
$\frac{\arityvalue^m-1}{\arityvalue-1}$ nodes, if $r>1$. 

\begin{corollary}
\label{corollary:DP-CoreRefutation}
Let $\decompositionclass = \{(\alphabet_k,\languageclass_k,\graphfunction_k)\}_{\treewidthvalue\in \N}$ 
be a treelike decomposition class of arity $\arityvalue$ with automation $\realizationclass$, $\dpcore$ be a $\decompositionclass$-coherent treelike DP-core, and 
$\dprefutation \equiv (\astate_1,\witnessset_1)(\astate_2,\witnessset_2)\dots 
(\astate_m,\witnessset_m)$ be a $(\realizationclass,\dpcore,\treewidthvalue)$-refutation. 
Then, there is a $\decompositionclass$-decomposition $\abstractdecomposition\in \languageclass_k$ of height at most $m-1$ such that
$\decompositiongraph{\abstractdecomposition} \in \dpcoregraphproperty{\decompositionclass_{\treewidthvalue}}\backslash \dpcoregraphproperty{\dpcore,\decompositionclass}$. Additionally, $\abstractdecomposition$ has at most $m$ nodes if $r=1$, and at most $\frac{\arityvalue^m-1}{\arityvalue-1}$ nodes if $r>1$.
\end{corollary}

Since the search space in the proof of Theorem \ref{theorem:Inclusion} has at most $f(\treewidthvalue)\cdot \numberusefulsetssimple{\dpcore}(\treewidthvalue)$
distinct $(\realizationclass,\dpcore,\treewidthvalue)$-pairs, a minimum-length  $(\realizationclass,\dpcore,\treewidthvalue)$-refutation
has length at most $f(\treewidthvalue)\cdot \numberusefulsetssimple{\dpcore}(\treewidthvalue)$. Therefore, this fact 
together with Corollary \ref{corollary:DP-CoreRefutation} implies the following result.

\begin{corollary}
\label{corollary:SizeCounterexample}
Let $\decompositionclass = \{(\alphabet_k,\languageclass_k,\graphfunction_k)\}_{\treewidthvalue\in \N}$ be a treelike decomposition class of complexity 
$f(\treewidthvalue)$, and let $\dpcore$ be a finite, $\decompositionclass$-coherent treelike DP core. 
If $\dpcoregraphproperty{\decompositionclass_\treewidthvalue}\nsubseteq \dpcoregraphproperty{\dpcore,\decompositionclass}$, 
then  
there is a $\decompositionclass$-decomposition $\abstractdecomposition\in \languageclass_k$ of height at most $f(\treewidthvalue)\cdot \numberusefulsetssimple{\dpcore}(\treewidthvalue)-1$ such that
$\decompositiongraph{\abstractdecomposition} \in \dpcoregraphproperty{\decompositionclass_{\treewidthvalue}}\backslash \dpcoregraphproperty{\dpcore,\decompositionclass}$. Additionally, $\abstractdecomposition$ has 
at most  $f(\treewidthvalue)\cdot \numberusefulsetssimple{\dpcore}(\treewidthvalue)$ nodes if $\arityvalue=1$,
 and at most $\arityvalue^{f(\treewidthvalue)\cdot \numberusefulsetssimple{\dpcore}(\treewidthvalue)}$ nodes if $\arityvalue>1$. 
\end{corollary}

The requirement that the DP-core $\dpcore$ in Theorem \ref{theorem:Inclusion} is finite can be relaxed if instead of asking whether
$\dpcoregraphproperty{\decompositionclass_{\treewidthvalue}}\subseteq \dpcoregraphproperty{\dpcore,\decompositionclass}$, we ask whether
all graphs in $\dpcoregraphproperty{\decompositionclass_{\treewidthvalue}}$ that can be represented by a $\decompositionclass$-decomposition
of size at most $\sizedecomposition$ belong to $\dpcoregraphproperty{\dpcore,\decompositionclass}$. 

\begin{corollary}[Bounded-Size Inclusion Test]
\label{corollary:InclusionBoundedSize}
Let $\decompositionclass$ be a treelike decomposition class of complexity $f(\treewidthvalue)$ and arity $r$,
and let $\dpcore$ be a (not necessarily finite) internally polynomial $\decompositionclass$-coherent treelike DP core. One can determine 
in time 
	$$f(\treewidthvalue)^{O(r)}\cdot 2^{O(r\cdot \bitlengthusefulwitnesssimple{\dpcore}(\treewidthvalue,\sizedecomposition)\cdot \maxsizeusefulsetsimple{\dpcore}(\treewidthvalue,\sizedecomposition))}\cdot \sizedecomposition^{O(1)}$$ 
whether every graph corresponding to a $\decompositionclass$-decomposition of width at most $\treewidthvalue$ and size at most $\sizedecomposition$ belongs to 
$\dpcoregraphproperty{\dpcore,\decompositionclass}$. 
\end{corollary}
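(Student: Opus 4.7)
The plan is to adapt the forward-chaining BFS from the proof of Theorem \ref{theorem:Inclusion}, modifying it so that it only explores $(\realizationclass, \dpcore, \treewidthvalue)$-pairs that are reachable by $\decompositionclass$-decompositions of size at most $\sizedecomposition$. As a first step, I would introduce a bounded-size analog of Theorem \ref{theorem:DP-CoreRefutation}: define an $(\realizationclass, \dpcore, \treewidthvalue, \sizedecomposition)$-refutation to be an $(\realizationclass, \dpcore, \treewidthvalue)$-refutation $(\astate_1, \witnessset_1) \dots (\astate_m, \witnessset_m)$ in which every pair $(\astate_i, \witnessset_i)$ is witnessed by a subterm $\sigmaabstractdecomposition_i$ of size at most $\sizedecomposition$, and show by a direct specialisation of the two directions of Theorem \ref{theorem:DP-CoreRefutation} that such a refutation exists if and only if there is some $\abstractdecomposition \in \languageclass_{\treewidthvalue}$ with $|\abstractdecomposition| \leq \sizedecomposition$ that does not belong to $\accepteddecompositions{\dpcore[\treewidthvalue]}$. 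The only change in the forward direction is that the extracted sequence is indexed by the subterms of a size-$\leq \sizedecomposition$ decomposition; the converse direction is unchanged, because the term $\sigmaabstractdecomposition_m$ reconstructed from the refutation has size at most $m$.

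Next, I would implement the BFS of Theorem \ref{theorem:Inclusion} in rounds, maintaining for each pair the smallest decomposition size that realises it. Concretely, the algorithm produces sets $R_1 \subseteq R_2 \subseteq \dots \subseteq R_{\sizedecomposition}$ where $R_i$ contains exactly the pairs reachable by some term of size at most $i$: $R_1$ consists of the leaf pairs $(\astate, \dpcore[\treewidthvalue].\hat{\asymbol})$ for arity-$0$ symbols $\asymbol$, and $R_i$ is obtained from $R_{i-1}$ by applying rule \ref{condition:dp-refutation-two}.(b) of Definition \ref{definition:DPRefutation} to all $r'$-tuples of previously generated pairs with $r' \leq r$. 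The algorithm halts and returns \emph{Inclusion Fails} as soon as an $(\realizationclass, \dpcore, \treewidthvalue)$-inconsistent pair is generated, and returns \emph{Inclusion Holds} if $R_{\sizedecomposition}$ contains no such pair. For the running-time analysis, I would observe that by the definition of $(\dpcore, \treewidthvalue, \sizedecomposition)$-usefulness, all pairs ever enumerated lie among $f(\treewidthvalue) \cdot \numberusefulsetssimple{\dpcore}(\treewidthvalue, \sizedecomposition)$ many $(\realizationclass, \dpcore, \treewidthvalue)$-pairs; each generation step combining $r$ such pairs via a transition costs $\treewidthvalue^{O(1)} \cdot r^{O(1)} \cdot \bitlengthusefulwitnesssimple{\dpcore}(\treewidthvalue, \sizedecomposition)^{O(1)} \cdot \maxsizeusefulsetsimple{\dpcore}(\treewidthvalue, \sizedecomposition)^{r+O(1)}$ by the same calculation as in Theorem \ref{theorem:ModelChecking}, since $\dpcore$ is internally polynomial; and we perform this computation inside at most $\sizedecomposition$ rounds. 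Multiplying and applying Observation \ref{observation:RelationMeasures} to replace $\numberusefulsetssimple{\dpcore}(\treewidthvalue, \sizedecomposition)$ with $2^{O(\bitlengthusefulwitnesssimple{\dpcore}(\treewidthvalue, \sizedecomposition) \cdot \maxsizeusefulsetsimple{\dpcore}(\treewidthvalue, \sizedecomposition))}$ yields the claimed bound.

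The hard part, I expect, is the bookkeeping that links refutation length to decomposition size. Specifically, one has to argue that every pair witnessed by a term $\abstractdecomposition$ of size at most $\sizedecomposition$ is already present in $R_{|\abstractdecomposition|}$, which follows by induction on $\abstractdecomposition$ using the fact that a topological ordering of $\allsubterms(\abstractdecomposition)$ has length $|\allsubterms(\abstractdecomposition)| \leq |\abstractdecomposition|$, and conversely that every pair in $R_{\sizedecomposition}$ is witnessed by some term of size at most $\sizedecomposition$, reconstructed exactly as in the converse direction of the proof of Theorem \ref{theorem:DP-CoreRefutation}. With this equivalence in hand, the absence of an inconsistent pair in $R_{\sizedecomposition}$ is equivalent to all size-$\leq \sizedecomposition$ decompositions in $\languageclass_\treewidthvalue$ lying in $\accepteddecompositions{\dpcore[\treewidthvalue]}$, which by Proposition \ref{proposition:CoherencySequenceTuples} is equivalent to the graphs they encode lying in $\dpcoregraphproperty{\dpcore, \decompositionclass}$, establishing correctness.
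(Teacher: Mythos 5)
Your overall strategy --- rerun the forward-chaining search of Theorem \ref{theorem:Inclusion} while restricting attention to pairs realizable by terms of size at most $\sizedecomposition$ --- is the same as the paper's, which performs the BFS over triples $(\astate,\witnessset,i)$ with $i\in\{0,1,\dots,\sizedecomposition\}$ recording the size of a witnessing term. However, your size bookkeeping breaks down as soon as the arity $\arityvalue$ is at least $2$ (which is the case for the instructive tree decomposition class, because of $\jointype$): term size is \emph{additive} over the children of a node, not bounded by the number of derivation rounds or by the refutation length. Concretely, (i) your claim that ``the term $\sigmaabstractdecomposition_m$ reconstructed from the refutation has size at most $m$'' contradicts Corollary \ref{corollary:DP-CoreRefutation}, which for $\arityvalue>1$ only guarantees size at most $\frac{\arityvalue^m-1}{\arityvalue-1}$; and (ii) the operational rule ``$R_i$ is obtained from $R_{i-1}$ by applying rule (b) to all tuples of previously generated pairs'' does not compute the set of pairs reachable by terms of size at most $i$. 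Combining $\arityvalue\geq 2$ pairs of $R_{i-1}$, each witnessed only by terms of size $i-1$, produces a pair whose smallest witness has size $1+\arityvalue(i-1)>i$, so after $\sizedecomposition$ rounds $R_{\sizedecomposition}$ may contain pairs witnessed only by terms of size exponential in $\sizedecomposition$. If the first inconsistent pair encountered is of this kind, your algorithm reports that a counterexample of size at most $\sizedecomposition$ exists when in fact only a much larger one does, so the answer can be wrong.

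The repair is exactly what the paper does, and what your parenthetical ``maintaining for each pair the smallest decomposition size that realises it'' gestures at but never enforces: make the size an explicit component of the search state. One enumerates triples $(\astate,\witnessset,i)$, generating $(\astate,\witnessset,i)$ from $(\astate_{j_1},\witnessset_{j_1},i_{j_1}),\dots,(\astate_{j_{\arity(\asymbol)}},\witnessset_{j_{\arity(\asymbol)}},i_{j_{\arity(\asymbol)}})$ only when $i=1+\sum_{l}i_{j_l}$, and discarding triples with $i>\sizedecomposition$. With this change your completeness induction goes through essentially verbatim, soundness is restored because every retained triple is witnessed by a term of the recorded size (so an inconsistent triple yields a genuine counterexample of size at most $\sizedecomposition$ by backtracking), and the running-time analysis you give is otherwise the intended one, with the extra polynomial factor in $\sizedecomposition$ accounting for the enlarged state space.
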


We note that whenever $\maxsizeusefulsetsimple{\dpcore}(\treewidthvalue,\sizedecomposition) = h_1(\treewidthvalue)$ for some function $h_1:\N\rightarrow \N$, and 
$\bitlengthusefulwitnesssimple{\dpcore}(\treewidthvalue,\sizedecomposition) = h_2(\treewidthvalue)\cdot \log n$ for some function $h_2:\N\rightarrow \N$, then
the running time stated in Corollary \ref{corollary:InclusionBoundedSize} is of the form $\sizedecomposition^{h_3(\treewidthvalue)}$ for some function $h_3:\N\rightarrow \N$.
This is significantly faster than the naive brute-force approach of enumerating all terms of width at most $\treewidthvalue$, and size at most $n$, and subsequently 
testing whether these terms belong to $\dpcoregraphproperty{\dpcore,\decompositionclass}$.

\subsection{Combinators and Combinations}
\label{subsection:CombinatorsCombination}

Given a graph property $\graphproperty$, and a graph
$\agraph\in \allgraphs$, we let $\indicatorfunction{\graphproperty}{\agraph}$ 
denote the Boolean value {\em true} if
$\agraph\in \graphproperty$ and the value {\em false},
 if $\agraph\notin \graphproperty$. 
For each $\numberproperties\in \N$ we call a function of the form 
$$\combinator:\{0,1\}^{\numberproperties}\times (\{0,1\}^*)^{\numberproperties}\rightarrow \{0,1\}.$$
an {\em $\numberproperties$-combinator}. Given graph properties 
$\graphproperty_1,\dots,\graphproperty_{\numberproperties}$
and graph invariants $\invariant_1\dots,\invariant_{\numberproperties}$, 
we let
$$\combinatorproperty{\combinator}(\graphproperty_1,\dots,\graphproperty_{\numberproperties},\invariant_1,\dots,\invariant_{\numberproperties})$$
denote the graph property consisting of all graphs $\agraph$ such that 
$$\combinator(\indicatorfunction{\graphproperty_1}{\agraph},\dots,\indicatorfunction{\graphproperty_{\numberproperties}}{\agraph},\invariant_1(\agraph),\dots,\invariant_{\numberproperties}(\agraph))=1.$$
We say that $\combinator$ is {\em polynomial} if it can be computed
in time $O_{\numberproperties}(|X|^c)$ for some constant $c$ on any given input $X$. 

Intuitively, a combinator is a tool to define graph classes in terms 
of previously defined graph classes and previously defined graph invariants.
It is worth noting that Boolean combinations of graph classes can be straightforwardly defined 
using combinators. Nevertheless, one can do more than that, since 
combinators can also be used to establish relations between graph 
invariants. For instance, using combinators one can define the class of 
graphs whose {\em covering number} (the smallest size of a vertex-cover) is 
equal to the {\em dominating number} (the smallest size of a dominating set).
This is just a illustrative example. Other examples of invariants 
that can be related using combinators are: {\em clique number}, {\em independence number},
{\em chromatic number}, {\em diameter}, and many others.  
Next, we will use combinators as a tool to combine graph properties and 
graph invariants defined using DP-cores.

\begin{theorem}
\label{theorem:CombinatorsCore}
Let $\combinator$ be and $\numberproperties$-combinator, $\decompositionclass$ be a treelike decomposition class, and $\dpcore_1,\dots,\dpcore_{\numberproperties}$
be $\decompositionclass$-coherent treelike DP-cores. Then, there exists a $\decompositionclass$-coherent treelike DP-core 
$\dpcore = \dpcore(\combinator,\dpcore_1,\dots,\dpcore_{\numberproperties})$ 
satisfying the following properties: 
\begin{enumerate}
\item 
	$\dpcoregraphpropertyclass{\dpcore}{\decompositionclass} = \combinator(\dpcoregraphproperty{\dpcore_1,\decompositionclass},\dots,\dpcoregraphproperty{\dpcore_{\numberproperties},\decompositionclass},\invariant[\dpcore_1,\decompositionclass],\dots,\invariant[\dpcore_{\numberproperties},\decompositionclass])$. 
\item $\dpcore$ has bitlength $\bitlengthusefulwitnesssimple{\dpcore}(\treewidthvalue,\sizedecomposition) = \sum_{i = 1}^{\numberproperties} \bitlengthusefulwitnesssimple{\dpcore_i}(\treewidthvalue,\sizedecomposition)\cdot \maxsizeusefulsetsimple{\dpcore_i}(\treewidthvalue,\sizedecomposition)$. 
\item $\dpcore$ has multiplicity $\maxsizeusefulsetsimple{\dpcore}(\treewidthvalue,\sizedecomposition)=1$.  
\item $\dpcore$ has d.s.c. 
$\numberusefulsetssimple{\dpcore}(\treewidthvalue,\sizedecomposition) 
\leq
\prod_{i=1}^{\numberproperties}
\numberusefulsetssimple{\dpcore_{i}}(\treewidthvalue,\sizedecomposition).
$
\end{enumerate}
\end{theorem}
\begin{proof}
We let the $\combinator$-combination of $(\dpcore_1,\dots,\dpcore_{\numberproperties})$ be 
the DP-core $\dpcore$, where for each $\treewidthvalue\in \N$, the components of the tuple $\dpcore[\treewidthvalue]$ are specified below.
Here, we let $\vertexone,\vertextwo\in [\treewidthvalue+1]$, and 
$\boldS = (\witnessset_1,\dots,\witnessset_{\numberproperties})$ and 
$\boldS' = (\witnessset_1', \dots,\witnessset_{\numberproperties}')$  be 
tuples in $\finitepowerset{\dpcore_1[\treewidthvalue].\allwitnesses}\times \dots \times 
\finitepowerset{\dpcore_{\numberproperties}[\treewidthvalue].\allwitnesses}$. For each $i\in [\numberproperties]$, we let $F(\dpcore_i[\treewidthvalue],\witnessset_i)$ be the Boolean value $1$ if $\witnessset_i$ has a final local witness and $0$ otherwise. 
\begin{enumerate}
\item $\dpcore[\treewidthvalue].\allwitnesses = \finitepowerset{\dpcore_1[\treewidthvalue].\allwitnesses}\times \dots \times 
\finitepowerset{\dpcore_{\numberproperties}[\treewidthvalue].\allwitnesses}$.
\item $\dpcore[\treewidthvalue].\leaftype =  \{(\dpcore_1[\treewidthvalue].\leaftype,\dots,\dpcore_{\numberproperties}[\treewidthvalue].\leaftype)\}$. 
\item $\dpcore[\treewidthvalue].\introvertexgeneric{\vertexone}(\boldS) =\{(\dpcore_1[\treewidthvalue].\introvertexgeneric{\vertexone}(\witnessset_1),\dots,
	 \dpcore_{\numberproperties}[\treewidthvalue].\introvertexgeneric{\vertexone}(\witnessset_{\numberproperties}))\}$.
\item $\dpcore[\treewidthvalue].\forgetvertexgeneric{\vertexone}(\boldS) =\{(\dpcore_1[\treewidthvalue].\forgetvertexgeneric{\vertexone}(\witnessset_1),\dots,
	\dpcore_{\numberproperties}[\treewidthvalue].\forgetvertexgeneric{\vertexone}(\witnessset_{\numberproperties}))\}$.
\item $\dpcore[\treewidthvalue].\introedgegeneric{\vertexone}{\vertextwo}(\boldS) =   \{(\dpcore_1[\treewidthvalue].\introedgegeneric{\vertexone}{\vertextwo}(\witnessset_1),\dots,
\dpcore_{\numberproperties}[\treewidthvalue].\introedgegeneric{\vertexone}{\vertextwo}(\witnessset_{\numberproperties}))\}$.
\item $\dpcore[\treewidthvalue].\joingeneric(\boldS,\boldS') = \{(\dpcore_1[\treewidthvalue].\joingeneric(\witnessset_1,\witnessset_1'),\dots,
      \dpcore_{\numberproperties}[\treewidthvalue].\joingeneric(\witnessset_{\numberproperties},\witnessset_{\numberproperties}'))\}$.
\item $\dpcore[\treewidthvalue].\cleaningfunctioncore(\{\boldS\})=\{(\dpcore_1[\treewidthvalue].\cleaningfunctioncore(\witnessset_1),\dots, \dpcore_{\numberproperties}[\treewidthvalue].\cleaningfunctioncore(\witnessset_{\numberproperties}))\}$. 
\item $\dpcore[\treewidthvalue].\finalwitnessgenericcore(\boldS)=\combinator(F(\dpcore_1[\treewidthvalue],\witnessset_1),\dots,F(\dpcore_{\numberproperties}[\treewidthvalue],\witnessset_{\numberproperties}), 
\dpcore_1[\treewidthvalue].\invariantCore(\witnessset_1),\dots,\dpcore_{\numberproperties}[\treewidthvalue].\invariantCore(\witnessset_{\numberproperties}))$. 
\item $\dpcore[\treewidthvalue].\invariantCore(\{\boldS\})= (\dpcore_1[\treewidthvalue].\invariantCore(\witnessset_1),\dots,\dpcore_{\numberproperties}[\treewidthvalue].\invariantCore(\witnessset_{\numberproperties}))$. 
\end{enumerate}
The upper bounds for the functions $\bitlengthusefulwitnesssimple{\dpcore}(\treewidthvalue,\sizedecomposition)$, 
$\maxsizeusefulsetsimple{\dpcore}(\treewidthvalue,\sizedecomposition)$, 
and  $\numberusefulsetssimple{\dpcore}(\treewidthvalue,\sizedecomposition)$, 
can be inferred directly from this construction. Finally,
since the DP-cores $\dpcore_1,\dots,\dpcore_{\numberproperties}$
are $\decompositionclass$-coherent, the DP-core 
$\dpcore$ is also $\decompositionclass$-coherent. 
\end{proof}

We call the DP-core $\dpcore = \dpcore(\combinator,\dpcore_1,\dots,\dpcore_{\numberproperties})$ the $\combinator$-combination of 
$\dpcore_1,\dots,\dpcore_{\numberproperties}$.

If the DP-cores $\dpcore_1,\dots,\dpcore_{\numberproperties}$ are also {\em finite}, 
besides being $\decompositionclass$-coherent, and internally polynomial, then
Theorem \ref{theorem:CombinatorsCore} together with Theorem \ref{theorem:Inclusion} 
directly imply the following theorem, which will be used in Section
\ref{section:Applications} to establish analytic upper bound on the time necessary to 
verify long-standing conjectures on graphs of bounded treewidth. 

\begin{theorem}[Inclusion Test for Combinations]
\label{theorem:InclusionTestCombinators}
	Let $\decompositionclass$ be a treelike decomposition class of arity $\arityvalue$; $\dpcore_1,\dots,\dpcore_{\numberproperties}$ be {\em finite}, internally polynomial, 
$\decompositionclass$-coherent treelike DP-cores; and $\combinator$ be a polynomial $\numberproperties$-combinator. Let $\dpcore = \dpcore(\combinator,\dpcore_1,\dots,\dpcore_{\numberproperties})$
be the $\combinator$-combination of $\dpcore_1,\dots,\dpcore_{\numberproperties}$, $\beta(\treewidthvalue) = \max_{i}\bitlengthusefulwitnesssimple{\dpcore_i}(\treewidthvalue)$
and $\mu(\treewidthvalue) = \max_{i}\maxsizeusefulsetsimple{\dpcore_i}(\treewidthvalue)$.
	Then, for each $\treewidthvalue\in \N$, one can determine whether $\dpcoregraphproperty{\decompositionclass_{\treewidthvalue}} \subseteq \dpcoregraphproperty{\dpcore,\decompositionclass}$
in time 
$$f(\treewidthvalue)^{O(\arityvalue)}\cdot 2^{O(\numberproperties\cdot \arityvalue\cdot \beta(\treewidthvalue)\cdot \mu(\treewidthvalue))} 
	\leq f(\treewidthvalue)^{O(\arityvalue)}\cdot 2^{\numberproperties\cdot \arityvalue\cdot 2^{O(\beta(\treewidthvalue))})}.$$
	
\end{theorem}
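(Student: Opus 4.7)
The theorem is a direct corollary of combining Theorem \ref{theorem:CombinatorsCore} with Theorem \ref{theorem:Inclusion}. The plan is to form the $\combinator$-combination $\dpcore = \dpcore(\combinator,\dpcore_1,\dots,\dpcore_{\numberproperties})$, verify that $\dpcore$ meets the hypotheses of the single-core inclusion test, and then substitute the complexity measures of $\dpcore$ (as computed in Theorem \ref{theorem:CombinatorsCore}) into the bound of Theorem \ref{theorem:Inclusion}.

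First, by Theorem \ref{theorem:CombinatorsCore}, the DP-core $\dpcore$ is $\decompositionclass$-coherent and satisfies
\[
\dpcoregraphpropertyclass{\dpcore}{\decompositionclass} = \combinator(\dpcoregraphproperty{\dpcore_1,\decompositionclass},\dots,\dpcoregraphproperty{\dpcore_{\numberproperties},\decompositionclass},\invariant[\dpcore_1,\decompositionclass],\dots,\invariant[\dpcore_{\numberproperties},\decompositionclass]),
\]
so testing $\dpcoregraphproperty{\decompositionclass_{\treewidthvalue}} \subseteq \dpcoregraphproperty{\dpcore,\decompositionclass}$ reduces to the inclusion test for $\dpcore$ itself. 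Moreover, the construction of $\dpcore$ in Theorem \ref{theorem:CombinatorsCore} gives the bounds $\bitlengthusefulwitnesssimple{\dpcore}(\treewidthvalue) \leq \numberproperties\cdot \beta(\treewidthvalue)\cdot \mu(\treewidthvalue)$ and $\maxsizeusefulsetsimple{\dpcore}(\treewidthvalue) = 1$. Since each $\dpcore_i$ is finite, the bitlength of $\dpcore$ is bounded by a function of $\treewidthvalue$ alone, so $\dpcore$ is finite as well.

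Next I need to observe that $\dpcore$ is internally polynomial. The transitions, cleaning function, and invariant function of $\dpcore$ are defined componentwise from the corresponding operations of $\dpcore_1,\dots,\dpcore_\numberproperties$, each of which is computable in polynomial time by assumption; the final-state check of $\dpcore$ additionally evaluates the polynomial combinator $\combinator$ on the aggregated data, which by definition costs $O_\numberproperties(|X|^c)$ on an input of size $|X|$. Since the aggregated input always has size polynomial in $\treewidthvalue + \bitlengthusefulwitnesssimple{\dpcore}(\treewidthvalue)$, all components of $\dpcore$ are computable in polynomial time.

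Having verified that $\dpcore$ is a finite, internally polynomial, $\decompositionclass$-coherent treelike DP-core, I apply Theorem \ref{theorem:Inclusion} to $\dpcore$, which yields an inclusion-test running time of
\[
f(\treewidthvalue)^{O(\arityvalue)}\cdot 2^{O(\arityvalue\cdot \bitlengthusefulwitnesssimple{\dpcore}(\treewidthvalue)\cdot \maxsizeusefulsetsimple{\dpcore}(\treewidthvalue))}.
\]
Substituting the bounds $\bitlengthusefulwitnesssimple{\dpcore}(\treewidthvalue) \leq \numberproperties\cdot \beta(\treewidthvalue)\cdot \mu(\treewidthvalue)$ and $\maxsizeusefulsetsimple{\dpcore}(\treewidthvalue) = 1$ produces the claimed bound $f(\treewidthvalue)^{O(\arityvalue)}\cdot 2^{O(\numberproperties\cdot \arityvalue\cdot \beta(\treewidthvalue)\cdot \mu(\treewidthvalue))}$. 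The second inequality in the theorem statement then follows from Observation \ref{observation:RelationMeasures}, which gives $\mu(\treewidthvalue)\leq 2^{\beta(\treewidthvalue)}$. There is no genuinely hard step here; the content is entirely packaged in Theorems \ref{theorem:CombinatorsCore} and \ref{theorem:Inclusion}, and the only verification that requires a remark is the internal polynomiality of $\dpcore$, which follows from the assumed polynomial running time of $\combinator$ together with the componentwise definition of the combined core.
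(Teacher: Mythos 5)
Your proposal is correct and follows essentially the same route as the paper, which derives Theorem \ref{theorem:InclusionTestCombinators} directly from Theorem \ref{theorem:CombinatorsCore} and Theorem \ref{theorem:Inclusion} with exactly the substitution $\bitlengthusefulwitnesssimple{\dpcore}(\treewidthvalue)\leq \numberproperties\cdot\beta(\treewidthvalue)\cdot\mu(\treewidthvalue)$ and $\maxsizeusefulsetsimple{\dpcore}(\treewidthvalue)=1$. Your explicit check that the combined core remains internally polynomial (using the polynomial-time assumption on $\combinator$) is a detail the paper leaves implicit, but it does not change the argument.
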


We note that in typical applications, the parameters $\arityvalue$ and $\numberproperties$ are constant, while the growth of the function $f(\treewidthvalue)$ is negligible when 
compared with $2^{O(\beta(\treewidthvalue)\cdot \mu(\treewidthvalue))}$. Therefore, in these applications, the running time of our algorithm is of the 
form  $2^{O(\beta(\treewidthvalue)\cdot \mu(\treewidthvalue))} \leq 2^{2^{O(\beta(\treewidthvalue))}}$. It is also worth noting that if 
$\dpcoregraphproperty{\decompositionclass_k}\nsubseteq \dpcoregraphproperty{\dpcore,\decompositionclass}$, then 
there a $\decompositionclass$-decomposition $\abstractdecomposition\in \languageclass_k$ of height 
at most $2^{O(\beta(\treewidthvalue)\cdot \mu(\treewidthvalue))}$ such 
that $\decompositiongraph{\abstractdecomposition}\in \dpcoregraphproperty{\decompositionclass_{\treewidthvalue}}\backslash \dpcoregraphproperty{\dpcore,\decompositionclass}$. Intuitively, the graph 
$\decompositiongraph{\abstractdecomposition}$ is a counter-example 
for the statement $\dpcoregraphproperty{\decompositionclass_k}\subseteq \dpcoregraphproperty{\dpcore,\decompositionclass}$.

\section{Applications of Theorem \ref{theorem:InclusionTestCombinators}}
\label{section:Applications}
In this section, we show that Theorem \ref{theorem:InclusionTestCombinators} 
can be used to show that several long-standing graph-theoretic conjectures 
can be tested in time double exponential in $\treewidthvalue^{O(1)}$ on the class 
of graphs of treewidth at most $\treewidthvalue$. The next theorem enumerates
upper bounds on the bitlength and multiplicity of DP-cores deciding several
graph properties.
These upper bounds are obtained by translating combinatorial dynamic programming
algorithms for these properties parameterized by treewidth into 
internally polynomial, $\instructivetreedecompositionclass$-coherent, 
finite DP-cores. 
Here, $\instructivetreedecompositionclass$ is the class of instructive tree decompositions introduced in Section \ref{DPRealization}.
This class has complexity $2^{k}$. In our examples, we will use the following graph properties. 

\begin{enumerate}
	\item $\texttt{Simple}$: the set of all simple graphs (i.e. without multiedges). 
	\item $\texttt{MaxDeg}_{\geq}(c)$: the set of graphs containing at least one vertex of degree at least $c$. 
	\item $\texttt{MinDeg}_{\leq}(c)$: the set of 
	graphs containing at least one vertex of degree at most $c$. 
	\item $\texttt{Colorable(c)}$:  the set of graphs of chromatic number at most $c$.
	\item $\texttt{Conn}$:  set of connected graphs. 
	\item $\texttt{VConn}_{\leq}(c)$:  the set of graphs with vertex-connectivity at most $c$. A graph is $c$-vertex-connected 
	if it has at least $c$ vertices, and if it remains connected whenever fewer than $c$ vertices are deleted. 
	\item $\texttt{EConn}_{\leq}(c)$:  the  set of graphs with edge-connectivity at most $c$. A graph is $c$-edge-connected
	if it remains connected whenever fewer than $c$ edges are deleted. 
	\item $\texttt{Hamiltonian}$: the set of Hamiltonian graphs. A graph is Hamiltonian if it contains a cycle that spans all its vertices.  
	\item $\texttt{NZFlow}(\Z_{\flowValue})$:  the set of graphs that admit a $\Z_{\flowValue}$-flow. 
	Here, $\Z_{\flowValue} = \{0,\dots,\flowValue-1\}$ is the set of integers modulo $\flowValue$. 
	A graph $G$ admits a nowhere-zero $\Z_{\flowValue}$-flow if one can assign to each edge an orientation and a non-zero element of $\Z_{\flowValue}$
	in such a way that for each vertex, the sum of values associated with edges entering the vertex is equal 
	to the sum of values associated with edges leaving the vertex. 
	\item $\texttt{Minor}(\bgraph)$: the set of graphs containing $\bgraph$ as a minor.
	A graph $\bgraph$ is a {\em minor} of a graph $\agraph$ if $\bgraph$ can be obtained from
	$\agraph$ by a sequence of vertex/edge deletions and edge contractions.
	\end{enumerate}
 
\begin{theorem}
\label{theorem:Estimates}
Let $\instructivetreedecompositionclass$ be the instructive tree decomposition class defined in Section \ref{DPRealization}. 
The properties specified above have $\instructivetreedecompositionclass$-coherent DP-cores 
with complexity parameters (bitlength $\beta$, multiplicity $\mu$, state complexity $\nu$, deterministic state complexity $\delta$) as 
specified in Table \ref{table:MeasuresCores}. 
\end{theorem}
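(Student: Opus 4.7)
The plan is to handle the ten properties one by one, in each case (i) exhibiting a concrete DP-core by specifying the set of local witnesses, the five transition functions ($\leaftype$, $\introvertexgeneric{\vertexone}$, $\forgetvertexgeneric{\vertexone}$, $\introedgegeneric{\vertexone}{\vertextwo}$, and $\joingeneric$), and the cleaning, final-witness and invariant functions; (ii) proving by induction on $\abstractdecomposition$ a predicate lemma analogous to Lemma \ref{predicateLemma} that characterizes $\dynamizationfunction{\dpcore}{\treewidthvalue}(\abstractdecomposition)$ in terms of the isomorphism type of the boundaried graph $(\decompositiongraph{\abstractdecomposition},\topmap{\abstractdecomposition})$; and (iii) reading off the values of $\beta$, $\mu$, $\nu$, $\delta$ directly from the description of the useful witnesses after cleaning. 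Coherency, internal polynomiality and finiteness will be immediate consequences of (ii) and the explicit description of the transitions, so the substance of the proof lies in choosing the right notion of local witness to keep the width parameters as small as those listed in Table \ref{table:MeasuresCores}.

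For the ``local'' properties the design is routine. For $\simpleProperty$, a witness is a pair consisting of a flag (``multiedge seen'') and, for each pair of active labels, a bit indicating whether an edge already joins the two corresponding boundary vertices, giving $\beta=O(\treewidthvalue^2)$ and $\mu=O(1)$ after cleaning. For $\texttt{MaxDeg}_{\geq}(c)$ and $\texttt{MinDeg}_{\leq}(c)$, a witness stores the current degree (capped at $c$) of each active vertex plus a flag recording whether the threshold has already been attained on a forgotten vertex. For $\texttt{Colorable}(c)$, a witness is a proper coloring $\vertexcoverset\to[c]$ of the active labels, with transitions enforcing local consistency at $\introedgegeneric{\vertexone}{\vertextwo}$ and equality of the colorings of matched labels at $\joingeneric$, yielding $\beta=O(\treewidthvalue\log c)$ and $\mu=c^{\treewidthvalue+1}$. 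In each case the induction of (ii) is essentially a one-line check per instruction type.

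The substantive work is in the remaining six cases, and this is where the main obstacle lies. For $\connectedProperty$, $\texttt{VConn}_{\leq}(c)$, $\texttt{EConn}_{\leq}(c)$ and the connectivity part of $\texttt{Hamiltonian}$, a witness encodes a partition of the active labels into the connected components they belong to in the partial graph, plus an auxiliary ``closed'' flag recording whether a connected component consisting entirely of forgotten vertices has already been produced; $\joingeneric$ must correctly merge the two partitions and then reject if the merge produces a closed component whose labels have been simultaneously forgotten in the two subterms. This yields $\beta=O(\treewidthvalue\log\treewidthvalue)$ and $\mu=\treewidthvalue^{O(\treewidthvalue)}$. For $\texttt{Hamiltonian}$ we additionally store, for each active label, its degree in the partial solution (in $\{0,1,2\}$) together with the perfect matching on degree-$1$ labels that pairs up the dangling path-ends; the $\joingeneric$ rule composes the two matchings and rejects a premature cycle. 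For $\nzflowProperty(\Z_m)$ a witness records, for each active label, the net flow into the corresponding vertex modulo $m$, giving $\beta=O(\treewidthvalue\log m)$. For $\texttt{Minor}(\bgraph)$ a witness is a partial branch-decomposition of $\bgraph$ into $\decompositiongraph{\abstractdecomposition}$, i.e.\ a function assigning to every active label a vertex of $\bgraph$ or $\bot$, a bit-vector recording which branch-sets and which edges of $\bgraph$ have already been realized, and the adjacency pattern among the partial branch-sets currently intersecting the boundary. The hard step throughout is the verification of the $\joingeneric$ transition and the design of $\cleaningfunctioncore$ so that $\mu$ matches the claimed entry in Table \ref{table:MeasuresCores}: for $\texttt{Hamiltonian}$ and the connectivity variants this is where one must either accept the $\treewidthvalue^{O(\treewidthvalue)}$ bound coming from partitions with matchings, or, if Table \ref{table:MeasuresCores} records a better bound, apply a representative-set or rank-based reduction inside $\cleaningfunctioncore$ to keep only a $2^{O(\treewidthvalue)}$-sized basis of partitions, relying on the fact that coherency is preserved as long as cleaning does not affect which boundaried graphs eventually admit a valid completion.
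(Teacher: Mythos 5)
Your overall plan coincides with the paper's proof. The appendix likewise handles the ten properties one at a time by specifying a set of local witnesses together with a predicate characterizing $\dynamizationfunction{\dpcore}{\treewidthvalue}(\abstractdecomposition)$, leaves the transition functions and the induction as routine adaptations of the vertex-cover template, and reads the table entries off the witness encoding; your witness designs for \textsc{Simple}, the degree thresholds, $c$-colorability, the connectivity family, and Hamiltonicity (including the rank-based cleaning step needed to push the multiplicity down to $2^{O(\treewidthvalue)}$) are essentially the ones the paper uses. Your $\texttt{NZFlow}(\Z_m)$ witness, which stores the net flow excess of each active vertex modulo $m$, is if anything cleaner than the paper's, which stores a value for each ordered pair of active labels and hence needs $O(\treewidthvalue^2\log m)$ bits, in tension with its own table entry of $O(\treewidthvalue\log m)$.

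The one place where your design as written would fail is $\texttt{Minor}(\bgraph)$. A function from active labels to $\vertexset{\bgraph}\cup\{\bot\}$ plus bit-vectors for realized branch sets and edges does not determine whether a partial branch set that meets the boundary in several vertices is internally connected, and this information cannot be recovered later: a branch set may currently consist of two components, each touching the boundary, that get merged only above the current node, and at forget time one must know whether the forgotten label leaves behind a component with no remaining boundary attachment. The paper therefore stores, for each $x\in\vertexset{\bgraph}$, a full connectivity witness $(\qconn_x,P_x)$ where $P_x$ is the partition of the boundary labels assigned to $x$ into the connected components of the partial branch set. These partitions are also what make the bitlength accounting delicate: encoding them naively costs $O(|\vertexset{\bgraph}|\cdot\treewidthvalue\log\treewidthvalue)$ bits, and the paper exploits their pairwise disjointness to encode the whole family as a single annotated sequence over $[\treewidthvalue+1]$, which is how it reaches the claimed $O(\treewidthvalue\log\treewidthvalue+|\vertexset{\bgraph}|+|\edgeset{\bgraph}|)$ (this also absorbs the $O(\treewidthvalue\log|\vertexset{\bgraph}|)$ cost your label-to-vertex map would otherwise incur). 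Your additional ``adjacency pattern among the partial branch-sets'' is redundant given the per-edge bit-vector you already carry, and would by itself exceed the claimed bound for sparse patterns.
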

\vspace{-10pt}
{\small
\begin{table}[h]
\centering
\begin{tabular}{|ccc|} 
\hline 
Property & $\beta(\treewidthvalue)$ & $\mu(\treewidthvalue)$ \\
\hline 
$\texttt{Simple}$ & $O(k^2)$ & $1$  \\
$\texttt{MaxDeg}_{\geq}(c)$ & $O(k\cdot \log c)$ & $1$  \\
$\texttt{MinDeg}_{\leq}(c)$ & $O(k\cdot \log c)$ & $1$ \\
$\texttt{Colorable(c)}$ & $O(k\log c)$ & $2^{O(\beta(k))}$ \\
$\texttt{Conn}$ & $O(k\log k)$ & $2^{O(\beta(k))}$ \\
$\texttt{VConn(c)}$ & $O(\log c + k\log k)$ & $2^{O(\beta(k))}$ \\
$\texttt{EConn(c)}$ & $O(\log c + k\log k)$ & $2^{O(\beta(k))}$ \\
$\texttt{Hamiltonian}$ & $O(k\log k)$ & \textcolor{red}{$2^{O(k)}$} \\
$\texttt{NZFlow}(\Z_{\flowValue})$ & $O(k\log \flowValue)$ & $2^{O(\beta(k))}$ \\
$\texttt{Minor(H)}$ & $O(\treewidthvalue\log
\treewidthvalue+|\vertexset{H}|+|\edgeset{H}|)$ &  $2^{\beta(k)}$ \\ 
 \hline 
\end{tabular}
\caption{Complexity measures for DP-cores deciding several graph 
properties. \label{table:MeasuresCores}}
\end{table}
}

The proof of Theorem \ref{theorem:Estimates} can be found in Appendix \ref{section:ProofTheoremEstimates}. 
Note that 
in the case of the DP-core $\texttt{C-Hamiltonian}$ the multiplicity $2^{O(\treewidthvalue)}$ is smaller than the trivial 
upper bound of $2^{O(\treewidthvalue\cdot \log \treewidthvalue)}$ and consequently, the deterministic state complexity 
$2^{2^{O(k)}}$ is smaller than the trivial upper bound of $2^{2^{O(\treewidthvalue\cdot \log \treewidthvalue)}}$. We note 
that the proof of this fact is a consequence of the rank-based approach developed in \cite{bodlaender2015deterministic}. 
Next, we will show how Theorem \ref{theorem:InclusionTestCombinators} together with Theorem \ref{theorem:Estimates} can be used 
to provide double-exponential upper bounds on the time necessary to verify long-standing
graph-theoretic conjectures on graphs of treewidth at most $\treewidthvalue$. If such a conjecture 
is false, then Corollary \ref{corollary:SizeCounterexample} can be used to establish an upper bound on minimum height of a term representing 
a counterexample for the conjecture.\\ 

\noindent{\bf Hadwiger's Conjecture.} This conjecture states that for each $\numbercolors\geq 1$, every graph with 
no $K_{\numbercolors+1}$-minor has a $c$-coloring \cite{hadwiger1943klassifikation}. 
This conjecture, which suggests a far-reaching generalization of the $4$-colors theorem, 
is considered to be one of the most important open problems in graph theory. The 
conjecture has been resolved in the positive for the cases $\numbercolors<6$~\cite{robertson1993hadwiger}, but 
remains open for each value of $\numbercolors\geq 6$. By Theorem \ref{theorem:Estimates},
$\texttt{Colorable}(c)$ has DP-cores of deterministic state complexity $2^{2^{O(k\log c)}}$, while 
$\texttt{Minor}(K_{c+1})$ has DP-cores of deterministic state complexity $2^{2^{O(k\log k + c^2)}}$. Therefore, by 
using Theorem \ref{theorem:InclusionTestCombinators}, we have that the case $c$ of Hadwiger's conjecture can be 
tested in time $f(\numbercolors,\treewidthvalue) = 2^{2^{O(k\log k + c^2)}}$ on graphs of treewidth at most $\treewidthvalue$. 

Using the fact for each fixed $\numbercolors\in \N$, both the existence of 
$K_{\numbercolors+1}$-minors and the existence of $\numbercolors$-colorings 
are  MSO-definable, together with the fact that the MSO theory of graphs 
of bounded treewidth is definable one can estimate 
$f(\numbercolors,\treewidthvalue)$ by writing explicitly MSO sentences
and then by bounding the running time of the decision algorithm. 
This estimate is however very large (a tower of exponentials of height 10 in $k+c$ 
suffices).
In \cite{kawarabayashi2009hadwiger} 
Karawabayshi have estimated that $f(\numbercolors,\treewidthvalue)\leq p^{p^{p^{p}}}$, where $p=(\treewidthvalue+1)^{(\numbercolors-1)}$. 
It is worth noting that our estimate of  $2^{2^{O(k\log k + c^2)}}$ obtained by a combination Theorem \ref{theorem:InclusionTestCombinators} and Theorem \ref{theorem:Estimates} improves 
significantly on both the estimate obtained using the MSO approach and the estimate provided in \cite{kawarabayashi2009hadwiger}. 
\\ 

\noindent{\bf Tutte's Flow Conjectures.}
Tutte's $5$-flow, $4$-flow, and $3$-flow conjectures are some of the most 
well studient and important open problems in graph theory. The $5$-flow conjecture
states that every bridgeleass graph $G$ has a $\Z_{5}$-flow. This conjecture is
true if and only if every $2$-edge-connected graph
has a $\Z_{5}$-flow \cite{tutte1954contribution}. By Theorem \ref{theorem:Estimates}, both $\texttt{ECon}(2)$ and 
$\texttt{NZFlow}(\Z_5)$ have coherent DP-cores of deterministic state complexity $2^{2^{O(k\log k)}}$. Since 
Tutte's $5$-flow conjecture can be expressed in terms of a Boolen combination of these properties,
we have that this conjecture can be tested in time $2^{2^{O(k\log k)}}$ on graphs of treewidth at most $\treewidthvalue$. 
The $4$-flow conjecture states that every bridgeless graph with no Petersen minor 
has a nowhere-zero $4$-flow \cite{wang2009nowhere}. Since this conjecture can be formulated using a Boolean combination of 
the properties $ECon(2)$, $\texttt{Minor}(P)$ (where $P$ is the 
Pettersen graph), and $\texttt{NZFlow}(\Z_4)$, we have that this conjecture 
can be tested in time $2^{2^{O(k\log k)}}$ on graphs of treewidth at most $\treewidthvalue$. 
Finally, Tutte's $3$-Flow conjecture states that 
every $4$-edge connected graph has a nowhere-zero $3$-flow \cite{fan1993tutte}. Similarly to the 
other cases it can be expressed as a Boolean combination of $\texttt{ECon}(4)$ 
and $\texttt{NZFlow}(\Z_3)$. Therefore, it can be tested in time $2^{2^{O(k\log k)}}$
on graphs of treewidth at most $\treewidthvalue$. 
\\ 

\noindent{\bf Barnette's Conjecture.} This conjecture states
that every $3$-connected, $3$-regular, bipartite, planar graph is 
Hamiltonian. Since a graph is bipartite if and only if it is $2$-colorable, 
and since a graph is planar if and only if it does not 
contain $K_5$ or $K_{3,3}$ as minors, Barnette's conjecture
can be stated as a combination of the cores $\texttt{VCon(3)}$, 
$\texttt{MaxDeg}_{\geq}(3)$, $\texttt{MinDeg}_{\leq}(3)$, $\texttt{Colorable}(2)$, 
$\texttt{Minor}(K_5)$ and $\texttt{Minor}(K_{3,3})$. Therefore, by Theorem \ref{theorem:InclusionTestCombinators}, 
it can be tested in time $2^{2^{O(\treewidthvalue\log \treewidthvalue)}}$ on graphs of
treewidth at most $\treewidthvalue$. 

\section{Conclusion and Future Directions}
\label{section:Conclusion}

In this work, we have introduced a general and modular framework that allows one to 
combine width-based dynamic programming algorithms for model-checking graph-theoretic properties 
into algorithms that can be used to provide a width-based attack to long-standing conjectures
in graph theory. By generality, we mean that our framework can be applied with 
respect to any treelike width measure (Definition \ref{definition:WidthMeasure}), 
including treewidth \cite{bodlaender1997treewidth}, cliquewidth \cite{courcelle2000upper}, 
and many others \cite{korach1993tree,thilikos2000constructive,chung1989graphs,thilikos2005cutwidth,andrade2019width}.
By modularity, we mean that dynamic programming cores may be developed completely independently of each other as 
if they were plugins, and then combined either with the purpose of model-checking more 
complicated graph-theoretic properties, or with the purpose of attacking a given graph theoretic conjecture. 

As a concrete example, we have shown that the validity of several longstanding graph theoretic conjectures 
can be tested on graphs of treewidth at most $\treewidthvalue$ in time double exponential in ${\treewidthvalue}^{O(1)}$.
This upper bound follow from Theorem \ref{theorem:InclusionTestCombinators} together with upper bounds established
on the bitlength/multiplicity of DP-cores deciding several well studied graph properties. 
Although still high, this upper bound improves significantly on approaches based on quantifier elimination.
This is an indication that the expertise accumulated by parameterized complexity theorists 
in the development of more efficient width-based DP algorithms for model checking graph-theoretic properties 
have also relevance in the context of automated theorem proving. It is worth noting that this is the case even 
for graph properties that are computationally easy, such as connectivity and bounded degree.  

For simplicity, have defined the notion of a treelike width measure (Definition \ref{definition:WidthMeasure}) with 
respect to graphs. Nevertheless, this notion can be directly lifted to more general classes of relational structures. 
More specifically, given a class $\mathfrak{R}$ of relational structures over some signature $\mathfrak{s}$, 
we define the notion of a {\em treelike $\mathfrak{R}$-decomposition class} as a sequence 
of triples $\decompositionclass = \{(\alphabet_{\treewidthvalue},\languageclass_{\treewidthvalue},\graphfunction_{\treewidthvalue})\}_{\treewidthvalue\in \N}$
precisely as in Definition \ref{definition:TreelikeDecompositionClass}, with the only exception that now,
$\graphfunction_{\treewidthvalue}$ is a function from $\languageclass_{\treewidthvalue}$ to 
$\mathfrak{R}$. With this adaptation, and by letting the relation $\simeq$ in Definition \ref{definition:Coherency}
denote isomorphism between $\mathfrak{s}$-structures, all results in Section \ref{section:DPCores} generalize
smoothly to relational structures from $\mathfrak{R}$. This generalization is relevant because it shows that 
our notion of width-based automated theorem proving can be extended to a much larger context than graph theory.

In the field of parameterized complexity theory, the irrelevant vertex technique is a set of theoretical tools
\cite{DBLP:journals/jct/AdlerKKLST17,robertson1986graph} that can be used to show that for certain graph
properties $\graphproperty$ there is a constant $K_{\graphproperty}$, such that if $\agraph$ is a graph of treewidth at least $K_{\graphproperty}$ then it contains an {\em irrelevant vertex} for $\graphproperty$.
More specifically, there is a vertex $x$ such that $\agraph$ belongs to $\graphproperty$ if and only if the graph $\agraph\backslash x$ obtained
by deleting $x$ from $\agraph$ belongs to $\graphproperty$. This tool, that builds on Robertson and Seymour's celebrated 
excluded grid theorem \cite{robertson1986graph,chuzhoy2015excluded} and on the flat wall theorem \cite{DBLP:conf/soda/Chuzhoy15,kawarabayashi2018new,sau2021more}, 
has found several applications 
in structural graph theory and in the development of parameterized algorithms 
\cite{AdlerGroheKreutzer2008,robertson1986graph,DBLP:conf/lics/DawarGK07,DBLP:conf/focs/KawarabayashiMR08,DBLP:journals/combinatorica/KawarabayashiK10}. 
Interestingly, the irrelevant vertex technique has also theoretical relevance in the framework of width-based automated theorem proving. 
More specifically, the existence of vertices that are irrelevant for $\graphproperty$ on graphs of treewidth at least $K_{\graphproperty}$ implies
that if there is some graph $\agraph$ that {\em does not} belong to $\graphproperty$, then there is some graph of treewidth at most $K_{\graphproperty}$ that {\em also does not} belong to 
$\graphproperty$. As a consequence, $\graphproperty$ is equal to the class of {\em all graphs} (see Section \ref{section:Preliminaries} for a precise definition of this class) 
if and only if all graphs of treewidth at most $K_{\graphproperty}$ belong to $\graphproperty$. In other words, the irrelevant vertex technique allows one to show that 
certain conjectures are true in the class of all graphs if and only if they are true in the class of graphs of treewidth at most $K$ for 
some {\em constant} $K$. This approach has been considered for instance in the study of Hadwiger's conjecture (for each fixed number of colors $c$)
\cite{kawarabayashi2009hadwiger,seymour2016hadwiger,DBLP:journals/gc/KawarabayashiM07}. Identifying further conjectures that can be studied under the 
framework of the irrelevant vertex technique would be very relevant to the framework of width-based automated theorem proving.

\section*{Acknowledgements}
We acknowledge support from the Research Council of Norway in the context of the project {\em Automated Theorem Proving from the Mindset of Parameterized Complexity Theory} (proj. no. 288761).

\bibliographystyle{plainurl} 
\bibliography{widthBasedATP}

\appendix
\newpage
\section{Proof of Theorem \ref{theorem:TreewidthTreelike}}
\label{appendix:proofTheoremTreewidthTreelike} The proof of Theorem
\ref{theorem:TreewidthTreelike} follows from results available in the
literature. A graph has cliquewidth at most $k$ if and only if it can be
defined as the graph associated with a $k$-expression as introduced in
\cite{courcelle2000upper}.
The fact that the set of all $k$-expressions is
regular follows directly by the definition of $k$-expression \cite{courcelle2000upper,DBLP:journals/tcs/CourcelleD16}. This shows that cliquewidth is a treelike width
measure. In Chapter 12 of \cite{downey2012parameterized} it is shown how to
define graphs of treewidth at most $\treewidthvalue$ using suitable parse
trees, and how to associate a graph of treewidth at most $\treewidthvalue$ to
each such parse tree. The fact that the set of all parse trees corresponding to
graphs of treewidth at most $\treewidthvalue$ is regular is a direct
consequence of the definition. This shows that treewidth is a treelike width
measure. Graphs of pathwidth at most $\treewidthvalue$ can be obtained by
considering a restricted version of the parse-trees considered in
\cite{downey2012parameterized}. This restriction preserves regularity.
Therefore, pathwidth is also a treelike width measure. Cutwidth can be shown to
be treelike using the notion of slice decompositions considered for instance in
\cite{oliveira2013subgraphs}, while the fact that carving-width is treelike
follows from the generalization of slice-decompositions to the context of trees
defined in \cite{de2016algorithmic}.
Finally, the fact that ODD-width is a treelike width measure stems from the fact 
that for each $k$ the set of all ordered decision diagrams (ODDs) of width 
at most $k$ can be defined using a finite automaton whose size depends only on $k$ \cite{andrade2019width}. 
 $\square$

\section{Formal Definition of Instructive Tree Decompositions}
\label{formalDefinitionITD}

For each $k\in \N$, we formally define the set $\allabstractdecompositionstreewidth{\treewidthvalue}$ of $k$-instructive tree decompositions as the
language of a suitable tree automaton $\treeautomaton_{\treewidthvalue}$ 
over the alphabet $\abstractalphabet{\treewidthvalue}$.
More specifically, we let $\allabstractdecompositionstreewidth{\treewidthvalue} = \lang(\treeautomaton_{\treewidthvalue})$ where
$\treeautomaton_k = (\alphabet_k,\statestreeautomaton_k,\finalstatestreeautomaton_k,\transitionstreeautomaton_k)$ is a tree automaton 
with $\statestreeautomaton_k = \finalstatestreeautomaton_k = \powerset{[\treewidthvalue+1]}$, and transitions
$$
\begin{array}{lcl}
\transitionstreeautomaton_k  & = & \{ \leaftype \rightarrow \emptyset\}\; \cup \\ 
& & \{\introvertextype{\vertexone}(\abag) \rightarrow \abag \cup \{\vertexone\} \;|\; \abag\subseteq [\treewidthvalue+1],\; 
\vertexone\in [\treewidthvalue+1]\backslash \abag\}\; \cup \; \\
& & \{\forgetvertextype{\vertexone}(\abag) \rightarrow \abag\backslash \{\vertexone\} \;|\; \abag\subseteq [\treewidthvalue+1],\; 
\vertexone\in \abag\} \; \cup \; \\ 
& & \{\introedgetype{\vertexone}{\vertextwo}(\abag) \rightarrow \abag\;|\; \abag\subseteq [\treewidthvalue+1], \vertexone,\vertextwo\in \abag, \vertexone\neq \vertextwo\} \; \cup \; \\
& & \{\jointype(\abag,\abag)\rightarrow \abag \;|\; \abag\subseteq [\treewidthvalue+1]\}.
\end{array}
$$

Intuitively, states of $\treeautomaton_{\treewidthvalue}$ are subsets of $[\treewidthvalue+1]$
corresponding to subsets of active labels. The set of transitions specify both which instructions
can be applied from a given set of active labels $\abag$, and which labels are active 
after the application of a given instruction. 
 
\begin{definition}
\label{instructiveDecomposition}
The terms in $\allabstractdecompositionstreewidth{\treewidthvalue}$ are called $\treewidthvalue$-instructive tree decompositions. Terms in $\allabstractdecompositionstreewidth{\treewidthvalue}$
that do not use the symbol $\jointype$ are called $\treewidthvalue$-instructive path decompositions. We let 
$\allabstractdecompositionspathwidth{\treewidthvalue}$ denote the set of all $\treewidthvalue$-instructive path decompositions. 
\end{definition}

Given numbers $n_1,n_2\in \N$ and a subset $Y\subseteq [n_2]$,
we let 
$\relabelingfunction[n_1,n_2,Y]:[n_2]\backslash Y\rightarrow \{n_1+1,\dots,n_1+n_2-|Y|\}$ be the unique injective function from 
$[n_2]\backslash Y$ to 
$\{n_1+1,\dots,n_1+n_2-|Y|\}$ with the property that for each $i,j\in [n_2]\backslash Y$, $i<j$ implies that $\relabelingfunction(i)<\relabelingfunction(j)$. Intuitively, for each $i\in [n_2]\backslash Y$, $\relabelingfunction[n_1,n_2,Y](i)$ is equal to $n_1$ plus the 
order of $i$ in the set $[n_2]\backslash Y$. 

Let $\treewidthvalue\in \N$. A $\treewidthvalue$-boundaried graph is a pair $(\agraph,\topmapname)$ where $\agraph$ is a graph and
$\topmapname:\bagset\rightarrow \vertexset{\agraph}$ is an injective map from some subset $\bagset\subseteq [\treewidthvalue+1]$ to
the vertex set of $\agraph$. We assume that $\vertexset{\agraph} = [n]$ for some $n\in \N$, and that $\edgeset{\agraph} = [m]$ for some $m\in\N$. 
Given $\treewidthvalue$-boundaried graphs $(\agraph_1,\topmapname_1)$ and $(\agraph_2,\topmapname_2)$
with $\domain(\topmapname_1)=\domain(\topmapname_2)$, the {\em join} of $(\agraph_1,\topmapname_1)$ and $(\agraph_2,\topmapname_2)$, 
denoted by $(\agraph_1,\topmapname_1)\oplus (\agraph_2,\topmapname_2)$, is the $\treewidthvalue$-boundaried 
graph $(\agraph,\topmapname)$ obtained by identifying, for each $\vertexone\in \bagset$, 
the vertex $\topmapname_1(\vertexone)$ of $\agraph_1$ with the vertex $\topmapname_2(\vertexone)$ of $\agraph_2$. 
More precisely, let $\vertexset{\agraph_1}=[n_1]$,
$\edgeset{\agraph_1} = [m_1]$, $\vertexset{\agraph_2}=[n_2]$,  
and $\edgeset{\agraph_2}=[m_2]$ for numbers $n_1,m_1,n_2,m_2\in \N$, and 
let $\relabelingfunction = \relabelingfunction[n_1,n_2,\topmapname_2(\topbagname)]$. Then, the $\treewidthvalue$-boundaried graph $(\agraph,\topmapname)$ is defined as follows.

\begin{enumerate}
\item $\vertexset{\agraph} = [n_1+n_2\,\backslash\,|\topmapname_2(\topbagname)|]$,
\item $\edgeset{\agraph} = [m_1+m_2]$
\item $\incidencerelation{\agraph} =  
\incidencerelation{\agraph_1} \cup
\{(e+m_1,\topmapname_1(\vertexone)) : (e,\topmapname_{2}(\vertexone))\in \incidencerelationname_2\}$ \\
\hphantom{aaaaaaaaa}$\cup\; \{(e+m_1,\relabelingfunction(x))\;:\; (e,x)\in \incidencerelationname_2, 
x\in \vertexset{\agraph_2}\backslash \image(\topmapname_2)\}$, 
\item For each $\vertexone\in [\treewidthvalue+1], \topmapname(\vertexone)=\topmapname_1(\vertexone)$. 
\end{enumerate}

For each $\treewidthvalue\in \N$, we let
$\mathcal{F}_{\treewidthvalue} = \{f:\abag\rightarrow \N\;|\; \abag\subseteq [\treewidthvalue+1],\; f\mbox{ is injective}\}$ 
be the set of injective functions from some subset $\abag\subseteq [\treewidthvalue+1]$ to $\N$. 
As a last step, we define a function 
$\graphfunction:\bigcup_{\treewidthvalue\in \N} \allabstractdecompositionstreewidth{\treewidthvalue} \rightarrow \allgraphs$ that assigns a graph 
$\graphfunction(\abstractdecomposition)$ to 
each $\abstractdecomposition\in \bigcup_{\treewidthvalue} \allabstractdecompositionstreewidth{\treewidthvalue}$. 
This function is defined inductively below, together with an auxiliary function 
$\topmapname:\bigcup_{\treewidthvalue} \allabstractdecompositionstreewidth{\treewidthvalue}\rightarrow \bigcup_{\treewidthvalue} \mathcal{F}_{\treewidthvalue}$ 
that assigns, for each $\treewidthvalue\in \N$, and each $\treewidthvalue$-instructive tree decomposition $\abstractdecomposition$, 
an injective map $\topmapname[\abstractdecomposition]:\abag\rightarrow \N$ in the set $\mathcal{F}_{\treewidthvalue}$. In this way, the 
pair $(\decompositiongraph{\abstractdecomposition},\topmapname[\abstractdecomposition])$ forms a $\treewidthvalue$-boundaried graph. 
Each element $\vertexone\in \abag$ is said to be an {\em active label} for $\decompositiongraph{\abstractdecomposition}$, and the vertex 
$\topmapname[\abstractdecomposition](\vertexone)$ is the active vertex labeled with $\vertexone$. The functions 
$\graphfunction$ and $\topmapname$ are inductively defined as follows.
Below, for each $\abstractdecomposition \in \bigcup_{\treewidthvalue} \allabstractdecompositionstreewidth{\treewidthvalue}$,
we specify the injective map $\topmapname[\abstractdecomposition]$ as a subset of pairs from $[\treewidthvalue+1]\times \N$.

\begin{enumerate}
\setlength\itemsep{0.5em}
\item If $\abstractdecomposition = \leaftype$, then $\graphfunction(\abstractdecomposition) = (\emptyset,\emptyset,\emptyset)$
 and $\topmapname[\abstractdecomposition] = \emptyset$. 
\item If $\abstractdecomposition = \introvertextype{\vertexone}(\sigmaabstractdecomposition)$ then 
$$\graphfunction(\abstractdecomposition) = (\vertexset{\graphfunction(\sigmaabstractdecomposition)} \cup 
		\{|\vertexset{\graphfunction(\sigmaabstractdecomposition)}| +1\}, \edgeset{\graphfunction(\sigmaabstractdecomposition)},
		\incidencerelation{\graphfunction(\sigmaabstractdecomposition)}),
\mbox{ and } 
\topmapname[\abstractdecomposition] = \topmapname[\sigmaabstractdecomposition] \cup 
\{(\vertexone,|\vertexset{\graphfunction(\sigmaabstractdecomposition)}|+1)\}.$$
\item If $\abstractdecomposition = \forgetvertextype{\vertexone}(\sigmaabstractdecomposition)$, then 
$\graphfunction(\abstractdecomposition) = \graphfunction(\sigmaabstractdecomposition)$, and
$\topmapname[\abstractdecomposition] = \topmapname[\sigmaabstractdecomposition] \backslash \{(\vertexone,\topmapname[\sigmaabstractdecomposition](\vertexone))\}.$

\item If $\abstractdecomposition = \introedgetype{\vertexone}{\vertextwo}(\sigmaabstractdecomposition)$, then
$$
\begin{array}l
\graphfunction(\abstractdecomposition) = (\vertexset{\graphfunction(\sigmaabstractdecomposition)},
\edgeset{\graphfunction(\sigmaabstractdecomposition)}\cup \{|\edgeset{\graphfunction(\sigmaabstractdecomposition)}|+1\}, \incidencerelation{\graphfunction(\sigmaabstractdecomposition)} \cup 
\{(|\edgeset{\graphfunction(\sigmaabstractdecomposition)}|+1,\topmapname[\sigmaabstractdecomposition](\vertexone)),(|\edgeset{\graphfunction(\sigmaabstractdecomposition)}|+1,\topmapname[\sigmaabstractdecomposition](\vertextwo))\}),
\\ 
\mbox{ and } 
\topmapname[\abstractdecomposition] = \topmapname[\sigmaabstractdecomposition].
\end{array}
$$
 
\item If $\abstractdecomposition = \jointype(\sigmaabstractdecomposition_1,\sigmaabstractdecomposition_2)\in \allabstractdecompositionstreewidth{\treewidthvalue}$, then
$(\graphfunction(\abstractdecomposition),\topmapname[\abstractdecomposition]) = (\graphfunction(\sigmaabstractdecomposition_1),\topmapname[\sigmaabstractdecomposition_1]) \oplus (\graphfunction(\sigmaabstractdecomposition_2),\topmapname[\sigmaabstractdecomposition_2])$.
\end{enumerate}

For illustration purposes, in Figure \ref{fig:my_label} we provide a step-by-step construction of the graph associated with the $2$-instructive tree decomposition of Figure \ref{figure:TreeDecomposition}. 

By letting, for each $\treewidthvalue\in \N$, $\graphfunction_{\treewidthvalue}$ be the restriction of $\graphfunction$ to the set 
$\allabstractdecompositionstreewidth{\treewidthvalue}$, 
we have that the sequence $\instructivetreedecompositionclass  = \{(\abstractalphabet{\treewidthvalue},\allabstractdecompositionstreewidth{\treewidthvalue},\graphfunction_{\treewidthvalue})\}_{\treewidthvalue\in \N}$ 
is a treelike decomposition class. We call this class the {\em instructive tree decomposition class}. 
Note that $\instructivetreedecompositionclass$ has complexity $2^{k}$, since as discussed above, for each $\treewidthvalue\in \N$, $\instructivetreedecompositionclass_{\treewidthvalue}$ is 
accepted by a tree automaton  $\treeautomaton_{\treewidthvalue}$ with $2^{\treewidthvalue}$ states. 

\begin{figure}[h]
    \centering
    \includegraphics[scale=0.8]{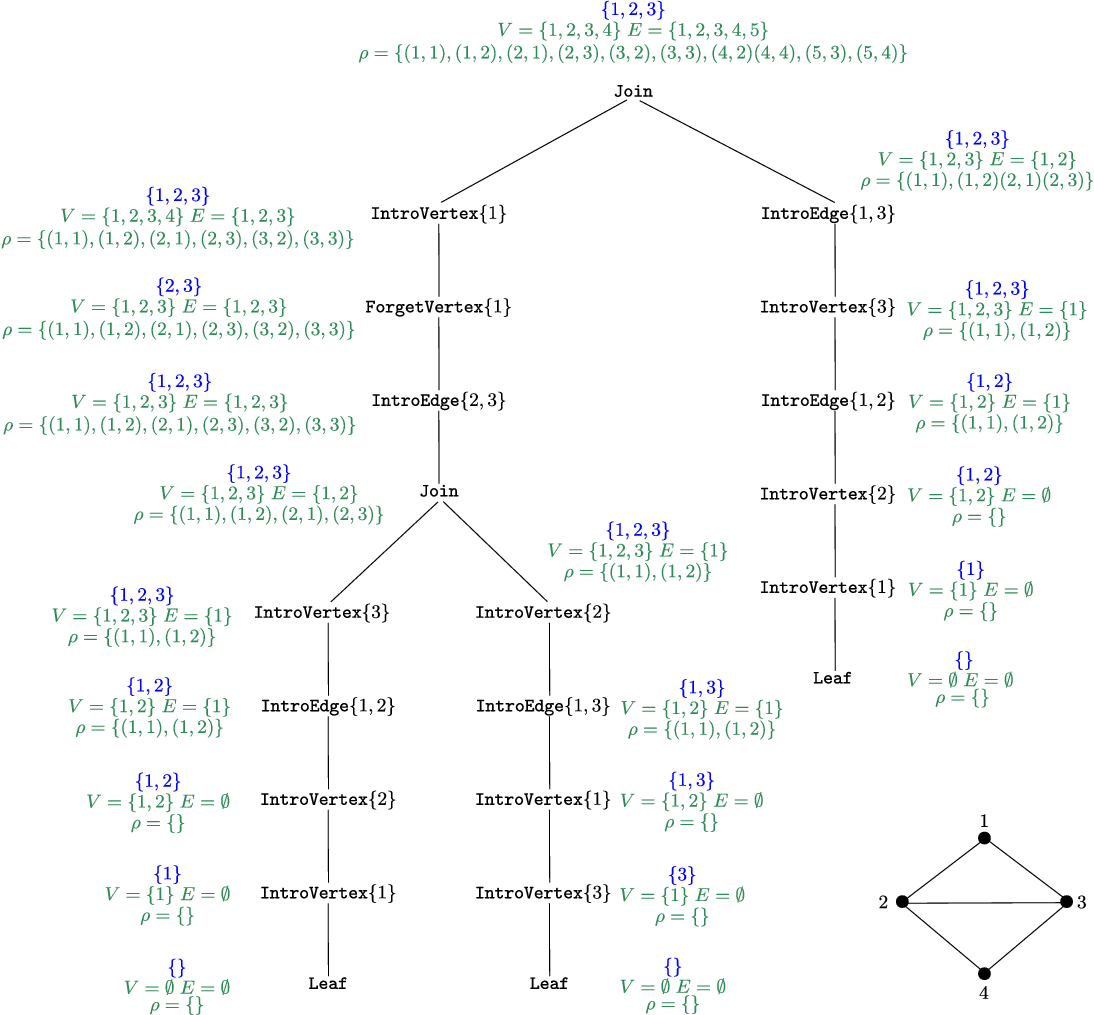}
    \caption{Construction of the graph associated to the $2$-instructive tree decomposition of Fig. \ref{figure:TreeDecomposition}. The set of active labels on each node is specified in blue, while the graph associated to the decomposition rooted at that node is specified in green. The boundary maps are omitted in the figure. Nevertheless, in each node the boundary map assigns each label $i$ to vertex $i$, except for the left child of the topmost node, where the boundary map assigns labels $1$, $2$, and $3$ to vertices $4$, $2$ and $3$ respectively.}
    \label{fig:my_label}
\end{figure}

\section{Proof of Lemma \ref{lemma:TreewidthIsTreelikeConfirmation}}
\label{section:ProofLemmaTreewidthIsTreelikeConfirmation}
\newcommand{\vertexTreeDec}{x}

Next, in Definition \ref{definition:TreeDecomposition}, we provide a slightly different, but equivalent definition of the standard notion of tree decomposition of a graph. The only difference is that we replace the condition that requires that for each edge of the graph there is a bag containing the endpoints of that edge, with a choice function $\beta$, which specifies, for each edge $\anedge$, the bag $X_{\beta(\anedge)}$ that contains the endpoints
of $\anedge$. We also assume that the tree is rooted, since choosing a root can be done without loss of generality. 

\begin{definition}[Tree Decomposition]
\label{definition:TreeDecomposition}
Let $\agraph$ be a graph. A tree decomposition of 
$\agraph$ is a triple $(\atree,X,\beta)$ where $\atree=(N,F,r)$ is a rooted tree with set of nodes $N$, set of arcs $F$, and root $r$;
$\beta:\edgeset{\agraph}\rightarrow N$ is a function mapping each edge $\anedge\in \edgeset{\agraph}$ to some node $\anode\in N$; and $X=\{X_{u}\}_{u\in N}$ is a collection of subsets of $\vertexset{\agraph}$ satisfying the following properties. 
\begin{enumerate}
    \item For each vertex $\vertexTreeDec\in \vertexset{\agraph}$, there is some $\anode\in N$ such that $\vertexTreeDec\in X_{\anode}$. 
    \item For each edge $\anedge\in \edgeset{\agraph}$, $\edgeendpoints{\agraph}(\anedge)\subseteq X_{\beta(\anedge)}$.
    \item For each vertex $\vertexTreeDec\in \vertexset{\agraph}$, the set
    $\{\anode\in N\;:\; \vertexTreeDec\in X_{\anode}\}$ induces a connected subtree of $\atree$. 
\end{enumerate}
\end{definition}

The subsets in $X$ are called {\em bags}. The width of $(\atree,X,\beta)$ is defined as the maximum size of a bag in $X$ minus one: $\mathrm{w}(\atree,X,\beta) = \max_{\anode\in N} |X_{\anode}|-1$. The treewidth of a graph $\agraph$ is the minimum width of a tree decomposition of $\agraph$. 

We say that $(\atree,X,\beta)$ is a
{\em nice edge-introducing tree decomposition of $\agraph$} if the function $\beta$ is injective, and additionally, each node $\anode\in N$ is of one of the following types. 

\begin{enumerate}
    \item $\mathtt{Leaf}$ node: $\anode$ has no child, and $X_{\anode} = \emptyset$. 
    \item $\mathtt{IntroVertex}$ node: $\anode$ has a unique child $\anode'$ and
    $X_{\anode} = X_{\anode'} \cup \{\vertexTreeDec\}$ for some vertex $\vertexTreeDec\in \vertexset{\agraph}$. We say that $\vertexTreeDec$ is {\em introduced} at $\anode$.
    \item $\mathtt{ForgetVertex}$ node: $\anode$ has a unique child $\anode'$ and $X_{\anode'}  = X_{\anode} \cup \{\vertexTreeDec\}$ for some vertex $\vertexTreeDec\in \vertexset{\agraph}$. We say that $\vertexTreeDec$ is {\em forgotten} at $\anode$.
    \item $\mathtt{IntroEdge}$ node: $\anode$ has 
    a unique child $\anode'$, $X_{\anode} = X_{\anode'}$, and $\anode = \beta(\anedge)$ for some $\anedge\in \edgeset{\agraph}$. We say that $\anedge$ is {\em introduced} at $\anode$. \item $\mathtt{Join}$ node: $\anode$ has two children $\anode'$ and $\anode''$, and 
    $X_{\anode} = X_{\anode'} = X_{\anode''}$. 
\end{enumerate}

We note that our notion of nice, edge-introducing tree decompositions of a graph is essentially identical to the notion of nice tree decompositions with introduce-edge nodes used in the literature (see \cite{cygan2015parameterized}, pages 161 and 168), except that we do not require the root bag to be empty.
It can be shown that any tree decomposition of width $\treewidthvalue$ can be efficiently transformed into a nice, edge-introducing tree decomposition of width at most $\treewidthvalue$ (see for instance Lemma 7.4 of \cite{cygan2015parameterized}). 

\begin{observation}
\label{observation:EdgeIntroducingDecompositionConversion}
Let $\agraph$ be a graph, and $(\atree,X,\beta)$ be a tree decomposition $\agraph$ of width at most $\treewidthvalue$, where $T = (N,F,r)$. Then, one can construct in time $O(\treewidthvalue^2\cdot \max\{|\vertexset{\agraph}|,|N|
\} + |\edgeset{\agraph}|)$ a nice, edge-introducing tree decomposition $(\atree',X',\beta')$ of $\agraph$ of width at most 
$\treewidthvalue$ that has  $O(\treewidthvalue\cdot|\vertexset{\agraph}|+|\edgeset{\agraph}|)$ nodes.  
\end{observation}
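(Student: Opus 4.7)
The plan is to carry out a standard three-stage transformation: first reduce the input decomposition to one with $O(|\vertexset{\agraph}|)$ nodes, then insert intro-vertex, forget-vertex, and join nodes to make it nice in the classical sense, and finally insert dedicated intro-edge nodes so that $\beta'$ becomes a function into the new node set that is injective on $\edgeset{\agraph}$.

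In the first stage, I would apply the standard compression step (as in Lemma~7.4 of~\cite{cygan2015parameterized}): contract each chain of nodes with identical bags, and delete redundant subtrees whose bags are subsumed by the parent. After this, a standard argument shows that a tree decomposition of width $\treewidthvalue$ can be chosen to have at most $O(|\vertexset{\agraph}|)$ nodes. In the second stage, I would walk through the compressed tree $\atree$ and, for every edge $(\anode,\anode')$ of $\atree$, insert a chain of intro-vertex and forget-vertex nodes whose bags differ by exactly one element, so that consecutive bags of $\atree'$ differ by at most one vertex; finally, I would duplicate each node with two children into a join node whose two children both repeat the same bag and then recurse into the respective subtrees. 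Each bag has size at most $\treewidthvalue+1$, each original edge of $\atree$ contributes $O(\treewidthvalue)$ insertion nodes, and each of these operations costs $O(\treewidthvalue)$ time, yielding $O(\treewidthvalue^{2}\cdot |\vertexset{\agraph}|)$ time so far and a nice decomposition (in the classical sense) with $O(\treewidthvalue\cdot |\vertexset{\agraph}|)$ nodes and width $\leq \treewidthvalue$.

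In the third stage, I need to introduce each edge $\anedge\in\edgeset{\agraph}$ at a unique node. For every vertex $\avertex$, the set of bags containing $\avertex$ forms a connected subtree of $\atree'$; let $\anode_{\avertex}$ be the lowest node whose bag contains $\avertex$ (the node at which $\avertex$ is introduced in the current nice decomposition). For each edge $\anedge$ with endpoints $\avertex,\bvertex$, pick the node $\anode_{\anedge}$ whose bag still contains both $\avertex$ and $\bvertex$ and lies on the path between $\anode_{\avertex}$ and $\anode_{\bvertex}$; such a node exists because the subtrees of $\avertex$ and $\bvertex$ intersect. I would then subdivide the edge between $\anode_{\anedge}$ and its parent by inserting a fresh $\mathtt{IntroEdge}$ node with the same bag $X_{\anode_{\anedge}}$, and set $\beta'(\anedge)$ to this fresh node. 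Doing this once per edge adds exactly $|\edgeset{\agraph}|$ nodes and gives injectivity of $\beta'$ by construction. To make $\beta'$ a total function satisfying the requirement that at each $\mathtt{IntroEdge}$ node some edge is introduced, we may assume every pre-existing node is of type leaf, intro-vertex, forget-vertex, or join (no intro-edge yet).

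The main obstacle will be performing the third stage within the stated budget of $O(|\edgeset{\agraph}|)$ additive time, since a naive search for $\anode_{\anedge}$ per edge would cost $\Omega(|N'|)$ each. This is handled by a single preprocessing pass over $\atree'$: during a depth-first traversal of the nice decomposition, maintain the set of vertices currently ``live'' in the bag; when a vertex $\avertex$ is introduced, record its insertion node; when a vertex is forgotten, emit all not-yet-introduced edges incident to $\avertex$ whose other endpoint is currently live, placing each such $\mathtt{IntroEdge}$ node immediately below the forget node. Since each edge is emitted once and each bag has size $\leq \treewidthvalue+1$, this costs $O(\treewidthvalue\cdot|\vertexset{\agraph}|+|\edgeset{\agraph}|)$, and the total running time becomes $O(\treewidthvalue^{2}\cdot\max\{|\vertexset{\agraph}|,|N|\}+|\edgeset{\agraph}|)$ as claimed, with $O(\treewidthvalue\cdot |\vertexset{\agraph}|+|\edgeset{\agraph}|)$ nodes overall.
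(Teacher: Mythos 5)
Your construction is the standard one; the paper does not actually prove this observation but simply points to Lemma~7.4 of Cygan et al., so your write-up is a correct expansion of exactly the route the paper has in mind, and the accounting of node count and running time is right. There is, however, one concrete corner case where the argument as literally written fails: your rule for placing $\mathtt{IntroEdge}$ nodes fires only when a vertex is \emph{forgotten}, but the paper's definition of a nice edge-introducing decomposition explicitly does \emph{not} require the root bag to be empty, so a vertex lying in the root bag is never forgotten. Consequently, an edge $\anedge=\{\avertex,\bvertex\}$ with both endpoints surviving into the root bag is never emitted, $\beta'$ is not defined on $\anedge$, and Condition~2 of Definition~\ref{definition:TreeDecomposition} is violated. (For edges with at least one forgotten endpoint your rule is fine: taking the endpoint whose subtree of bags tops out lower, the other endpoint is still live in the child bag of that forget node, so the edge is emitted exactly once.)

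The fix is trivial and worth stating: either append a chain of at most $\treewidthvalue+1$ forget-vertex nodes above the old root so that every vertex is eventually forgotten (the definition permits, but does not require, a nonempty root bag), or run one final sweep over the root bag emitting the at most $\binom{\treewidthvalue+1}{2}$ remaining edges among its vertices. Either patch costs $O(\treewidthvalue^{2})$ additional time and nodes and does not affect the claimed bounds. A second, purely cosmetic point: in the nicification stage you only binarize nodes with exactly two children; nodes of higher degree must be replaced by a binary tree of join nodes in the usual way, which still contributes only $O(|N'|)$ nodes since a tree has one fewer edge than it has nodes.
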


Now, we are in a position to prove that a graph $\agraph$ is isomorphic to the graph $\decompositiongraph{\abstractdecomposition}$ associated with some $\treewidthvalue$-instructive tree decomposition $\abstractdecomposition$ if and only if $\agraph$ has treewidth at most $\treewidthvalue$.

First, let $\abstractdecomposition$ be a  $\treewidthvalue$-instructive tree decomposition, and  $\decompositiongraph{\abstractdecomposition}$ be its associated graph. Then this graph can be obtained from smaller graphs of size at most $\treewidthvalue+1$ by successively identifying sub-graphs 
of size at most $\treewidthvalue+1$. This implies that $\decompositiongraph{\abstractdecomposition}$ is a subgraph of a $\treewidthvalue$-tree, and therefore
that $\decompositiongraph{\abstractdecomposition}$ has treewidth at most $\treewidthvalue$. In the next claim, we show how to actually extract from $\abstractdecomposition$ a nice edge-introducing tree decomposition of the graph $\decompositiongraph{\abstractdecomposition}$ of width at most $\treewidthvalue$. 

\begin{claim}
\label{claim:ConversionITD-NiceTD}
Let $\abstractdecomposition$ be a $\treewidthvalue$-instructive tree decomposition. Then, the graph $\decompositiongraph{\abstractdecomposition}$ has a nice edge-introducing tree decomposition 
$(\atree,X,\beta)$ of width at most $\treewidthvalue$ where the bag associated with the root node $r$ is the set $X_r = \topmap{\abstractdecomposition}(\topbag{\abstractdecomposition})$. 
\end{claim}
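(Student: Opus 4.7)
The plan is to prove the claim by structural induction on $\abstractdecomposition$, mirroring the recursive definition of $\graphfunction$ and $\topmap{\cdot}$ given in Section \ref{DPRealization}. The invariant I will maintain is that for every $k$-instructive tree decomposition $\abstractdecomposition$ with active labels $\abag \subseteq [k+1]$, the constructed nice edge-introducing tree decomposition $(\atree_\abstractdecomposition, X_\abstractdecomposition, \beta_\abstractdecomposition)$ of $\decompositiongraph{\abstractdecomposition}$ has the property that the bag at its root is exactly $\topmap{\abstractdecomposition}(\abag)$, and that $|X_p| \leq k+1$ for every node $p$. Since $|\abag| \leq k+1$ holds throughout the valid runs of the automaton $\treeautomaton_k$ (states are subsets of $[k+1]$), the width bound will be automatic.

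The five inductive cases are designed to match the five instructions of $\abstractalphabet{k}$ one-to-one. For $\abstractdecomposition = \leaftype$, I take a single Leaf node with empty bag. For $\abstractdecomposition = \introvertextype{\vertexone}(\sigmaabstractdecomposition)$, I take the inductively constructed decomposition of $\sigmaabstractdecomposition$ and attach a new IntroVertex root whose bag extends the previous root bag by the fresh vertex $|\vertexset{\graphfunction(\sigmaabstractdecomposition)}|+1 = \topmap{\abstractdecomposition}(\vertexone)$. For $\abstractdecomposition = \forgetvertextype{\vertexone}(\sigmaabstractdecomposition)$, I attach a ForgetVertex root that drops $\topmap{\sigmaabstractdecomposition}(\vertexone)$ — here I must check the connected-subtree condition, which follows because the forgotten vertex is only referenced in the old subtree (by induction) and will never reappear. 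For $\abstractdecomposition = \introedgetype{\vertexone}{\vertextwo}(\sigmaabstractdecomposition)$, I attach an IntroEdge root with the same bag and define $\beta_\abstractdecomposition$ to send the newly created edge to this root, leaving $\beta_\sigmaabstractdecomposition$ unchanged on the pre-existing edges; the endpoints lie in the bag by the induction invariant since $\vertexone, \vertextwo \in \abag$ in any valid run.

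The delicate case is $\abstractdecomposition = \jointype(\sigmaabstractdecomposition_1, \sigmaabstractdecomposition_2)$. The transition rule of $\treeautomaton_k$ forces both children to have the same set of active labels $\abag$, so by induction the two root bags $\topmap{\sigmaabstractdecomposition_1}(\abag)$ and $\topmap{\sigmaabstractdecomposition_2}(\abag)$ are defined on the same domain. The construction of $\oplus$ identifies, for each $\vertexone \in \abag$, the vertex $\topmap{\sigmaabstractdecomposition_1}(\vertexone)$ with $\topmap{\sigmaabstractdecomposition_2}(\vertexone)$, and renames all remaining vertices and edges of $\graphfunction(\sigmaabstractdecomposition_2)$ by shifting them by the constants $\maximumvertex, \maximumedge$. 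I will use these shifts to relabel the nodes of the second subtree's decomposition so that its root bag becomes literally equal to that of the first subtree, and then create a Join root whose bag is that common set $\topmap{\abstractdecomposition}(\abag)$, with $\beta_\abstractdecomposition$ taking the (disjoint) union of the two shifted edge-to-node functions. The hard part will be verifying the connected-subtree condition across the join: I must argue that for each vertex in the shared boundary, its occurrence set in both subtrees is connected (by induction) and contains the respective root, so after joining at the root they yield a connected subtree of $\atree_\abstractdecomposition$; for non-boundary vertices this is immediate since they live in only one of the two subtrees.

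Combining these five cases, the resulting $(\atree_\abstractdecomposition, X_\abstractdecomposition, \beta_\abstractdecomposition)$ is nice and edge-introducing by construction, its width is at most $k$ because every bag is contained in $\topmap{\sigmaabstractdecomposition'}(\abag')$ for the corresponding subterm and $|\abag'|\leq k+1$, and the invariant that the root bag equals $\topmap{\abstractdecomposition}(\topbag{\abstractdecomposition})$ gives the final conclusion of the claim. The injectivity of $\beta_\abstractdecomposition$ follows because each edge is introduced at exactly one IntroEdge node, corresponding to the unique position in $\abstractdecomposition$ where the symbol $\introedgetype{\vertexone}{\vertextwo}$ created it.
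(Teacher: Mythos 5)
Your proposal is correct and follows essentially the same route as the paper: induction on the structure (equivalently, the height) of $\abstractdecomposition$ with one case per instruction, attaching a new root node of the matching type, and in the $\jointype$ case relabelling the bags of the second subtree via the vertex-shift used by $\oplus$ so that the two root bags coincide. Your write-up is in fact somewhat more careful than the paper's, which leaves the verification of the connected-subtree condition and the injectivity of $\beta$ as "straightforward to check."
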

\begin{proof}
The proof is by induction on the height $h$ of $\abstractdecomposition$. In the base case, $h=0$. In this case, $\abstractdecomposition = \leaftype$. Therefore,
since $\decompositiongraph{\abstractdecomposition}$ is the empty graph, the claim is vacuously satisfied. Now, suppose the claim is valid for every 
$\treewidthvalue$-instructive tree decomposition $\abstractdecomposition$ of height at most $h$, and let $\abstractdecomposition$ be a $\treewidthvalue$-instructive
tree decomposition of height $h+1$. There are four cases to be analyzed. 
\begin{itemize}
\item Let  $\sigmaabstractdecomposition$ be a $\treewidthvalue$-instructive
tree decompositions of height at most $h$ and let $(T,X,\beta)$ be a nice edge-introducing tree decompositions with root $r$ and root-bag
$X_{r}$ satisfying the conditions of the claim. In this case we create a new root node $r'$ with root bag $X_{r'}$, and set $r$ as the child of $r'$.
\begin{enumerate}
\item If $\abstractdecomposition = \introvertextype{\vertexone}(\sigmaabstractdecomposition)$, we let $X_{r'} = X_{r} \cup \{\topmap{\abstractdecomposition}(\vertexone)\}$. 
\item If $\abstractdecomposition = \forgetvertextype{\vertexone}(\sigmaabstractdecomposition)$, we let $X_{r'} = X_{r}\backslash \{\topmap{\abstractdecomposition}(\vertexone)\}$. 
\item If $\abstractdecomposition = \introedgetype{\vertexone}{\vertextwo}(\sigmaabstractdecomposition)$, we let $X_{r'} = X_r$. 
\end{enumerate}
\item Let $\sigmaabstractdecomposition_1$ and $\sigmaabstractdecomposition_2$ be  $\treewidthvalue$-instructive
tree decompositions of height at most $h$ and let $(T_1,X_1,\beta_1)$ and $(T_2,X_2,\beta_2)$ be their respective nice edge-introducing tree decompositions with roots $r_1$ and $r_2$ and root-bags
$X_{r_1}$ and $X_{r_2}$ satisfying the conditions of the claim. 
\begin{enumerate}
\item[4.] Let $\abstractdecomposition = \jointype(\sigmaabstractdecomposition_1,\sigmaabstractdecomposition_2)$. In this case, we create a new root bag $r$, labeled with the bag $X_{r} = X_{r_1}$ and set 
$r_1$ and $r_2$ as the children of $r$. Additionally, in the new tree decomposition, for each node $\anode$ of $T_2$, the bag $X_{\anode}$ is replaced by the bag
$$X'_{\anode} = \{\topmap{\sigmaabstractdecomposition_1}(\vertexone)\;:\; \vertexone\in \topbag{\sigmaabstractdecomposition},\; \topmap{\sigmaabstractdecomposition_2}(\vertexone)\in X_{\anode} \}
\cup \{\relabelingfunction(x)\;:\; x\in X_{\anode}\backslash \topmap{\sigmaabstractdecomposition_2}(\topbag{\sigmaabstractdecomposition_2})\},$$
where 
$\relabelingfunction = \relabelingfunction[n_1,n_2,Y]$ is the relabeling function defined in Section \ref{formalDefinitionITD} with 
$n_1=|\vertexset{\agraph_1}|$, $n_2 = |\vertexset{\agraph_2}|$, and 
$Y = \topmap{\sigmaabstractdecomposition_2}(\topbag{\sigmaabstractdecomposition_2})$. 
\end{enumerate} 
\end{itemize}
In any of the four cases, the obtained tree decomposition $(T,X,\beta)$ is a
tree decomposition of the graph $\decompositiongraph{\abstractdecomposition}$ where the bag corresponding to the root node $r$ is the set $X_r = \topmapname[\abstractdecomposition](\topbag{\abstractdecomposition})$. 
\end{proof}

For the converse, let $\agraph$ be a graph and $\alpha:\vertexset{\agraph}\rightarrow [\treewidthvalue+1]$ be a proper $(\treewidthvalue+1)$-coloring of $\agraph$. More specifically, for each edge $\anedge\in \edgeset{\agraph}$ the endpoints of $\anedge$ are colored differently by $\alpha$. We say that $\alpha$ is bag-injective with respect to a  
nice, edge introducing tree decomposition $(\atree,X,\beta)$
of $\agraph$ if for each bag $X_{\anode}\in X$, the restriction $\alpha|_{X_{\anode}}$ is injective.

\begin{observation}
\label{observation:InjectiveColoring}
Let $\agraph$ be a graph and $(\atree,X,\beta)$ be a 
tree decomposition of $\agraph$ of width $\treewidthvalue$, where $T=(N,F,r)$. Then, one can construct in time $O(\treewidthvalue\cdot |N|)$ a proper $(\treewidthvalue+1)$-coloring $\alpha$ of $\agraph$
that  is bag-injective for $(\atree,X,\beta)$. 
\end{observation}
\begin{proof}
We will construct a coloring $\alpha:\vertexset{\agraph}\rightarrow [\treewidthvalue+1]$ of $\agraph$ that is bag-injective for $(\atree,X,\beta)$ by traversing the bags in $X$ from the root towards the leaves. We start by choosing an arbitrary injective coloring of the vertices in the root bag $X_r$. Since
$X_r$ has at most $\treewidthvalue+1$ vertices, such an injective coloring with at most $(\treewidthvalue+1)$ colors exists. 
Now assume that the vertices of a given bag $X_{\anode}$ have been injectively colored with at most $\treewidthvalue+1$ colors, and let $\anode'$ be a child of $\anode$. Then we have
three possibilities: (i) $X_{\anode'} = X_{\anode}$; (ii) $X_{\anode'} = X_{\anode} \backslash \{\vertexTreeDec\}$, for some vertex $\vertexTreeDec$; (iii) $X_{\anode'} = X_{\anode}\cup \{\vertexTreeDec\}$ for some vertex $\vertexTreeDec$. In the first two cases, an injective coloring of the vertices in $X_{\anode'}$ has already been chosen. In the last case, we just choose some arbitrary color for $\vertexTreeDec$ in the set $[\treewidthvalue+1]\backslash \alpha(X_{\anode})$. We proceed in this way until all bags have been visited. Since each vertex occurs in some bag, each vertex has received some color from the set $[\treewidthvalue+1]$. Since the endpoints of each edge
$\anedge$ are contained in some bag, and since by construction, the coloring in each bag is injective, we have that the endpoints of $\anedge$ receive distinct colors. Therefore, $\alpha$ is a proper $(\treewidthvalue+1)$-coloring of $\agraph$. 
\end{proof}

Now, let $\agraph$ be a graph of treewidth at most $\treewidthvalue$. Let $(\atree,X,\beta)$ be a nice, 
edge-introducing tree decomposition of $\agraph$ of width at most $\treewidthvalue$. Let $\alpha:\vertexset{\agraph}\rightarrow [\treewidthvalue+1]$ be a proper $(\treewidthvalue+1)$-coloring of $\agraph$ that is bag-injective for $(\atree,X,\beta)$. By Observation \ref{observation:InjectiveColoring} such a coloring exists
and can be constructed in time $O(\treewidthvalue\cdot |N|)$ where $N$ is the set of nodes of $\atree$. 
Then, a $\treewidthvalue$-instructive tree decomposition 
$\abstractdecomposition$ of $\agraph$ can be constructed as follows: We let $\abstractdecomposition$ have the
same structure as $T$, except that instead being labeled with bags, the nodes are now labeled with instructions from the alphabet $\abstractalphabet{\treewidthvalue}$.
More specifically, each leaf node is labeled with the instruction $\leaftype$. Additionally, for each non-leaf node $\anode$ of $\atree$:
If a vertex $x$ is introduced at $\anode$, then we label the node $\anode$ of $\abstractdecomposition$ with the instruction $\introvertextype{\alpha(x)}$.
If an edge $\{x,y\}$ is introduced at $\anode$, then the node $\anode$ of $\abstractdecomposition$ is labeled with the instruction $\introedgetype{\alpha(x)}{\alpha(y)}$. 
If a vertex $x$ is forgotten at $\anode$, then the node $\anode$ of $\abstractdecomposition$ is labeled with the instruction $\forgetvertextype{\alpha(x)}$.
Finally, $\anode$ is a join node, then the node $\anode$ of $\abstractdecomposition$ is labeled with the instruction  $\jointype$. 
It follows by induction on the height $\atree$ that there is an
isomorphism $\isomorphism  = (\isomorphismvertices,\isomorphismedges)$
from $\agraph$ to $\decompositiongraph{\abstractdecomposition}$ such that for each vertex $x\in \vertexset{\agraph}$, $x$ belongs to the root 
bag $X_r$ if and only $\alpha(x)\in \topbag{\abstractdecomposition}$ and 
$\topmap{\abstractdecomposition}(\alpha(x)) = \isomorphismvertices(x)$.
This concludes the proof of Lemma \ref{lemma:TreewidthIsTreelikeConfirmation}. 
$\square$

\section{Formal Definition of Internally Polynomial DP-Cores}
\label{section:formalDefinitionInternallyPolynomial} 

A DP-core $\dpcore$ is internally polynomial if 
there is an algorithm $\analgorithm$ that when given $\treewidthvalue\in \N$ as input simulates the behavior
of $\dpcore[\treewidthvalue]$ in such a way that the output of the invariant function has polynomially many bits in
the bitlength of $\dpcore[\treewidthvalue]$; $\analgorithm$ decides membership in the set $\dpcore[\treewidthvalue].\allwitnesses$ in time polynomial in $\treewidthvalue$ plus the 
size of the size of the queried string; for each symbol $a$ of arity $0$, $\analgorithm$ constructs the set $\dpcore[\treewidthvalue].\hat{a}$ 
in time polynomial in $\treewidthvalue$ plus the maximum number of bits needed to describe such a set; 
and $\analgorithm$ computes each function in $\dpcore[\treewidthvalue]$ in time polynomial in $\treewidthvalue$ plus the size of the input of 
the function.
Note that the fact that $\dpcore$ is internally polynomial does not imply that one can determine whether a given
term $\abstractdecomposition$ is accepted by $\dpcore$ in time polynomial in
$|\abstractdecomposition|$. The complexity of this test 
is governed by the bitlength and multiplicity of the DP-core in question
(see Theorem \ref{theorem:ModelChecking}).

\begin{definition}[Internally Polynomial DP-Cores]
\label{definition:PolynomialTimeComputable}
Let $\dpcore$ be a treelike DP-core. We say that $\dpcore$ is internally polynomial if the following conditions are satisfied. 
\begin{enumerate} 
\item For each $\treewidthvalue\in \N$, and each $(\dpcore,\treewidthvalue,\sizedecomposition)$-useful set $\witnessset$,
$|\dpcore[\treewidthvalue].\invariantCore(\witnessset)| = \bitlengthusefulwitnesssimple{\dpcore}(\treewidthvalue,\sizedecomposition)^{O(1)}$. 
\item There is a deterministic algorithm $\analgorithm$ such that the following conditions are satisfied. 
\begin{enumerate}
\item Given $k\in \N$, and a string $w \in \{0,1\}^*$, $\analgorithm$ decides whether $w \in \dpcore[\treewidthvalue].\allwitnesses$ in time $(\treewidthvalue+|w|)^{O(1)}$,
\item Given $\treewidthvalue\in \N$, and a symbol $\asymbol\in \alphabet_{\treewidthvalue}$ of arity $0$, the algorithm constructs the set $\dpcore[\treewidthvalue].\hat{\asymbol}$ in time 
$\bitlengthusefulwitnesssimple{\dpcore}(\treewidthvalue,0)\cdot \maxsizeusefulsetsimple{\dpcore}(\treewidthvalue,0)$. 
\item Given $\treewidthvalue\in \N$, a symbol $a\in \alphabet_k$, and an input $X$ for the function $\dpcore[\treewidthvalue].\hat{a}$, the algorithm constructs the set  
	$\dpcore[\treewidthvalue].\hat{a}(X)$ in time $(\treewidthvalue+|X|)^{O(1)}$.
\item Given $k\in \N$, an element $\mathtt{Function} \in \{\finalwitnessgenericcore,\cleaningfunctioncore,\invariantCore\}$, and an input $X$ for 
	the function $\dpcore[\treewidthvalue].\mathtt{Function}$, the algorithm computes the value $\dpcore[\treewidthvalue].\mathtt{Function}(X)$ in time 
	 $(\treewidthvalue+|X|)^{O(1)}$.
\end{enumerate}
\end{enumerate}
\end{definition}

\section{Proof of Theorem \ref{theorem:ModelChecking}}
\label{section:ProofTheoremModelChecking}
    Since $\dpcore$ is $\decompositionclass$-coherent, and since $\abstractdecomposition$ has $\decompositionclass$-width at most $\treewidthvalue$, 
    by Proposition \ref{proposition:CoherencySequenceTuples}, we have that $\decompositiongraph{\abstractdecomposition}$ belongs to 
    $\dpcoregraphpropertyclass{\dpcore}{\decompositionclass}$ if and only if $\abstractdecomposition\in \accepteddecompositionsboundedwidth{\dpcore[\treewidthvalue]}$.
    In other words, if and only if the set $\dynamizationfunction{\dpcore}{\treewidthvalue}(\abstractdecomposition)$ has some final local witness.
    Therefore, the model-checking algorithm consists of two steps: we first compute the set 
    $\dynamizationfunction{\dpcore}{\treewidthvalue}(\abstractdecomposition)$ inductively using Definition \ref{definition:Dynamization}, 
    and subsequently, we test whether this set has some final local witness. Since $|\abstractdecomposition| = n$, we have that 
    $\abstractdecomposition$ has at most $n$ sub-terms. Let $\abstractdecomposition$ be such a subterm. If $\sigmaabstractdecomposition=\asymbol$
    for some symbol of arity $0$, then the construction of the set $\dynamizationfunction{\dpcore}{\treewidthvalue}(\sigmaabstractdecomposition)$
    takes time at most $\bitlengthusefulwitnesssimple{\dpcore}(\treewidthvalue,\sizedecomposition)^{O(1)}\cdot \maxsizeusefulsetsimple{\dpcore}(\treewidthvalue,\sizedecomposition)^{O(1)}$.
    Now, suppose that $\sigmaabstractdecomposition = \asymbol(\sigmaabstractdecomposition_{1},\dots,\sigmaabstractdecomposition_{\arity(\asymbol)})$
    for some symbol $\asymbol\in \alphabet_{\treewidthvalue}$, and some terms $\sigmaabstractdecomposition_1,\dots,\sigmaabstractdecomposition_{\arity(\asymbol)}$ in
    $\Terms{\alphabet_{\treewidthvalue}}$, and assume that the sets $\dynamizationfunction{\dpcore}{\treewidthvalue}(\sigmaabstractdecomposition_1), \dots, 
    \dynamizationfunction{\dpcore}{\treewidthvalue}(\sigmaabstractdecomposition_{\arity(\asymbol)})$ have been computed. 
    We claim that using these precomputed sets, together with Equation \ref{equation:NextStepDynamization},
    the set $\dynamizationfunction{\dpcore}{\treewidthvalue}(\sigmaabstractdecomposition)$ can be constructed in time
    $\treewidthvalue^{O(1)}\cdot \arityvalue^{O(1)}\cdot \bitlengthusefulwitnesssimple{\dpcore}(\treewidthvalue,\sizedecomposition)^{O(1)} \cdot 
    \maxsizeusefulsetsimple{\dpcore}(\treewidthvalue,\sizedecomposition)^{\arityvalue+ O(1)}$. To see this, we note that in order to construct 
    $\dynamizationfunction{\dpcore}{\treewidthvalue}(\sigmaabstractdecomposition)$, 
    we need to construct, for each tuple $(\awitness_1,\awitness_2,\dots,\awitness_{\arity(\asymbol)})$ of local witnesses in 
    the Cartesian product $\dynamizationfunction{\dpcore}{\treewidthvalue}(\sigmaabstractdecomposition_1) \times \dots \times  \dynamizationfunction{\dpcore}{\treewidthvalue}(\sigmaabstractdecomposition_{\arity(\asymbol)})$, 
    the set $\dpcore[\treewidthvalue].\hat{\asymbol}(\awitness_1,\dots,\awitness_{\arity(\asymbol)})$. Since, by assumption, $\dpcore$ is internally polynomial,
    this set can be constructed in time at most $\treewidthvalue^{O(1)} \cdot  \arityvalue^{O(1)}\cdot \bitlengthusefulwitnesssimple{\dpcore}(\treewidthvalue,\sizedecomposition)^{O(1)}$. That is to say, polynomial 
    in $\treewidthvalue$ plus the size of the input to this function, which is upper bounded by $\arityvalue\cdot \bitlengthusefulwitnesssimple{\dpcore}(\treewidthvalue,\sizedecomposition)$. 
    In particular, this set has at most $\treewidthvalue^{O(1)}\cdot  \arityvalue^{O(1)} \cdot \bitlengthusefulwitnesssimple{\dpcore}(\treewidthvalue,\sizedecomposition)^{O(1)}$ local witnesses.
    Since we need to consider at most $\maxsizeusefulsetsimple{\dpcore}(\treewidthvalue,\sizedecomposition)^{\arityvalue}$ tuples, taking the union of all such sets 
    $\dpcore[\treewidthvalue].\hat{\asymbol}(\awitness_1,\dots,\awitness_{\arity(\asymbol)})$ takes time 
    $\treewidthvalue^{O(1)}\cdot\arityvalue^{O(1)}\cdot\bitlengthusefulwitnesssimple{\dpcore}(\treewidthvalue,\sizedecomposition)^{O(1)}\cdot\maxsizeusefulsetsimple{\dpcore}(\treewidthvalue,\sizedecomposition)^{\arityvalue + O(1)}$.
    Finally, once the union has been computed, since $\dpcore$ is internally polynomial, the application of the function $\dpcore[\treewidthvalue].\cleaningfunctioncore$ takes an additional (additive) factor of
    $\treewidthvalue^{O(1)}\cdot \bitlengthusefulwitnesssimple{\dpcore}(\treewidthvalue,\sizedecomposition)^{O(1)}\cdot \maxsizeusefulsetsimple{\dpcore}(\treewidthvalue,\sizedecomposition)^{O(1)}$. 
    Therefore, the overall computation of the set $\dynamizationfunction{\dpcore}{\treewidthvalue}(\abstractdecomposition)$, and subsequent determination of whether this set 
    contains a final local witness takes time 
    $$T(\treewidthvalue,\sizedecomposition) = \sizedecomposition\cdot \treewidthvalue^{O(1)}\cdot\arityvalue^{O(1)}\cdot\bitlengthusefulwitnesssimple{\dpcore}(\treewidthvalue,\sizedecomposition)^{O(1)}\cdot\maxsizeusefulsetsimple{\dpcore}(\treewidthvalue,\sizedecomposition)^{\arityvalue + O(1)}.$$
    
    For the second item, after having computed $\dynamizationfunction{\dpcore}{\treewidthvalue}(\abstractdecomposition)$ in 
    time $T(\treewidthvalue,\sizedecomposition)$, we need to compute the value of $\dpcore[\treewidthvalue].\invariantCore$ on this set.
    This takes time 
    $\treewidthvalue^{O(1)}\cdot \bitlengthusefulwitnesssimple{\dpcore}(\treewidthvalue,\sizedecomposition)^{O(1)}\cdot \maxsizeusefulsetsimple{\dpcore}(\treewidthvalue,\sizedecomposition)^{O(1)}$.
    Therefore, in overall, we need time 
    $T(\treewidthvalue,n) + 
    \treewidthvalue^{O(1)}\cdot \bitlengthusefulwitnesssimple{\dpcore}(\treewidthvalue,\sizedecomposition)^{O(1)}\cdot \maxsizeusefulsetsimple{\dpcore}(\treewidthvalue,\sizedecomposition)^{O(1)}$.
    to compute the value $\invariant[\dpcore,\decompositionclass](\decompositiongraph{\abstractdecomposition})$.
   $\square$
\section{Proof of Theorem \ref{theorem:Inclusion}}
\label{section:proofTheoremInclusionTest}
    Using breadth-first search, we successively enumerate the $(\realizationclass,\dpcore,\treewidthvalue)$-pairs that 
    can be derived using rule \ref{condition:dp-refutation-two} of Definition \ref{definition:DPRefutation}. We may assume that 
    the first traversed pairs are those in the set
    $$\{(\astate,\dpcore[\treewidthvalue].\hat{\asymbol})\; | \; \arity(\asymbol)=0,\; \asymbol\rightarrow \astate \mbox{ is a transition of $\realizationclass_{\treewidthvalue}$}\},$$ 
    which correspond to symbols of arity $0$. 	
    We run this search until either an inconsistent $(\realizationclass,\dpcore,\treewidthvalue)$-pair has been reached,
    or until there is no $(\realizationclass,\dpcore,\treewidthvalue)$-pair left to be enumerated. 
    In the first case, the obtained list of pairs
    \begin{equation}
    \label{equation:TriplesRefutation}
    \dprefutation = (\astate_1,\witnessset_1)(\astate_2,\witnessset_2)\dots(\astate_m,\witnessset_m)
    \end{equation}
    is, by construction, an $(\realizationclass,\dpcore,\treewidthvalue)$-refutation since 
    $(\astate_m,\witnessset_m)$ is inconsistent and for each $i\in [m]$, $(\astate_i,\witnessset_i)$
    has been obtained by applying the rule \ref{condition:dp-refutation-two} of Definition \ref{definition:DPRefutation}.
    In the second case, no such refutation exists. More specifically, in the beginning of the process, $\dprefutation$ is the empty list, and the pairs 
    $\{(\realizationclass_\treewidthvalue.\asymbol,\dpcore[\treewidthvalue].\hat{\asymbol}) : \; \arity(\asymbol)=0\}$ are added to a {\em buffer set} $Y$. While $Y$ is non-empty, we delete an arbitrary pair $(\astate,\witnessset)$ from $Y$ and append 
    this pair to $\dprefutation$. If the pair is inconsistent, we have constructed a refutation $\dprefutation$, and therefore, we return $\dprefutation$. 
    Otherwise, for each $(\astate',\witnessset')$ that can be obtained from $(\astate,\witnessset)$ using the rule \ref{condition:dp-refutation-two} of Definition \ref{definition:DPRefutation} (together with another pair from $\dprefutation$), we insert $(\astate',\witnessset')$ to the buffer $Y$ provided this pair is not already in $\dprefutation$. We repeat this process until either $\dprefutation$ has been returned 
    or until $Y$ is empty. In this case, we conclude that 
    $\dpcoregraphproperty{\decompositionclass_\treewidthvalue}\subseteq \dpcoregraphproperty{\dpcore[\treewidthvalue],\decompositionclass}$, and return 
    {\em Inclusion Holds}. This construction is detailed in Algorithm \ref{algorithm:InclusionTest}.
    
    Now, an upper bound on the running time of the algorithm can be established as follows. 
    Since $\realizationclass$ has complexity $f(\treewidthvalue)$, the breadth-first search constructs at most
    $f(\treewidthvalue)\cdot \delta_{\dpcore}(\treewidthvalue)$ pairs of the form $(\astate,\witnessset)$ where 
    $\astate$ is a state of $\treeautomaton_{\treewidthvalue}$, and $\witnessset\subseteq \dpcore[\treewidthvalue].\allwitnesses$. 
    Furthermore, since $\decompositionclass$ has arity at most $\arityvalue$, the creation of a new pair may require the analysis of at most 
    $f(\treewidthvalue)^{\arityvalue}\cdot \delta_{\dpcore}(\treewidthvalue)^{\arityvalue}$ tuples of previously created pairs. 
    For each such a tuple $[(\astate_1,\witnessset_1),(\astate_2,\witnessset_2),\dots,(\astate_{\arityvalue},\witnessset_{\arityvalue'})]$ with $r'\leq r$,
    the computation of the state $\astate'$ from the tuple $(\astate_1,\dots,\astate_{\arityvalue})$ takes time $f(k)^{O(1)}$ and,
    as argued in the proof of Theorem \ref{theorem:ModelChecking}, the computation of the set 
    $\witnessset' = \dpcore[\treewidthvalue].\hat{\asymbol}(\witnessset_{j_1},\dots,\witnessset_{j_{\arity(a)}})$ from the 
    tuple $(\witnessset_{j_1},\dots,\witnessset_{j_{\arity(a)}})$ takes time
    $\treewidthvalue^{O(1)}\cdot\arityvalue^{O(1)}\cdot\bitlengthusefulwitnesssimple{\dpcore}(\treewidthvalue)^{O(1)}\cdot
    \maxsizeusefulsetsimple{\dpcore}(\treewidthvalue)^{\arityvalue + O(1)}$. 
    Therefore, the whole process takes time at most 
    \begin{equation}
    \label{equation:Expression}
    \treewidthvalue^{O(1)}\cdot\arityvalue^{O(1)}\cdot\bitlengthusefulwitnesssimple{\dpcore}(\treewidthvalue)^{O(1)}\cdot
    \maxsizeusefulsetsimple{\dpcore}(\treewidthvalue)^{\arityvalue + O(1)} \cdot f(\treewidthvalue)^{\arityvalue+O(1)} 
    \cdot \numberusefulsetssimple{\dpcore}(\treewidthvalue)^{\arityvalue + O(1)}.
    \end{equation}
    
    By assumption, $f(k)\geq k$. Additionally, 
    $\maxsizeusefulsetsimple{\dpcore}(\treewidthvalue)\leq 2^{\bitlengthusefulwitnesssimple{\dpcore}(\treewidthvalue)}$ 
    and $\numberusefulsetssimple{\dpcore}(\treewidthvalue)  \leq  2^{\bitlengthusefulwitnesssimple{\dpcore}(\treewidthvalue)\cdot \maxsizeusefulsetsimple{\dpcore}(\treewidthvalue)+1}$. Therefore, Expression \ref{equation:Expression} can be simplified to 
    $$f(\treewidthvalue)^{O(r)}\cdot 2^{O(r\cdot \bitlengthusefulwitnesssimple{\dpcore}(\treewidthvalue)\cdot \maxsizeusefulsetsimple{\dpcore}(\treewidthvalue))} 
    \leq f(\treewidthvalue)^{O(r)}\cdot 2^{r\cdot 2^{O(\bitlengthusefulwitnesssimple{\dpcore}(\treewidthvalue))}}.$$ 
    $\square$
    
\SetKwComment{Comment}{/* }{ */}
\SetKwInOut{Input}{Input}
\SetKwInOut{Output}{Output}
\begin{algorithm}[h]
\caption{Inclusion Test}
\label{algorithm:InclusionTest}
\Input{An automation $\realizationclass$ for $\decompositionclass$, a finite, $\decompositionclass$-coherent treelike DP-core $\dpcore$, and an integer $\treewidthvalue\in \N$.}
\Output{An $(\realizationclass,\dpcore,\treewidthvalue)$-refutation $R$ if 
$\dpcoregraphproperty{\decompositionclass_\treewidthvalue}\nsubseteq \dpcoregraphproperty{\dpcore[\treewidthvalue],\decompositionclass}$, 
and "Inclusion Holds", otherwise.}
$\dprefutation \gets [\;]$ \Comment*[r]{$[\;]$ is the empty list.}
 $\hat{\dprefutation}\gets \{\;\}$ \Comment*[r]{$\{\;\}$ is the empty set.}
	$Y \gets \{(\astate,\dpcore.\hat{\asymbol}) : \; \arity(\asymbol)=0,\; a\rightarrow q \mbox{ is a transition of $\realizationclass_{\treewidthvalue}$}.\}$\;
\label{AlgoLineY}
$\mathrm{InconsistentPair} \gets \mathrm{false}$\;
\While{$Y\neq \emptyset$ and $\mathrm{InconsistentPair}= \mathrm{false}$
\label{AlgoWhileLoop}
}{
Remove some pair $(\astate,\witnessset)$ from $Y$, append it to $\dprefutation$, and insert it in $\hat{\dprefutation}$\;
\eIf{$(\astate,\witnessset)$ 
        is an inconsistent  $(\realizationclass,\dpcore,\treewidthvalue)$-pair,}{
        \Return $\dprefutation$ \Comment*[r]{In this case, $\dprefutation$ is a $(\realizationclass,\dpcore,\treewidthvalue)$-refutation.}
        }{
        \ForEach{$\asymbol \in \alphabet_\treewidthvalue$}{
		\ForEach{sequence $(\astate_{1},\witnessset_{1})\dots(\astate_{{\arity(\asymbol)}},\witnessset_{{\arity(\asymbol)}}) \mbox{ of pairs from } \hat{\dprefutation}^{\arity(\asymbol)}$
		having at least one occurrence of $(\astate,\witnessset)$}{
			$(\astate',\witnessset') \leftarrow (\asymbol(\astate_1,\astate_2,\dots,\astate_{\arity(\asymbol)}),\dpcore[\treewidthvalue].\hat{\asymbol}(\witnessset_1,\witnessset_2,\dots,\witnessset_{\arity(\asymbol)}))$\; 
		\lIf {$(\astate',\witnessset')$ is not in $\hat{\dprefutation}$}{
			$Y \leftarrow Y \cup \{(\astate',\witnessset')\}$
                }
            }
        }
        
        }
}
\Return "Inclusion Holds."
\end{algorithm}

\section{Example: A Coherent DP-Core for $\vertexcoverProperty_{\vertexcoverParameter}$}
\label{subsection:DPCoreVertexCover}

Let $\agraph=(\vertexsetname,\edgesetname,\incidencerelationname)$ be a graph.
A subset $\vertexcover$ of $\vertexsetname$ is a vertex cover of $\agraph$ if
every edge of $\agraph$ has at least one endpoint in $\vertexcover$. We let
$\vertexcoverProperty_{\vertexcoverParameter}$ be the graph property
consisting of all graphs that have a vertex cover of size at most
$\vertexcoverParameter$. 

In this section, for illustration purposes, we formally define a coherent DP-core
deciding the property $\vertexcoverProperty_{\vertexcoverParameter}$, and provide a proof of correctness for this DP-core.  
More specifically, let $\instructivetreedecompositionclass =
\{(\alphabet_k,\allabstractdecompositionstreewidth{\treewidthvalue},\graphfunction_{\treewidthvalue})\}_{\treewidthvalue\in
\N}$ be the decomposition class defined in Section \ref{DPRealization}, which
realizes treewidth. We will define an
$\instructivetreedecompositionclass$-coherent treelike DP-core
$\vertexcoverCore_{\vertexcoverParameter}$, with graph property
$\dpcoregraphproperty{\vertexcoverCore_{\vertexcoverParameter},\instructivetreedecompositionclass}
= \vertexcoverProperty_{\vertexcoverParameter}$.
It is enough to specify, for each $\treewidthvalue\in \N$, the components of $\vertexcoverCore_{\vertexcoverParameter}[\treewidthvalue]$.
A local witness for $\vertexcoverCore[\treewidthvalue]$ is a pair
$\awitness=(\vertexcoverset,\vertexcoversize)$ where $\vertexcoverset \subseteq
[\treewidthvalue+1]$ and $\vertexcoversize \in \N$. Intuitively,
$\vertexcoverset$ denotes the set of active labels associated with vertices of
a partial vertex cover, and $\vertexcoversize$ denotes the size of the partial
vertex cover. Therefore, we set
$$\vertexcoverCore_{\vertexcoverParameter}[\treewidthvalue].\allwitnesses =  \{(\vertexcoverset,\vertexcoversize)\;:\; 
\vertexcoverset \subseteq [\treewidthvalue+1],\;\vertexcoversize \in \{0,1,\dots,\vertexcoverParameter\}\}.$$
In this specific DP-core, each local witness is final. In other words, for each local 
witness $\awitness$, we have 
$$\vertexcoverCore_r[\treewidthvalue].\finalwitnessgenericcore(\awitness) = 1.$$
If $\witnessset$ is a set of local witnesses, and $(R,s)$ and $(R,s')$ are local witnesses in $S$ with 
$s<s'$, then $(R,s')$ is redundant.
The clean function of the DP-core takes a set of local 
witnesses as input and removes redundancies. More precisely, 
$$\vertexcoverCore_r[\treewidthvalue].\cleaningfunctioncore(\witnessset)=\{(R,s)\in S: \; \nexists s'<s,\; (R,s')\in S\}.$$
The invariant function of the core takes a set of witnesses as input and returns the smallest value $s$ with the property 
that there is some subset $R\subseteq [\treewidthvalue+1]$ with $(R,s)\in \witnessset$. If $S=\emptyset$, it returns the value $\infty$, indicating that the minimum size of a vertex cover in the graph is greater than $r$.  
$$
\vertexcoverCore_r[\treewidthvalue].\invariantCore(\witnessset)= 
\left\{
\begin{array}{l}
\min\{\vertexcoversize :\; \exists \vertexcoverset \text{ s.t. } (\vertexcoverset,\vertexcoversize)\in \witnessset \} \mbox{ if $S\neq \emptyset$.} \\ 
\infty \mbox{ if $S=\emptyset$}. 
\end{array}
\right.$$
Next, we define the transition functions of the DP-core.

\begin{definition}
\label{definition:vertex-cover}
	Let $\awitness= (\vertexcoverset,\vertexcoversize)$ and $\awitness'= (\vertexcoverset',\vertexcoversize')$ be local witnesses, and $\vertexone,\vertextwo\in [\treewidthvalue+1]$ be such that $\vertexone\neq \vertextwo$. 
\begin{enumerate} 
\setlength\itemsep{1em}
\item \label{definition:vertex-cover-leaf} $\vertexcoverCore_r[\treewidthvalue].\initialsetgenericcore = \{(\emptyset,0)\}$.
\item \label{definition:vertex-cover-introvertex} $\vertexcoverCore_r[\treewidthvalue].\introvertextype{\vertexone}(\awitness) = 
\{\awitness\}$.
\item \label{definition:vertex-cover-forgetvertex} $\vertexcoverCore_r[\treewidthvalue].\forgetvertextype{\vertexone}(\awitness) = \{(\vertexcoverset\setminus\{\vertexone\},\vertexcoversize)\}$. 
\item \label{definition:vertex-cover-introedge} $\vertexcoverCore_r[\treewidthvalue].\introedgegeneric{\vertexone}{\vertextwo}(\awitness) = 
\begin{cases}
	\{\awitness\} \hspace{4.4cm} \text{if } \vertexone \text{ or }\vertextwo \in \vertexcoverset,\\
\emptyset \hspace{3.6cm} \text{if }\vertexone,\vertextwo\notin \vertexcoverset \text{ and }\vertexcoversize = \vertexcoverParameter, \\
\{(\vertexcoverset\cup\{\vertexone\},\vertexcoversize+1),(\vertexcoverset\cup\{\vertextwo\},\vertexcoversize+1)\} 
\mbox{ otherwise.}
\end{cases}$
\item \label{definition:vertex-cover-join} $\vertexcoverCore_r[\treewidthvalue].\joingenericcore(\awitness,\awitness')= 
\begin{cases}
\{(\vertexcoverset\cup\vertexcoverset',\vertexcoversize+\vertexcoversize' - |\vertexcoverset\cap\vertexcoverset'|)\} & \text{If } \vertexcoversize+\vertexcoversize' - |\vertexcoverset\cap\vertexcoverset'| \leq \vertexcoverParameter, \\ 
\{\}  & \text{otherwise.}
\end{cases}
$
\end{enumerate}
\end{definition}

It should be clear that $\vertexcoverCore_{r}[\treewidthvalue]$ is
both finite and internally polynomial. The next lemma characterizes, 
for each $\treewidthvalue$-instructive tree decomposition $\abstractdecomposition$, 
the local witnesses $\awitness$ that are present in the set 
$\dynamizationfunction{\vertexcoverCore_r}{\treewidthvalue}(\abstractdecomposition)$. Given a $\treewidthvalue$-instructive tree decomposition $\abstractdecomposition$, we let $(\decompositiongraph{\abstractdecomposition},\topmap{\abstractdecomposition})$ be the 
$\treewidthvalue$-boundaried graph defined by $\abstractdecomposition$ according to the construction in Appendix \ref{formalDefinitionITD}. We
let $\topbag{\abstractdecomposition}$ denote the domain of $\topmap{\abstractdecomposition}$, that is to say, the set of active labels corresponding to the root node of $\abstractdecomposition$. We also let 
$\topmap{\abstractdecomposition}(\topbag{\abstractdecomposition})$ be the image of the map $\topmap{\abstractdecomposition}$, that is to say, the set of vertices of $\decompositiongraph{\abstractdecomposition}$ corresponding to 
labels in $\topbag{\abstractdecomposition}$. 

\begin{lemma}
\label{predicateLemma}
Let $\vertexcoverParameter\in \N$. For each $\treewidthvalue\in \N$, each $\treewidthvalue$-instructive tree decomposition $\abstractdecomposition$, and each 
local witness $\awitness = (\vertexcoverset,\vertexcoversize)$ in $\vertexcoverCore_{r}[\treewidthvalue].\allwitnesses$,
$\awitness\in \dynamizationfunction{\vertexcoverCore_r}{\treewidthvalue}(\abstractdecomposition)$ if and only if the following predicate is satisfied: 
\begin{itemize}
	\item $\vertexcoverPred_{\vertexcoverParameter}[\treewidthvalue](\abstractdecomposition,\awitness) \equiv $
\begin{enumerate}
	\item \label{condition:predicate-vertex-cover-one} $\vertexcoversize \leq \vertexcoverParameter$,
	\item \label{condition:predicate-vertex-cover-two} $\vertexcoversize$ is the minimum size of a vertex cover $\vertexcover$ in $\decompositiongraph{\abstractdecomposition}$ 
	with $\topmap{\abstractdecomposition}(\vertexcoverset) = \vertexcover \cap \topmap{\abstractdecomposition}(\topbag{\abstractdecomposition})$.
\end{enumerate}
\end{itemize}
\end{lemma}

The proof of Lemma \ref{predicateLemma} follows by induction on the structure of $\abstractdecomposition$. For completeness, and also for illustration purposes,
a detailed proof can be found in Appendix \ref{section:ProofPredicateLemma}. Lemma \ref{predicateLemma} implies that $\vertexcoverCore_r$ is coherent and that 
for each $\treewidthvalue\in \N$, the graph property $\dpcoregraphproperty{\vertexcoverCore_r[\treewidthvalue],\instructivetreedecompositionclass}$ 
is the set of all graphs of treewidth at most $\treewidthvalue$ with a vertex cover of size at most $r$.

\begin{theorem}
\label{theorem:CorrectnessVertexCover}
For each $r\in \N$, the DP-core $\vertexcoverCore_r$ is coherent. Additionally, for each $\treewidthvalue\in \N$, 
$\dpcoregraphproperty{\vertexcoverCore_r[\treewidthvalue],\instructivetreedecompositionclass} = \vertexcoverProperty_r \cap \allgraphstreewidth{\treewidthvalue}$. 
\end{theorem}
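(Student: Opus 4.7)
The plan is to derive both assertions of Theorem \ref{theorem:CorrectnessVertexCover} as corollaries of the characterization supplied by Lemma \ref{predicateLemma}, which pins down exactly which local witnesses belong to $\dynamizationfunction{\vertexcoverCore_r}{\treewidthvalue}(\abstractdecomposition)$ in terms of minimum vertex covers of $\decompositiongraph{\abstractdecomposition}$ compatible with a prescribed boundary intersection. Given that characterization, the proof splits cleanly into the two obligations: (a) verify the two coherency conditions of Definition~\ref{definition:Coherency}, and (b) establish the set equality $\dpcoregraphproperty{\vertexcoverCore_r[\treewidthvalue],\instructivetreedecompositionclass} = \vertexcoverProperty_r\cap \allgraphstreewidth{\treewidthvalue}$.

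For coherency, let $\abstractdecomposition\in \allabstractdecompositionstreewidth{\treewidthvalue}$ and $\abstractdecomposition'\in \allabstractdecompositionstreewidth{\treewidthvalue'}$ be such that $\decompositiongraph{\abstractdecomposition}\simeq \decompositiongraph{\abstractdecomposition'}$. Since every local witness is final by construction, $\abstractdecomposition\in \accepteddecompositionsboundedwidth{\vertexcoverCore_r[\treewidthvalue]}$ iff $\dynamizationfunction{\vertexcoverCore_r}{\treewidthvalue}(\abstractdecomposition)\neq \emptyset$. By Lemma \ref{predicateLemma}, nonemptiness of this set is equivalent to the existence of some $\vertexcoverset\subseteq \topbag{\abstractdecomposition}$ that can be extended to a vertex cover of $\decompositiongraph{\abstractdecomposition}$ of size at most $\vertexcoverParameter$, which in turn is equivalent to $\decompositiongraph{\abstractdecomposition}$ having a vertex cover of size at most $\vertexcoverParameter$. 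Since having a vertex cover of size at most $\vertexcoverParameter$ is a graph-isomorphism invariant, Condition~\ref{coherencyOne} of Definition~\ref{definition:Coherency} holds. For Condition~\ref{coherencyTwo}, Lemma~\ref{predicateLemma} together with the definition of $\vertexcoverCore_r[\treewidthvalue].\invariantCore$ show that $\vertexcoverCore_r[\treewidthvalue].\invariantCore(\dynamizationfunction{\vertexcoverCore_r}{\treewidthvalue}(\abstractdecomposition))$ equals the minimum size of a vertex cover of $\decompositiongraph{\abstractdecomposition}$ of size at most $\vertexcoverParameter$ (and is $\infty$ otherwise), again an isomorphism invariant. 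The two invariant values therefore agree.

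For the graph-property equality, the forward inclusion follows directly: if $\agraph\in \dpcoregraphproperty{\vertexcoverCore_r[\treewidthvalue],\instructivetreedecompositionclass}$, then there is $\abstractdecomposition\in \allabstractdecompositionstreewidth{\treewidthvalue}\cap \accepteddecompositionsboundedwidth{\vertexcoverCore_r[\treewidthvalue]}$ with $\graphfunction(\abstractdecomposition)\simeq \agraph$. By Lemma~\ref{lemma:TreewidthIsTreelikeConfirmation}, $\agraph\in \allgraphstreewidth{\treewidthvalue}$, and by Lemma~\ref{predicateLemma} the nonemptiness of $\dynamizationfunction{\vertexcoverCore_r}{\treewidthvalue}(\abstractdecomposition)$ guarantees a vertex cover of size at most $\vertexcoverParameter$ in $\decompositiongraph{\abstractdecomposition}$, hence in $\agraph$. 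For the converse, given $\agraph\in \vertexcoverProperty_{\vertexcoverParameter}\cap \allgraphstreewidth{\treewidthvalue}$, Lemma~\ref{lemma:TreewidthIsTreelikeConfirmation} yields some $\abstractdecomposition\in \allabstractdecompositionstreewidth{\treewidthvalue}$ with $\graphfunction(\abstractdecomposition)\simeq \agraph$, and Lemma~\ref{predicateLemma} applied to any vertex cover of $\agraph$ of size at most $\vertexcoverParameter$ produces a witness in $\dynamizationfunction{\vertexcoverCore_r}{\treewidthvalue}(\abstractdecomposition)$, so $\abstractdecomposition$ is accepted.

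The genuinely substantive step here is Lemma~\ref{predicateLemma}, whose proof (deferred to Appendix~\ref{section:ProofPredicateLemma}) is the inductive grind on the structure of $\abstractdecomposition$; once that predicate is in hand, the present theorem is essentially a bookkeeping exercise. The only delicate point to flag in the argument above is that different decompositions of the same graph have different $\topmap$ functions, so the precise set of surviving pairs $(\vertexcoverset,\vertexcoversize)$ differs between $\abstractdecomposition$ and $\abstractdecomposition'$; what saves coherency is that both the acceptance predicate (nonemptiness) and the invariant function (minimum over $\vertexcoversize$) project out the boundary-dependent coordinate $\vertexcoverset$, leaving only graph-intrinsic quantities.
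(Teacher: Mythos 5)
Your proposal is correct and follows essentially the same route as the paper: both reduce the theorem to the characterization in Lemma \ref{predicateLemma}, observe that every local witness is final so acceptance amounts to nonemptiness of the dynamization set, and conclude that acceptance is equivalent to the (isomorphism-invariant) property of having a vertex cover of size at most $r$. Your write-up is somewhat more explicit than the paper's (which does not separately verify the two coherency conditions or invoke Lemma \ref{lemma:TreewidthIsTreelikeConfirmation} by name), and your closing remark that coherency is saved because acceptance and the invariant both project out the boundary-dependent coordinate $\vertexcoverset$ is an accurate observation the paper leaves implicit.
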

\begin{proof}
Let $\treewidthvalue\in \N$, $\agraph$ be a graph of treewidth at most $\treewidthvalue$, and 
$\abstractdecomposition\in \allabstractdecompositionstreewidth{\treewidthvalue}$ be such that $\decompositiongraph{\abstractdecomposition}\simeq \agraph$. 
Then, by Lemma \ref{predicateLemma}, $\dynamizationfunction{\vertexcoverCore_r}{\treewidthvalue}(\abstractdecomposition)$
is nonempty (i.e. has some local witness) if and only if $\agraph$ has a vertex cover of size at most $r$. Since, for this specific DP-core, 
every local witness is final, we have that $\abstractdecomposition\in \accepteddecompositionsboundedwidth{\vertexcoverCore_r[\treewidthvalue]}$ if and only
if $\agraph$ has a vertex cover of size at most $r$. Therefore, $\agraph\in\dpcoregraphproperty{\vertexcoverCore_r[\treewidthvalue],\instructivetreedecompositionclass}$ if and only
if $\agraph$ has a vertex cover of size at most $r$. 
\end{proof}

Now, consider the DP-core $\vertexcoverCore$ (i.e. without the subscript $r$) where for each $\treewidthvalue\in \N$, all 
components are identical to the components of $\vertexcoverCore_r$, except for the local witnesses $(\vertexcoverset,\vertexcoversize)$, 
where now $\vertexcoversize$ is allowed to be any number in $\N$, and 
for the edge introduction function 
$\vertexcoverCore[\treewidthvalue].\introedgegeneric{\vertexone}{\vertextwo}$ which is defined as follows on 
each local witness $\awitness = (\vertexcoverset,\vertexcoversize)$. 
$$\vertexcoverCore[\treewidthvalue].\introedgegeneric{\vertexone}{\vertextwo}(\awitness) = 
\begin{cases}
\{\awitness\} &  \text{if } \vertexone \text{ or }\vertextwo \in \vertexcoverset,\\
\{(\vertexcoverset\cup\{\vertexone\},\vertexcoversize+1),(\vertexcoverset\cup\{\vertextwo\},\vertexcoversize+1)\} &  \mbox{otherwise.}
\end{cases}$$

Then, in this case, the core is no longer finite because one can impose no bound on the value $s$ of a local witness $(R,s)$. 
Still, we have that the multiplicity of $\vertexcoverCore[\treewidthvalue]$ is bounded by $2^{\treewidthvalue+1}$ (i.e, a function of $\treewidthvalue$ only) because the clean function eliminates redundancies.
Additionally, one can show that this variant actually computes the minimum size of a vertex cover in the graph represented by a $\treewidthvalue$-instructive tree decomposition. Below, we let
$\dpcore = \vertexcoverCore$ and $\invariant[\dpcore,\decompositionclass]:\dpcoregraphpropertyclass{\dpcore}{\decompositionclass}\rightarrow \{0,1\}^*$ be the graph invariant computed by $\dpcore$.

\begin{theorem}
\label{theorem:InvariantComputation}
Let $\abstractdecomposition$ be a $\treewidthvalue$-instructive tree decomposition. Then, $\invariant[\dpcore,\decompositionclass](\decompositiongraph{\abstractdecomposition})$ is the (binary encoding of) the minimum size of a vertex cover in $\decompositiongraph{\abstractdecomposition}$. 
\end{theorem}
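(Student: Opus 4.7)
The plan is to mirror the proof of Lemma~\ref{predicateLemma} in the unrestricted setting, and then show that taking the minimum over the resulting set of witnesses yields exactly the minimum vertex cover size.

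First, I would establish the following analog of Lemma~\ref{predicateLemma}: for every $\treewidthvalue\in\N$, every $\treewidthvalue$-instructive tree decomposition $\abstractdecomposition$, and every local witness $\awitness=(\vertexcoverset,\vertexcoversize)$ in $\dpcore[\treewidthvalue].\allwitnesses$, it holds that
$\awitness\in \dynamizationfunction{\dpcore}{\treewidthvalue}(\abstractdecomposition)$ if and only if $\vertexcoversize$ is the minimum size of a vertex cover $\vertexcover$ of $\decompositiongraph{\abstractdecomposition}$ with $\topmap{\abstractdecomposition}(\vertexcoverset)=\vertexcover\cap \image(\topmap{\abstractdecomposition})$. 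The proof proceeds by induction on the structure of $\abstractdecomposition$, and for each of the five inductive cases (leaf, introduce vertex, forget vertex, introduce edge, join) the argument is almost identical to the corresponding case in the proof of Lemma~\ref{predicateLemma}, with the sole change that the guard $\vertexcoversize\leq\vertexcoverParameter$ is simply dropped. The role of the clean function here is to keep, for each subset $\vertexcoverset\subseteq[\treewidthvalue+1]$, only the unique pair $(\vertexcoverset,\vertexcoversize)$ with smallest $\vertexcoversize$, which is exactly what is needed so that the edge-introduction and join rules produce the minimum-size extension for each ``signature'' $\vertexcoverset$.

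Next, I would instantiate the characterization at the root node of $\abstractdecomposition$. Because the lemma provides, for every $\vertexcoverset$ that is realized as $\vertexcover\cap\image(\topmap{\abstractdecomposition})$ for some vertex cover $\vertexcover$, the minimum size of a vertex cover with that trace, ranging over all such $\vertexcoverset$ and taking the minimum of $\vertexcoversize$ yields precisely the minimum size of any vertex cover of $\decompositiongraph{\abstractdecomposition}$. The whole vertex set is always a vertex cover, so the set $\dynamizationfunction{\dpcore}{\treewidthvalue}(\abstractdecomposition)$ is nonempty and the invariant $\dpcore[\treewidthvalue].\invariantCore(\dynamizationfunction{\dpcore}{\treewidthvalue}(\abstractdecomposition)) = \min\{\vertexcoversize : \exists \vertexcoverset,\,(\vertexcoverset,\vertexcoversize)\in \dynamizationfunction{\dpcore}{\treewidthvalue}(\abstractdecomposition)\}$ is exactly the minimum vertex cover size of $\decompositiongraph{\abstractdecomposition}$. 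Since this value depends only on the graph and not on the chosen decomposition, coherence of $\dpcore$ follows as in the proof of Theorem~\ref{theorem:CorrectnessVertexCover}, and hence $\invariant[\dpcore,\instructivetreedecompositionclass](\decompositiongraph{\abstractdecomposition})$ is well-defined and equal to the minimum vertex cover size.

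The main obstacle is the inductive step for $\introedgetype{\vertexone}{\vertextwo}$ combined with the clean function: one has to verify that branching into $(\vertexcoverset\cup\{\vertexone\},\vertexcoversize+1)$ and $(\vertexcoverset\cup\{\vertextwo\},\vertexcoversize+1)$ whenever neither $\vertexone$ nor $\vertextwo$ lies in $\vertexcoverset$ preserves the ``minimum per signature'' invariant after cleaning. The subtlety is that a witness $(\vertexcoverset,\vertexcoversize)$ that was optimal before the edge introduction may turn out to be non-optimal for the signature $\vertexcoverset\cup\{\vertexone\}$ after the introduction, because a different precursor witness $(\vertexcoverset',\vertexcoversize')$ with $\vertexone\in\vertexcoverset'$ may map to the same signature with a smaller cost; however the subsequent application of $\dpcore[\treewidthvalue].\cleaningfunctioncore$ keeps only the minimum, so the invariant is restored. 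A symmetric argument handles the join case, where one must check that the expression $\vertexcoversize+\vertexcoversize'-|\vertexcoverset\cap\vertexcoverset'|$ accounts for the vertices shared by the two merged boundaries without double counting, and that minimization across all pairs of precursors with matching traces yields the minimum-size combined vertex cover for the resulting signature.
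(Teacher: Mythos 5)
Your proposal is correct and follows exactly the route the paper itself indicates: prove the analog of Lemma~\ref{predicateLemma} for the unbounded core $\vertexcoverCore$ with the guard $\vertexcoversize\leq\vertexcoverParameter$ dropped, then read off the minimum over all signatures at the root via $\invariantCore$. The paper omits the details with precisely this justification, so your fleshed-out version (including the observation about how the clean function restores the per-signature minimality after edge introduction and join) is a faithful expansion of the intended argument.
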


We omit the proof of Theorem \ref{theorem:InvariantComputation} given that the proof is very similar to the proof of Theorem \ref{theorem:CorrectnessVertexCover}. 
In particular, this theorem is a direct consequence of an analog of Lemma \ref{predicateLemma} where 
$\vertexcoverCore_r$ is replaced by $\vertexcoverCore$ and the predicate $\texttt{P-VertexCover}_r$  is replaced by 
the predicate $\texttt{P-VertexCover}$ obtained by omitting the first condition  ($s\leq r$).

\subsection{Proof of Lemma \ref{predicateLemma}}
\label{section:ProofPredicateLemma}

\newcommand{\vertexcoverPredicate}{\texttt{P-VertexCover}}

In this section, we prove Lemma \ref{predicateLemma} by induction on the 
height of a $\treewidthvalue$-instructive tree decomposition $\abstractdecomposition$.
Although the proof is straightforward, it serves as an illustration of how the 
objects $\topbag{\abstractdecomposition}$, $\decompositiongraph{\abstractdecomposition}$, 
$\topmap{\abstractdecomposition}$ and $\dynamizationfunction{\dpcore}{\treewidthvalue}(\abstractdecomposition)$
interact with each other in an inductive proof of correctness of a DP-core $\dpcore$. In particular,
the structure of this proof can be adapted to provide inductive proofs for the correctness of DP-cores
deciding other graph properties. \\

\noindent{\bf Base Case: } In the base case, the height of $\abstractdecomposition$
is $0$, and therefore $\abstractdecomposition = \leaftype$. By Definition \ref{definition:Dynamization} and
Definition \ref{definition:vertex-cover}.\ref{definition:vertex-cover-leaf},
$\dynamizationfunction{\vertexcoverCore_{\vertexcoverParameter}}{\treewidthvalue}(\abstractdecomposition) = \vertexcoverCore_{\vertexcoverParameter}[\treewidthvalue].\initialsetgenericcore = \{(\emptyset,0)\}$. 
Since $\decompositiongraph{\abstractdecomposition} = (\emptyset,\emptyset,\emptyset)$, we have that $\emptyset$ is the only vertex cover of $\decompositiongraph{\abstractdecomposition}$, 
and since this vertex cover has size $0$, $\awitness = (\emptyset,0)$ is the only local witness with $\vertexcoverPredicate_r(\abstractdecomposition,\awitness)=1$. Therefore, the lemma
holds in the base case. \\ 
		
\noindent{\bf Inductive Step: }	
Now, assume that the lemma holds for every $\treewidthvalue$-instructive tree decomposition of height at most $h$. Let $\abstractdecomposition$ be 
a $\treewidthvalue$-instructive tree decomposition of height $h+1$, and 
$\awitness = (\vertexcoverset,\vertexcoversize)$ be a local witness in $\vertexcoverCore_{\vertexcoverParameter}[\treewidthvalue].\allwitnesses$. 
There are four cases to be considered. 
\begin{enumerate} 
\item \label{case:introvertex}
	Let $\abstractdecomposition = \introvertextype{\vertexone}(\sigmaabstractdecomposition)$. Suppose $\awitness\in \dynamizationfunction{\vertexcoverCore_{\vertexcoverParameter}}{\treewidthvalue}(\abstractdecomposition)$.
By Definition \ref{definition:Dynamization} and Definition \ref{definition:vertex-cover}.\ref{definition:vertex-cover-introvertex}, we have that 
$\awitness$ also belongs to $\dynamizationfunction{\vertexcoverCore_{\vertexcoverParameter}}{\treewidthvalue}(\sigmaabstractdecomposition)$. 
By the induction hypothesis, $\vertexcoverPredicate_r[\treewidthvalue](\sigmaabstractdecomposition,\awitness) = 1$.
Then, $\vertexcoversize\leq \vertexcoverParameter$, and $\vertexcoversize$ is the minimum size of a vertex cover $\vertexcover$ in $\decompositiongraph{\sigmaabstractdecomposition}$
with $\topmap{\sigmaabstractdecomposition}(\vertexcoverset) = \vertexcover \cap \topmap{\sigmaabstractdecomposition}(\topbag{\sigmaabstractdecomposition})$. This implies that
$\vertexcover$ is also a vertex cover in $\decompositiongraph{\abstractdecomposition}$ with
$\topmap{\abstractdecomposition}(\vertexcoverset) = \vertexcover \cap \topmap{\abstractdecomposition}(\topbag{\abstractdecomposition})$. 
Therefore, $\vertexcoverPredicate_r[\treewidthvalue](\abstractdecomposition,\awitness) = 1$. 

For the converse, let $\vertexcoverPredicate_r[\treewidthvalue](\abstractdecomposition,\awitness) = 1$.
Then, there is a vertex cover $\vertexcover$ of $\decompositiongraph{\abstractdecomposition}$ of size $\vertexcoversize$ where $\topmap{\abstractdecomposition}(\vertexcoverset) = \vertexcover \cap \topmap{\abstractdecomposition}(\topbag{\abstractdecomposition})$ and $\vertexcoversize$ is the minimum size of such vertex cover. Since $\edgesetname_{\decompositiongraph{\abstractdecomposition}} = \edgesetname_{\decompositiongraph{\sigmaabstractdecomposition}}$, the vertex cover $\vertexcover$ is a vertex cover of $\decompositiongraph{\sigmaabstractdecomposition}$ such that $\topmap{\sigmaabstractdecomposition}(\vertexcoverset) = \vertexcover \cap \topmap{\sigmaabstractdecomposition}(\topbag{\sigmaabstractdecomposition})$. Therefore, 
$\vertexcoverPredicate_{\vertexcoverParameter   }[\treewidthvalue](\sigmaabstractdecomposition,\awitness) = 1$. By the induction hypothesis,
$\awitness \in \dynamizationfunction{\vertexcoverCore_{\vertexcoverParameter}}{\treewidthvalue}(\sigmaabstractdecomposition)$. By Definition \ref{definition:vertex-cover}.\ref{definition:vertex-cover-introvertex}, $\vertexcoverCore_{\vertexcoverParameter}[\treewidthvalue].\introvertextype{\vertexone}(\awitness) = \{\awitness\}$. Therefore, $\awitness \in \dynamizationfunction{\vertexcoverCore_{\vertexcoverParameter}}{\treewidthvalue}(\abstractdecomposition)$.

\item Let $\abstractdecomposition = \forgetvertextype{\vertexone}(\sigmaabstractdecomposition)$. Suppose $\awitness \in \dynamizationfunction{\vertexcoverCore_{\vertexcoverParameter}}{\treewidthvalue}(\abstractdecomposition)$. By Definition \ref{definition:vertex-cover}.\ref{definition:vertex-cover-forgetvertex}, there exists
	$\awitness' = (\vertexcoverset',\vertexcoversize')\in \dynamizationfunction{\vertexcoverCore_{\vertexcoverParameter}}{\treewidthvalue}(\sigmaabstractdecomposition)$ such that \\ 
$\vertexcoverCore_{\vertexcoverParameter}[\treewidthvalue].\forgetvertextype{\vertexone}(\awitness') = \{\awitness\}$. Note that $\vertexcoverset'\setminus \{\vertexone\} = \vertexcoverset $ and $\vertexcoversize ' = \vertexcoversize$.
By the induction hypothesis, $\vertexcoverPredicate_{\vertexcoverParameter}[\treewidthvalue](\sigmaabstractdecomposition,\awitness') = 1$, and therefore, there exists a vertex cover $\vertexcover$ of minimum size
	 $\vertexcoversize'$ in $\decompositiongraph{\sigmaabstractdecomposition}$ where $\topmap{\sigmaabstractdecomposition}(\vertexcoverset') = \vertexcover \cap \topmap{\sigmaabstractdecomposition}(\topbag{\sigmaabstractdecomposition})$.
Since $\decompositiongraph{\sigmaabstractdecomposition} = \decompositiongraph{\abstractdecomposition}$, the set $\vertexcover$ is a vertex cover of $\decompositiongraph{\abstractdecomposition}$. 
We can infer the following by the fact that $\topbag{\abstractdecomposition } = \topbag{\sigmaabstractdecomposition}\setminus\{\vertexone\}$, $\vertexcoverset = \vertexcoverset'\setminus \{\vertexone\}$, and $\topmap{\abstractdecomposition} = \topmap{\sigmaabstractdecomposition}|_{\topbag{\abstractdecomposition}}$:
\begin{equation}\label{equation:vertex-cover-predicate-forgetvertex-one}
   \topmap{\abstractdecomposition}(\vertexcoverset) = \vertexcover \cap \topmap{\abstractdecomposition}(\topbag{\abstractdecomposition}).
\end{equation}
Since $\vertexcover$ is a vertex cover of minimum size in $\decompositiongraph{\sigmaabstractdecomposition}$ such that  $\topmap{\sigmaabstractdecomposition}(\vertexcoverset') = \vertexcover \cap \topmap{\sigmaabstractdecomposition}(\topbag{\sigmaabstractdecomposition})$ and since $\decompositiongraph{\abstractdecomposition} = \decompositiongraph{\sigmaabstractdecomposition}$, we have that $\vertexcover$ is also a vertex cover of minimum size in $\decompositiongraph{\abstractdecomposition}$ that satisfies Equation \ref{equation:vertex-cover-predicate-forgetvertex-one}. 
Consequently, 
$\vertexcoverPredicate_{\vertexcoverParameter}[\treewidthvalue](\abstractdecomposition,\awitness) = 1$.

For the converse, suppose $\vertexcoverPredicate_{\vertexcoverParameter}[\treewidthvalue](\abstractdecomposition,\awitness) = 1$.
Then,
there is a vertex cover $\vertexcover$ of minimum size such that  $\topmap{\abstractdecomposition}(\vertexcoverset) = \vertexcover \cap \topmap{\abstractdecomposition}(\topbag{\abstractdecomposition})$ and $|\vertexcover| = \vertexcoversize \leq \vertexcoverParameter$. Since $\decompositiongraph{\abstractdecomposition} = \decompositiongraph{\sigmaabstractdecomposition}$, the set $\vertexcover$ is a vertex cover of $\decompositiongraph{\sigmaabstractdecomposition}$. There are two cases to be considered:
\begin{enumerate}
    \item $\topmap{\sigmaabstractdecomposition}(\vertexone) \in \vertexcover$. In this case, 
    let $\awitness'= (\vertexcoverset',\vertexcoversize)$ where $\vertexcoverset' = \vertexcoverset \cup \{\vertexone\}$.
    \item $\topmap{\sigmaabstractdecomposition}(\vertexone) \notin \vertexcover$. In this case, let $\awitness' =  (\vertexcoverset',\vertexcoversize)$ where $\vertexcoverset' = \vertexcoverset$.
\end{enumerate}
		In both cases we will show that $\awitness' \in \dynamizationfunction{\vertexcoverCore_{\vertexcoverParameter}}{\treewidthvalue}(\sigmaabstractdecomposition)$.
  Recall that $\topbag{\abstractdecomposition } = \topbag{\sigmaabstractdecomposition}\setminus\{\vertexone\}$ and $ \topmap{\abstractdecomposition} = \topmap{\sigmaabstractdecomposition}|_{\topbag{\abstractdecomposition}}$, and therefore we have that
\begin{equation}\label{equation:vertex-cover-predicate-forgetvertex-two}
   \topmap{\sigmaabstractdecomposition}(\vertexcoverset') = \vertexcover \cap \topmap{\sigmaabstractdecomposition}(\topbag{\sigmaabstractdecomposition}).
\end{equation}
Equation \ref{equation:vertex-cover-predicate-forgetvertex-two} holds for the two cases and $\vertexcover$ is a vertex cover of minimum size satisfying Equation \ref{equation:vertex-cover-predicate-forgetvertex-two}.
		Therefore, $\vertexcoverPredicate_{\vertexcoverParameter}[\treewidthvalue](\sigmaabstractdecomposition,\awitness') = 1 $, and by the induction hypothesis, $\awitness' \in \dynamizationfunction{\vertexcoverCore_{\vertexcoverParameter}}{\treewidthvalue}(\sigmaabstractdecomposition)$.
		By Definition \ref{definition:vertex-cover}.\ref{definition:vertex-cover-forgetvertex}, $\vertexcoverCore_{\vertexcoverParameter}[\treewidthvalue].\forgetvertextype{\vertexone}(\awitness') = \{(\vertexcoverset'\setminus \{\vertexone\},\vertexcoversize)\}$ and we have that $(\vertexcoverset'\setminus \{\vertexone\},\vertexcoversize) = \awitness$, and therefore,
$\awitness\in \dynamizationfunction{\vertexcoverCore_{\vertexcoverParameter}}{\treewidthvalue}(\abstractdecomposition)$.

\item Let $\abstractdecomposition = \introedgetype{\vertexone}{\vertextwo}(\sigmaabstractdecomposition)$.
Suppose $\awitness \in \dynamizationfunction{\vertexcoverCore_{\vertexcoverParameter}}{\treewidthvalue}(\sigmaabstractdecomposition)$. By Definition \ref{definition:vertex-cover}.\ref{definition:vertex-cover-introedge}, there exists $\awitness'=(\vertexcoverset',\vertexcoversize') \in \dynamizationfunction{\vertexcoverCore_{\vertexcoverParameter}}{\treewidthvalue}(\sigmaabstractdecomposition)$ where \\ 
$\vertexcoverCore_{\vertexcoverParameter}[\treewidthvalue].\introedgetype{\vertexone}{\vertextwo}(\awitness') = \{\awitness\}$ and either $(\vertexcoverset,\vertexcoversize)  = (\vertexcoverset',\vertexcoversize')$ or $(\vertexcoverset,\vertexcoversize) = (\vertexcoverset'\cup\{\xvertex\},\vertexcoversize'+1)$ where $\xvertex\in\{\vertexone,\vertextwo\}$.
By the induction hypothesis, $\vertexcoverPredicate_{\vertexcoverParameter}[\treewidthvalue](\sigmaabstractdecomposition,\awitness') = 1$, and therefore, there is a vertex cover $\vertexcover$ of minimum size where $\topmap{\sigmaabstractdecomposition}(\vertexcoverset') = \vertexcover \cap \topmap{\sigmaabstractdecomposition}(\topbag{\sigmaabstractdecomposition})$.
We consider two cases and define $\vertexcover'$ with regards to these cases.
\begin{enumerate}
    \item $\vertexcoverset = \vertexcoverset'\cup\{\xvertex\}$ where $\xvertex \in \{\vertexone,\vertextwo\}$. In this case let $\vertexcover' = \vertexcover \cup \{\topmap{\abstractdecomposition}(\xvertex)\}$.
    \item $\vertexcoverset = \vertexcoverset'$. In this case let $\vertexcover' = \vertexcover$.
\end{enumerate}
Note that $\edgesetname_{\decompositiongraph{\abstractdecomposition}} = \edgesetname_{\decompositiongraph{\sigmaabstractdecomposition}}\cup \{\anedge\}$ where $(\anedge,\topmap{\abstractdecomposition}(\vertexone)), (\anedge,\topmap{\abstractdecomposition}(\vertextwo)) \in \incidencerelationname_{\decompositiongraph{\abstractdecomposition}}$ so that $\vertexcover'$ is a vertex cover of $\decompositiongraph{\abstractdecomposition}$. Recall that $\topbag{\abstractdecomposition} = \topbag{\sigmaabstractdecomposition}$ and $\topmap{\abstractdecomposition} = \topmap{\sigmaabstractdecomposition}$. We can infer the following equation by the fact that we have mentioned.
\begin{equation}\label{equation:vertex-cover-predicate-introedge-one}
    \topmap{\abstractdecomposition}(\vertexcoverset) = \vertexcover' \cap\topmap{\abstractdecomposition}(\topbag{\abstractdecomposition})
\end{equation}
If we show $\vertexcover'$ is a vertex cover of minimum size satisfying \ref{equation:vertex-cover-predicate-introedge-one}, then we have proved the statement. Next, we will prove the vertex cover $\vertexcover'$ is minimum size. Suppose $\vertexcover' = \vertexcover$ (the second case), then obviously $\vertexcover'$ is minimum size. 
Otherwise, suppose $\vertexcover' = \vertexcover \cup \{\topmap{\abstractdecomposition}(\xvertex)\}$ (the first case) and suppose $\vertexcover'$ is not a vertex cover of minimum size and there is a vertex cover $|\vertexcover''| < |\vertexcover'|$ where $\topmap{\abstractdecomposition}(\vertexcoverset) = \vertexcover'' \cap \topmap{\abstractdecomposition}(\topbag{\abstractdecomposition})$. Therefore, we have $\topmap{\abstractdecomposition}(\vertexcoverset') = (\vertexcover''\setminus\{\topmap{\abstractdecomposition}(\xvertex)\}) \cap \topmap{\abstractdecomposition}(\topbag{\sigmaabstractdecomposition})$ and we know $|\vertexcover''\setminus\{\topmap{\abstractdecomposition}(\xvertex)\}|<|\vertexcover|$, and therefore, this contradicts the assumption that  $\vertexcover$ is the minimum size vertex cover which satisfies 
the second condition of the definition of $\texttt{P-VertexCover}_r$.

For the converse, suppose $\vertexcoverPredicate_{\vertexcoverParameter}[\treewidthvalue](\abstractdecomposition,\awitness) = 1$. 
Then, there exists a vertex cover $\vertexcover$ where $\topmap{\abstractdecomposition}(\vertexcoverset) = \vertexcover \cap \topmap{\abstractdecomposition}(\topbag{\abstractdecomposition})$ where $\vertexcover$ is the minimum size of such vertex cover. In the following let $\anedge $ be the edge that $\topmap{\abstractdecomposition}(\vertexone)$ and $\topmap{\abstractdecomposition}(\vertextwo)$ are the endpoints of it.
\begin{enumerate}
    \item $\vertexcover\setminus \{\topmap{\abstractdecomposition}(\xvertex)\}$ where $\xvertex\in\{\vertexone,\vertextwo\} $ is a vertex cover of $\decompositiongraph{\abstractdecomposition} \setminus \{\anedge\}$. In this case let $\awitness' = (\vertexcoverset',\vertexcoversize')$ where $\vertexcoverset' = \vertexcoverset\setminus \{\vertexone,\vertextwo\}$ and $\vertexcoversize' = \vertexcoversize - 1$, and let $\vertexcover' = \vertexcover\setminus \{\topmap{\abstractdecomposition}(\xvertex)\} $.
    \item $\vertexcover\setminus \{\topmap{\abstractdecomposition}(\xvertex)\}$ where $\xvertex\in\{\vertexone,\vertextwo\}$ is not a vertex cover of $\decompositiongraph{\abstractdecomposition} \setminus \{\anedge\}$. In this case, we set $\awitness'=(\vertexcoverset',\vertexcoversize')$ where $\vertexcoverset'=\vertexcoverset$ and $\vertexcoversize' = \vertexcoversize$, and let $\vertexcover' = \vertexcover$.
\end{enumerate}
In both cases, we show that $\vertexcoverPredicate_{\vertexcoverParameter}[\treewidthvalue](\sigmaabstractdecomposition,\awitness') = 1$. 
The condition $\vertexcoversize'\leq \vertexcoverParameter$, is satisfied. Since $\decompositiongraph{\abstractdecomposition}  = \decompositiongraph{\sigmaabstractdecomposition}\setminus \{\anedge\}$, $\vertexcover'$ in both cases is a vertex cover of $\decompositiongraph{\sigmaabstractdecomposition}$. We have $\topbag{\abstractdecomposition} = \topbag{\sigmaabstractdecomposition}$ and $\topmap{\abstractdecomposition} = \topmap{\sigmaabstractdecomposition}$, and therefore in the both cases the following equation is satisfied.

\begin{equation}\label{equation:vertex-cover-predicate-introedge-two}
\topmap{\sigmaabstractdecomposition}(\vertexcoverset') = \vertexcover' \cap \topmap{\sigmaabstractdecomposition}(\topbag{\sigmaabstractdecomposition})
\end{equation}
Now it remains to show that $\vertexcover'$ is a minimum size vertex cover satisfying Equation \ref{equation:vertex-cover-predicate-introedge-two}. If $\vertexcover' = \vertexcover$, then since $\vertexcover$ has minimum size, $\vertexcover'$ has minimum size. 
Now, let $\vertexcover' = \vertexcover\setminus \{\topmap{\abstractdecomposition}(\xvertex)\}$ and suppose $\vertexcover'$ does not have minimum size. Then there exists a vertex cover $\vertexcover''$ where $\topmap{\sigmaabstractdecomposition}(\vertexcoverset') = \vertexcover'' \cap\topmap{\sigmaabstractdecomposition}(\topbag{\sigmaabstractdecomposition})$ such that $|\vertexcover''| < |\vertexcover'| $, and consequently, we have $|\vertexcover''\cup\{\topmap{\abstractdecomposition}(\xvertex)\}| < |\vertexcover|$ and  $\topmap{\abstractdecomposition}(\vertexcoverset) = (\vertexcover''\cup\{\topmap{\abstractdecomposition}(\xvertex)\}) \cap \topmap{\abstractdecomposition}(\topbag{\abstractdecomposition})$ and this contradicts the assumption that $\vertexcover$ has minimum size. Therefore, $\vertexcover'$ has minimum size, and consequently, $\vertexcoverPredicate_{\vertexcoverParameter}[\treewidthvalue](\sigmaabstractdecomposition,\awitness')=1$. By the induction hypothesis, $\awitness'\in \dynamizationfunction{\vertexcoverCore_{\vertexcoverParameter}}{\treewidthvalue}(\sigmaabstractdecomposition)$. By Definition \ref{definition:vertex-cover}.\ref{definition:vertex-cover-introedge}, $\vertexcoverCore_{\vertexcoverParameter}[\treewidthvalue](\awitness') = \{\awitness\}$, and therefore, $\awitness \in \dynamizationfunctionname{\vertexcoverCore_{\vertexcoverParameter}}{\treewidthvalue}(\abstractdecomposition)$.

\item Let $\abstractdecomposition = \jointype(\sigmaabstractdecomposition_1,\sigmaabstractdecomposition_2)$. Suppose $\awitness \in \dynamizationfunction{\vertexcoverCore_{\vertexcoverParameter}}{\treewidthvalue}(\abstractdecomposition)$. Let $\awitness_1=(\vertexcoverset_1,\vertexcoversize_1)\in \dynamizationfunction{\vertexcoverCore_{\vertexcoverParameter}}{\treewidthvalue}(\sigmaabstractdecomposition_1)$ and $\awitness_2=(\vertexcoverset_2,\vertexcoversize_2)\in \dynamizationfunction{\vertexcoverCore_{\vertexcoverParameter}}{\treewidthvalue}(\sigmaabstractdecomposition_2)$ be local witnesses such that $\vertexcoverCore_{\vertexcoverParameter}[\treewidthvalue].\jointype(\awitness_1,\awitness_2)=\{\awitness\}$. By Definition \ref{definition:vertex-cover}.\ref{definition:vertex-cover-join}, $\vertexcoverset = \vertexcoverset_1\cup\vertexcoverset_2$ and $\vertexcoversize = \vertexcoversize_1+\vertexcoversize_2-|\vertexcoverset_1\cap\vertexcoverset_2|$ where $\vertexcoversize \leq \vertexcoverParameter$.
	By the induction hypothesis, since $\awitness_i \in \dynamizationfunction{\vertexcoverCore_{\vertexcoverParameter}}{\treewidthvalue}(\sigmaabstractdecomposition_i)$, we have $\vertexcoverPredicate_{\vertexcoverParameter}[\treewidthvalue](\sigmaabstractdecomposition_i,\awitness_i) = 1$ for $i\in \{1,2\}$.
Therefore, for each $i\in \{1,2\}$, 
there exists a vertex cover $\vertexcover_i$ of minimum size in $\decompositiongraph{\sigmaabstractdecomposition_i}$ where $\topmap{\sigmaabstractdecomposition_i}(\vertexcoverset_i) = \vertexcover_i \cap \topmap{\sigmaabstractdecomposition_i}(\topbag{\sigmaabstractdecomposition_i})$.
Let $\vertexcover = \vertexcover_1 \cup \{ \relabelingfunction(x)\; : \; x\in \vertexcover_2\setminus\topmap{\sigmaabstractdecomposition_2}(\topbag{\sigmaabstractdecomposition_2}) \}$ where $\relabelingfunction = \relabelingfunction[n_1,n_2,Y]$ is the relabeling function from Section 
\ref{formalDefinitionITD} with $n_1 = |\vertexset{\decompositiongraph{\sigmaabstractdecomposition_1}}|$, $n_2 = |\vertexset{\decompositiongraph{\sigmaabstractdecomposition_2}}|$, and $Y = \topbag{\sigmaabstractdecomposition_2}$. 
Let $\zeta:\vertexset{\decompositiongraph{\sigmaabstractdecomposition_2}}\to \vertexset{\decompositiongraph{\abstractdecomposition}}$ be the map 
such that for each $\xvertex\in \vertexset{\decompositiongraph{\sigmaabstractdecomposition_2}}$
\[
\zeta(\xvertex)= 
\begin{cases}
	\topmap{\abstractdecomposition}\big(\topmap{\sigmaabstractdecomposition_2}^{-1}(\xvertex)\big) & \mbox{ if } \xvertex \in \topmap{\sigmaabstractdecomposition_2}(\topbag{\sigmaabstractdecomposition_2})\\
	\relabelingfunction(\xvertex) & \mbox{ if } \xvertex \notin \topmap{\sigmaabstractdecomposition_2}(\topbag{\sigmaabstractdecomposition_2})
\end{cases}
\]

By the fact that $\topbag{\abstractdecomposition} = \topbag{\sigmaabstractdecomposition_2}$ and that $\vertexcoverset_2\subseteq \topbag{\sigmaabstractdecomposition_2}$, we can infer that 
$\zeta(\topmap{\sigma_2}(\vertexcoverset_2)) = \topmap{\abstractdecomposition}(\vertexcoverset_2)$, $ \zeta(\topmap{\sigmaabstractdecomposition_2}(\topbag{\sigmaabstractdecomposition_2})) = \topmap{\abstractdecomposition}(\topbag{\abstractdecomposition})$, $\zeta(\vertexcover_2) = (\vertexcover\setminus \vertexcover_1)\cup \topmap{\abstractdecomposition}(\vertexcoverset_2)$, and $\vertexcover = \vertexcover_1 \cup \zeta(\vertexcover_2)$. Therefore, we have the following.
\begin{equation}\label{equation:vertex-cover-predicate-join-one}
    \begin{array}{cl}
        &\zeta(\topmap{\sigmaabstractdecomposition_2}(\vertexcoverset_2)) = \zeta(\vertexcover_2)\cap \zeta(\topmap{\sigmaabstractdecomposition_2}(\topbag{\sigmaabstractdecomposition_2}))
         \\
         \implies& \topmap{\abstractdecomposition}(\vertexcoverset_2) = \zeta(\vertexcover_2) \cap \topmap{\abstractdecomposition}(\topbag{\abstractdecomposition}).  \\
         
    \end{array}
\end{equation}
The set $\vertexcover$ is a vertex cover of $\decompositiongraph{\abstractdecomposition}$ since each edge is either in $\decompositiongraph{\sigmaabstractdecomposition_1}$ or $\decompositiongraph{\sigmaabstractdecomposition_2}$. 
If an edge is in $\decompositiongraph{\sigmaabstractdecomposition_1}$, then at least one of its endpoints is in $\vertexcover_1$ and if an edge is in $\decompositiongraph{\sigmaabstractdecomposition_2}$, then at least one of its endpoints is in $\zeta(\vertexcover_2)$. 
By the fact $\topmap{\sigmaabstractdecomposition_1}(\vertexcoverset_1) = \vertexcover_1 \cap \topmap{\sigmaabstractdecomposition_1}(\topbag{\sigmaabstractdecomposition_1})$, $\topbag{\abstractdecomposition} = \topbag{\sigmaabstractdecomposition_1} = \topbag{\sigmaabstractdecomposition_2}$, $\topmap{\abstractdecomposition} = \topmap{\sigmaabstractdecomposition_1}$, and Equation \ref{equation:vertex-cover-predicate-join-one}, we can infer the following. 
\begin{equation} \label{equation:vertex-cover-predicate-join-two}
    \begin{array}{l}
	    \topmap{\abstractdecomposition}(\vertexcoverset_1)\cup
	    \topmap{\abstractdecomposition}(\vertexcoverset_2)= 
	    \big(\vertexcover_1\cap\topmap{\abstractdecomposition}(\topbag{\abstractdecomposition})
	    \big) \cup \big(\zeta(\vertexcover_2)\cap
	    \topmap{\abstractdecomposition}(\topbag{\abstractdecomposition})\big)
	    \\ \implies
	    \topmap{\abstractdecomposition}(\vertexcoverset_1)\cup
	    \topmap{\abstractdecomposition}(\vertexcoverset_2)=
	    (\vertexcover_1\cup\zeta(\vertexcover_2))\cap\topmap{\abstractdecomposition}(\topbag{\abstractdecomposition})\\
	    \implies \topmap{\abstractdecomposition}(\vertexcoverset) =
	    \vertexcover\cap\topmap{\abstractdecomposition}(\topbag{\abstractdecomposition})
    \end{array}
\end{equation}

\newcommand{\vertexcoverM}{\mathcal{M}}
The vertex cover $\vertexcover$ satisfies Equation \ref{equation:vertex-cover-predicate-join-two} and
if we show that $\vertexcoversize$ is the minimum size of such vertex cover satisfying Equation \ref{equation:vertex-cover-predicate-join-two}, then we have proved the statement. 
Suppose $\vertexcover$ is not a vertex cover of minimum size and there exists a vertex cover $\vertexcoverM$ such that $|\vertexcoverM| < |\vertexcover|$ where $\topmap{\abstractdecomposition}(\vertexcoverset) = \vertexcoverM \cap \topmap{\abstractdecomposition}(\topbag{\abstractdecomposition})$. Let $\vertexcoverM_1 = \vertexcoverM \cap \vertexset{\decompositiongraph{\sigmaabstractdecomposition_1}}$  and $\vertexcoverM_2 = \zeta^{-1}(\vertexcoverM \cap \zeta(\vertexset{\decompositiongraph{\sigmaabstractdecomposition_2}}))$ be vertex covers of $\decompositiongraph{\sigmaabstractdecomposition_1}$ and $\decompositiongraph{\sigmaabstractdecomposition_2}$ satisfying the following equation.
\begin{equation}\label{equation:vertex-cover-predicate-join-three}
    \topmap{\sigmaabstractdecomposition_i}(\vertexcoverset_i) = \vertexcoverM_i \cap \topmap{\sigmaabstractdecomposition_i}(\topbag{\sigmaabstractdecomposition_i})
\end{equation} 
for $i\in \{1,2\}$. We have $\vertexcoverM_1\cup \zeta(\vertexcoverM_2)=\vertexcoverM$, and consequently, the following equation holds.
\begin{equation}\label{equation:vertex-cover-predicate-join-four}
    |\vertexcoverM| = |\vertexcoverM_1\cup \zeta(\vertexcoverM_2)| = |\vertexcoverM_1|+|\zeta(\vertexcoverM_2)|-|\vertexcoverM_1\cap\zeta(\vertexcoverM_2)|
\end{equation}
Also we have the following,
\begin{equation}\label{equation:vertex-cover-predicate-join-five}
  |\vertexcover| = |\vertexcover_1\cup \zeta(\vertexcover_2)| = |\vertexcover_1|+|\zeta(\vertexcover_2)|- |\vertexcover_1\cap\zeta(\vertexcover_2)|
\end{equation}

 We know $\vertexcover_i$ is a vertex cover of minimum size of $\decompositiongraph{\sigmaabstractdecomposition_i}$ satisfying Equation \ref{equation:vertex-cover-predicate-join-three}, and therefore, $|\vertexcover_i|\leq |\vertexcoverM_i|$, and also by $\topmap{\abstractdecomposition}(\vertexcoverset_1\cap\vertexcoverset_2) = \vertexcoverM_1\cap\zeta(\vertexcoverM_2) = \vertexcover_1\cap\zeta(\vertexcover_2)$, Equation \ref{equation:vertex-cover-predicate-join-four}, and Equation \ref{equation:vertex-cover-predicate-join-five} we can infer  $|\vertexcover| \leq |\vertexcoverM|$.
This contradicts the assumption that $|\vertexcoverM| < |\vertexcover|$, and therefore, $\vertexcover$ is minimum size. As a consequence, 
$\vertexcoverPredicate_{\vertexcoverParameter}[\treewidthvalue](\abstractdecomposition,\awitness) = 1$.

For the converse, suppose $\vertexcoverPredicate_{\vertexcoverParameter}[\treewidthvalue](\abstractdecomposition,\awitness)=1$. 
Then, there is a vertex cover $\vertexcover$ that satisfies the conditions 
of the predicate $\texttt{P-VertexCover}_r$. 
Let $\awitness_1=(\vertexcoverset_1,\vertexcoversize_1)$ be such that 
$\vertexcoverset_1 = \topmap{\abstractdecomposition}^{-1}(\vertexcover\cap\topmap{\sigmaabstractdecomposition_1}(\topbag{\sigmaabstractdecomposition_1}))$ and $\vertexcoversize_1 = |\vertexcover\cap \vertexset{\decompositiongraph{\sigmaabstractdecomposition_1}}|$. Analogously, 
let $\awitness_2=(\vertexcoverset_2,\vertexcoversize_2)$ be 
such that 
$\vertexcoverset_2 = \topmap{\abstractdecomposition}^{-1}(\vertexcover\cap \zeta(\topmap{\sigmaabstractdecomposition_2}(\topbag{\sigmaabstractdecomposition_2})))$,
 $\vertexcoversize_2 = |\vertexcover\cap \zeta(\vertexset{\decompositiongraph{\sigmaabstractdecomposition_2}})|$.
Let $\vertexcover_1 = \vertexcover\cap \vertexset{\decompositiongraph{\sigmaabstractdecomposition_1}}$, and $\vertexcover_2 = \zeta^{-1}(\vertexcover\cap \zeta(\vertexset{\decompositiongraph{\sigmaabstractdecomposition_2}}))$. Then, for each $i\in \{1,2\}$, 
we have that
\begin{equation}\label{equation:vertex-cover-predicate-six}
    \topmap{\sigmaabstractdecomposition_i}(\vertexcoverset_i) = \vertexcover_i \cap \topmap{\sigmaabstractdecomposition_i}(\topbag{\sigmaabstractdecomposition_i}). 
\end{equation}
Therefore, $\vertexcover_1$ is a minimal-size vertex cover in the graph $\decompositiongraph{\sigmaabstractdecomposition_1}$, and 
$\vertexcover_2$ is a minimal-size vertex cover in the graph $\decompositiongraph{\sigmaabstractdecomposition_2}$, since the non-minimality of either set would contradict the assumption that $\vertexcover$ has minimum size. Therefore, $\vertexcoverPredicate_{\vertexcoverParameter}[\treewidthvalue](\sigmaabstractdecomposition_i,\awitness_i) = 1 $. By the induction hypothesis, $\awitness_i \in \dynamizationfunction{\vertexcoverCore_{\vertexcoverParameter}}{\treewidthvalue}(\sigmaabstractdecomposition_i)$ for $i\in\{1,2\}$. By Definition \ref{definition:vertex-cover}.\ref{definition:vertex-cover-join}, we have that $\vertexcoverCore_{\vertexcoverParameter}[\treewidthvalue].\jointype(\awitness_1,\awitness_2) = \{\awitness\}$, and therefore, $\awitness \in \dynamizationfunction{\vertexcoverCore_{\vertexcoverParameter}}{\treewidthvalue}(\abstractdecomposition)$.
$\square$

\end{enumerate}

\section{Proof of Theorem \ref{theorem:Estimates}}
\label{section:ProofTheoremEstimates}

In this section, we prove Theorem \ref{theorem:Estimates}. More specifically,
for each graph property $\graphproperty$ listed in Theorem \ref{theorem:Estimates}, 
we provide an upper bound for the bit-length $\beta_{\dpcore}(\treewidthvalue)$ of a
suitable $\instructivetreedecompositionclass$-coherent DP-core $\dpcore$ with 
$\dpcoregraphproperty{\dpcore,\instructivetreedecompositionclass} = \graphproperty$. The multiplicity of
such a DP-core $\dpcore$ is trivially upper bounded by $2^{O(\beta_{\dpcore}(\treewidthvalue))}$, 
the state complexity of $\dpcore$ is trivially upper bounded by $2^{O(\beta_{\dpcore}(\treewidthvalue))}$, 
and the deterministic state complexity of $\dpcore$ is trivially upper bounded by $2^{2^{O(\beta_{\dpcore}(\treewidthvalue))}}$. 
In some cases, we show that  these two last upper bounds can be improved using additional arguments.  

The process of specifying a coherent DP-core $\dpcore$ deciding a given graph property $\graphproperty$ can be split into four 
main steps, which should hold for each $\treewidthvalue\in \N$. 

\begin{enumerate}
\item\label{stepOne} The specification of a suitable set $\dpcore[\treewidthvalue].\allwitnesses$ of local 
witnesses, and of a Boolean function $\dpcore[\treewidthvalue].\finalwitnessgeneric$ determining which are the final local witnesses in 
$\dpcore[\treewidthvalue].\allwitnesses$. 
\item\label{stepTwo} The specification of a predicate 
$\texttt{P}[\treewidthvalue] \subseteq \allabstractdecompositionstreewidth{\treewidthvalue} \times \dpcore[\treewidthvalue].\allwitnesses$ such that 
for each  $\abstractdecomposition\in \allabstractdecompositionstreewidth{\treewidthvalue}$, and each {\em final local witness} $\awitness$,
$(\abstractdecomposition,\awitness) \in \texttt{P}[\treewidthvalue]$ if and only if the graph $\decompositiongraph{\abstractdecomposition}$
belongs to $\graphproperty$.
\item\label{stepThree} The specification of the functions belonging to $\dpcore[\treewidthvalue]$. 
\item\label{stepFour} An inductive proof that for each pair $(\abstractdecomposition,\awitness) \in \allabstractdecompositionstreewidth{\treewidthvalue}\times \dpcore[\treewidthvalue].\allwitnesses$, 
$(\abstractdecomposition,\awitness)\in \dynamizationfunction{\dpcore}{\treewidthvalue}(\abstractdecomposition)$ if and only if $(\abstractdecomposition,\awitness)\in \texttt{P}[\treewidthvalue]$. 
This last step together with Step 2 guarantees that the DP-core is coherent and that $\dpcoregraphproperty{\dpcore[\treewidthvalue]} = \graphproperty$. 
\end{enumerate}

The process described above was exemplified in full detail with respect to the DP-core $\vertexcoverCore_{r}$ defined in
Appendix \ref{subsection:DPCoreVertexCover}. In this section, we are 
only interested in providing an upper bound on the bitlength of the DP-cores listed in Theorem \ref{theorem:Estimates}. Therefore, we will only carry out
in details Steps \ref{stepOne} and \ref{stepTwo}.
Our choices of local witnesses is based on simplicity, instead of efficiency, and our goal primarily to justify the asymptotic upper bounds 
listed in Theorem \ref{theorem:Estimates}.  

\subsection{$\texttt{C-Simple}$}
	 The graph property of the DP-core $\texttt{C-Simple}$ is the set $\texttt{Simple}$ of all simple graphs. 
		For each $\treewidthvalue\in \N$, a local witness for $\texttt{C-Simple}[\treewidthvalue]$
		is a subset $\awitness\subseteq \binom{[\treewidthvalue+1]}{2}$. All local witnesses are final.
		The transitions of $\texttt{C-Simple}[\treewidthvalue]$ are defined in such a way
		that for each $\treewidthvalue$-instructive tree decomposition $\abstractdecomposition$ and
		each local witness $\awitness$, $\awitness\in \dynamizationfunction{\texttt{C-Simple}}{\treewidthvalue}(\abstractdecomposition)$
		if and only if the following predicate is satisfied: 
	\begin{itemize}
		\item $\texttt{P-Simple}[\treewidthvalue](\abstractdecomposition,\awitness) \equiv$ $\decompositiongraph{\abstractdecomposition}$ is a simple graph, and for each two distinct elements $\vertexone$ and
		$\vertextwo$ in $[\treewidthvalue+1]$, $\{\vertexone,\vertextwo\}\in \awitness$ if and only if there is a unique edge
		$\anedge\in \edgeset{\decompositiongraph{\abstractdecomposition}}$ such that $\edgeendpoints{\decompositiongraph{\abstractdecomposition}}(\anedge) = 
		\{\topmap{\abstractdecomposition}(\vertexone),\topmap{\abstractdecomposition}(\vertextwo)\}$.
	\end{itemize}
		Each local witness $\awitness$ may be represented as a 
		Boolean vector in $\{0,1\}^{\binom{\treewidthvalue+1}{2}}$ which has one coordinate 
		for each pair $\{\vertexone,\vertextwo\}\in \awitness$. The DP-core 
		$\texttt{C-Simple}[\treewidthvalue]$ can be defined 
		in such a way that it is deterministic, in the 
		sense that the set obtained from each local witness $\awitness$ upon the application of 
		each transition is a singleton. 
		Therefore, the multiplicity of $\texttt{C-Simple}[\treewidthvalue]$ is $1$. This implies that 
		the deterministic state complexity of $\texttt{C-Simple}[\treewidthvalue]$ is at most $2^{\binom{\treewidthvalue+1}{2}}$.

\subsection{$\texttt{C-MaxDeg}_{\geq}(d)$}
 The graph property of the DP-core $\texttt{C-MaxDeg}_{\geq}(d)$ is the set $\texttt{MaxDeg}_{\geq}(d)$ of 
		all graphs with maximum degree at least $d$. For each $\treewidthvalue\in \N$, a local witness for 
		$\texttt{C-MaxDeg}_{\geq}(d)[\treewidthvalue]$ is a 
		pair $(x,y) \in \{0,1\} \times \{0,1,\dots,d+1\}^{\treewidthvalue+1}$.
		Such local witness is final if and only if $x=1$ or there is some $\vertexone\in [\treewidthvalue+1]$ such that
		$y_{\vertexone} \geq d$. The transitions of $\texttt{C-MaxDeg}_{\geq}(d)[\treewidthvalue]$ 
		are defined in such a way that for each $\treewidthvalue$-instructive tree decomposition 
		$\abstractdecomposition$ and each local witness $(x,y)$, 
		$(x,y)\in \dynamizationfunction{\texttt{C-MaxDeg}_{\geq}(d)}{\treewidthvalue}(\abstractdecomposition)$
		if and only if the following predicate is satisfied: 
	\begin{itemize}
		\item $\texttt{P-MaxDeg}_{\geq}(d)[\treewidthvalue](\abstractdecomposition,(x,y)) \equiv $ for each $\vertexone\in [\treewidthvalue+1]$, the vertex $\topmap{\abstractdecomposition}(\vertexone)$
		in the graph $\decompositiongraph{\abstractdecomposition}$ has degree 
		$y_{\vertexone}$ if $y_{\vertexone}\leq d$ and degree at least $d+1$ if
		$y_{\vertexone}=d+1$; $x = 1$ if and only if there is some vertex of degree at least $d$ in 
		$\vertexset{\decompositiongraph{\abstractdecomposition}}\backslash
		\topmap{\abstractdecomposition}(\topbag{\abstractdecomposition})$. 
	\end{itemize}
		Each local witness $(x,y)$ can be represented as a binary string
		containing $1+(\treewidthvalue+1)\cdot\lceil \log (d+1)  \rceil$ bits. The DP-core 
		$\texttt{C-MaxDeg}_{\geq}(d)[\treewidthvalue]$ can be defined in such a way that it is deterministic. 
		Therefore, the multiplicity of $\texttt{C-MaxDeg}_{\geq}(d)[\treewidthvalue]$ is $1$. This implies that 
		the deterministic state complexity of $\texttt{C-MaxDeg}_{\geq}(d)[\treewidthvalue]$ is upper bounded by $2^{O(\treewidthvalue\log d)}$. 

\subsection{$\texttt{C-MinDeg}_{\leq}(d)$}

The graph property of the DP-core $\texttt{C-MinDeg}_{\leq}(d)$ is the set $\texttt{MinDeg}_{\leq}(d)$ of all graphs
		with minimum degree at most $d$. For each $\treewidthvalue\in \N$, a local witness for $\texttt{C-MinDeg}_{\leq}(d)[\treewidthvalue]$ is a 
		pair $(x,y) \in \{0,1\} \times \{0,1,\dots,d+1\}^{\treewidthvalue+1}$. Such local witness is final 
		if and only if $x=1$ or $y_{\vertexone}\leq d$ for some $\vertexone\in [\treewidthvalue+1]$.	
		The transitions of $\texttt{C-MinDeg}_{\leq}(d)[\treewidthvalue]$ are defined in such a way
		that for each $\treewidthvalue$-instructive tree decomposition $\abstractdecomposition$ and each local
		witness $y$, $(x,y)\in \dynamizationfunction{\texttt{C-MinDeg}_{\leq}(d)}{\treewidthvalue}(\abstractdecomposition)$
		if and only if the following predicate is satisfied:
 	\begin{itemize}
		\item $\texttt{P-MinDeg}_{\leq}(d)[\treewidthvalue](\abstractdecomposition,(x,y)) \equiv $ 
		 for each $\vertexone\in [\treewidthvalue+1]$, the vertex $\topmap{\abstractdecomposition}(\vertexone)$
		in the graph $\decompositiongraph{\abstractdecomposition}$ has degree 
		$y_{\vertexone}$ if $y_{\vertexone}\leq d$ and degree at least $d+1$ if
		$y_{\vertexone}=d+1$; $x = 1$ if and only if there is some vertex of degree at most $d$ in 
		$\vertexset{\decompositiongraph{\abstractdecomposition}}\backslash
		\topmap{\abstractdecomposition}(\topbag{\abstractdecomposition})$. 
	\end{itemize}
	Each local witness $(x,y)$ can be represented as a binary string
		containing $1+(\treewidthvalue+1)\cdot\lceil \log (d+1)  \rceil$ bits. The DP-core 
		$\texttt{C-MinDeg}_{\leq}(d)[\treewidthvalue]$ can be defined in such a way that it is deterministic.
		Therefore, the multiplicity of $\texttt{C-MinDeg}_{\leq}(d)[\treewidthvalue]$ is $1$. This implies that 
		the deterministic state complexity of $\texttt{C-MinDeg}_{\leq}(d)[\treewidthvalue]$ is upper bounded by $2^{O(\treewidthvalue\log d)}$. 

\subsection{$\texttt{C-Colorable(c)}$} 

 The graph property of the DP-core $\texttt{C-Colorable(c)}$ is the set $\texttt{Colorable(c)}$ of all graphs that
	    are $c$-colorable. For each $\treewidthvalue\in \N$, a local witness for $\texttt{C-Colorable(c)}[\treewidthvalue]$
	    is a vector in $\awitness\in \{0,1,\dots,c\}^{\treewidthvalue+1}$. All local witnesses are final. 
	    The transitions of  $\texttt{C-Colorable(c)}[\treewidthvalue]$ are defined in such a way 
	    that for each $\treewidthvalue$-instructive tree decomposition $\abstractdecomposition$, and each local witness
	    $\awitness$, $\awitness\in \dynamizationfunction{\texttt{C-Colorable(c)}}{\treewidthvalue}(\abstractdecomposition)$ if and only if
	    the following predicate is satisfied: 
	\begin{itemize}
	\item $\texttt{P-Colorable}(c)[\treewidthvalue](\abstractdecomposition,\awitness) \equiv $  there is  a proper $c$-coloring
	$\vertexcoloring:\vertexset{\decompositiongraph{\abstractdecomposition}}\rightarrow [c]$ where for each $\vertexone\in \topbag{\abstractdecomposition}$, 
			$\awitness_{\vertexone}= \vertexcoloring(\topmap{\abstractdecomposition}(\vertexone))$ and for each $\vertexone \in [\treewidthvalue+1]\backslash \topbag{\abstractdecomposition}$, 
			$\awitness_{\vertexone}=0$. 
	\end{itemize}
	Each local witness can be represented using $\lceil \log(c + 1) \rceil \cdot (\treewidthvalue+1)$ bits.
	It is worth noting that every graph of treewidth at most $\treewidthvalue$ is $(\treewidthvalue+1)$-colorable, and therefore, for $c\geq \treewidthvalue+1$, 
	we can define $\texttt{C-Colorable}(c)[\treewidthvalue]$ as the trivial DP-core which accepts all $\treewidthvalue$-instructive tree decompositions.
	This DP-core has a unique local witness, and this unique local witness is final. 

\subsection{$\texttt{C-Conn}$} 
\label{subsection:ConnectivityCore}

The graph property of the DP-core $\texttt{C-Conn}$ is the set $\texttt{Conn}$ of all connected graphs. 
For each $\treewidthvalue\in \N$, a local witness for $\texttt{C-Conn}[\treewidthvalue]$ is a pair $(\qconn,\Pconn)$
	  where $\qconn\in \{0,1,2,3\}$, and $\Pconn$ is a partition of some subset of $[\treewidthvalue+1]$. Such a local witness
	  $(\qconn,\Pconn)$ is final if $\qconn\neq 3$ and $\Pconn$ has at most one cell (the empty partition with no cell is a legal partition of the empty set).
	  The transitions of $\texttt{C-Conn}[\treewidthvalue]$ are defined in such a way that for each 
	 $\treewidthvalue$-instructive tree decomposition $\abstractdecomposition$, and each local witness
	  $(\qconn,\Pconn)$, $(\qconn,\Pconn)\in \dynamizationfunction{\texttt{C-Conn}}{\treewidthvalue}(\abstractdecomposition)$ if and only if the following predicate is satisfied: 
	\begin{itemize}
		\item $\texttt{P-Conn}[\treewidthvalue](\abstractdecomposition,(\qconn,\Pconn)) \equiv$
		$U(P)=\topbag{\abstractdecomposition}$; for each two labels $\vertexone,\vertextwo\in \topbag{\abstractdecomposition}$, $\vertexone$ and $\vertextwo$ are in
		the same cell of $\Pconn$ if and only if $\topmap{\abstractdecomposition}(\vertexone)$
		and $\topmap{\abstractdecomposition}(\vertextwo)$ belong to the same connected component in the
		graph $\decompositiongraph{\abstractdecomposition}$; furthermore,
 		$$
		\qconn = \left\{ 
		\begin{array}{ll}
		0 & \mbox{if $\decompositiongraph{\abstractdecomposition}$ is the empty graph;} \\
		1 & \mbox{if $P\neq \emptyset$ and every vertex in $\decompositiongraph{\abstractdecomposition}$
		is reachable from $\topmap{\abstractdecomposition}(U(P))$;} \\ 

		2 & \mbox{if $P=\emptyset$, $\decompositiongraph{\abstractdecomposition}$ is connected and not the empty graph;}\\
		3 & \mbox{if $P\neq \emptyset$ and some vertex in $\decompositiongraph{\abstractdecomposition}$ is not 
				reachable from $\topmap{\abstractdecomposition}(U(P))$,} \\ 
		  & \mbox{or $P=\emptyset$ and $\decompositiongraph{\abstractdecomposition}$ is disconnected.}
		\end{array}
		\right.
		$$
	\end{itemize}
	Each local witness $(\qconn,\Pconn)$ can be represented using
	$2+  (\treewidthvalue+1)\cdot \lceil \log (\treewidthvalue+2) \rceil = O(\treewidthvalue\log\treewidthvalue)$ bits. 
	Additionally, $\texttt{C-Conn}$ can be defined in such a way that it has multiplicity $1$. Therefore, its 
	deterministic state complexity is upper bounded by $2^{O(\treewidthvalue\log \treewidthvalue)}$. 

\subsection{$\texttt{C-VConn}_{\leq}(c)$}
The graph property of the DP-core $\texttt{C-VConn}_{\leq}(c)$ is the set of all graphs with vertex-connectivity at 
most $c$. A local witness is a triple $(r,\qconn,\Pconn)$ where $r\in \{0,1,\dots,c\}$, $\qconn\in \{0,1,2,3\}$, and $\Pconn$ is a partition of some subset of 
$[\treewidthvalue+1]\backslash R$. A witness is final if $\qconn=3$ or $\Pconn$ has more than one cell (this means that after removing $r\leq c$ vertices the graph gets disconnected). 
The transitions of $\texttt{C-VCon}_{\leq}(c)[\treewidthvalue]$ are defined in such a way that 
for each $\treewidthvalue$-instructive tree decomposition $\abstractdecomposition$, and each local witness
$(r,\qconn,\Pconn)$, $(r,\qconn,\Pconn)\in \dynamizationfunction{\texttt{C-VCon}_{\leq}(c)}{\treewidthvalue}(\abstractdecomposition)$ if and only if
    the following predicate is satisfied: 
	\begin{itemize}
		\item $\texttt{P-VCon}_{\leq}(c)[\treewidthvalue](\abstractdecomposition,\awitness) \equiv $ there is a subset of vertices 
		$X\subseteq \vertexset{\decompositiongraph{\abstractdecomposition}}$ of size $r$ 
		such that:
		$U(P)\subseteq \topbag{\abstractdecomposition}$ and for each $u\in \topbag{\abstractdecomposition}$, $u\in U(P)$ if and only if $\topmap{\abstractdecomposition}(u)\notin X$;
		for each two labels $\vertexone,\vertextwo\in \topbag{\abstractdecomposition}$, $\vertexone$ and $\vertextwo$ are in
		the same cell of $\Pconn$ if and only if $\topmap{\abstractdecomposition}(\vertexone)$
		and $\topmap{\abstractdecomposition}(\vertextwo)$ belong to the same connected component in the
		graph $\decompositiongraph{\abstractdecomposition}\backslash X$; furthermore,
 		$$
		\qconn = \left\{ 
		\begin{array}{ll}
		0 & \mbox{if $\decompositiongraph{\abstractdecomposition}\backslash X$ is the empty graph;} \\
		1 & \mbox{if $P\neq \emptyset$ and every vertex in $\decompositiongraph{\abstractdecomposition}\backslash X$
		is reachable from $\topmap{\abstractdecomposition}(U(P))$;} \\ 
		2 & \mbox{if $P=\emptyset$, $\decompositiongraph{\abstractdecomposition}\backslash X$ is connected and not the empty graph;}\\
		3 & \mbox{if $P\neq \emptyset$ and some vertex in $\decompositiongraph{\abstractdecomposition}\backslash X$ is not 
				reachable from $\topmap{\abstractdecomposition}(U(P))$,} \\ 
		  & \mbox{or $P=\emptyset$ and $\decompositiongraph{\abstractdecomposition}\backslash X$ is disconnected.}
		\end{array}
		\right.
		$$
	\end{itemize}
		Each local witness can be represented using $2 +\lceil \log c \rceil +  (\treewidthvalue+1)\cdot (1 + \lceil\log (\treewidthvalue+2) \rceil) = O(\log c + \treewidthvalue\log\treewidthvalue)$ bits.
		It is worth noting that every graph of treewidth at most $\treewidthvalue$ has connectivity at most $\treewidthvalue+1$, and therefore, for $c\geq \treewidthvalue+1$, 
		we can define $\texttt{C-VCon}_{\leq}(c)[\treewidthvalue]$ as the trivial DP-core which accepts all $\treewidthvalue$-instructive tree decompositions. This DP-core has a unique local witness, and this 
		unique local witness is final. 

\subsection{$\texttt{C-EConn}_{\leq}(c)$} 
The graph property of the DP-core $\texttt{C-EConn}_{\leq}(c)$ is the set of all graphs with edge-connectivity at 
most $c$. A local witness is a tuple $(r,\qconn,\Pconn)$ where $r\in \{0,1,\dots,c\}$, $\qconn\in \{0,1,2,3\}$, and $\Pconn$ is a partition of some subset of 
$[\treewidthvalue+1]$. A witness is final if $\qconn=3$ or $\Pconn$ has more than one cell (this means that after removing $r\leq c$ edges, the graph gets disconnected). 
The transitions of $\texttt{C-EConn}_{\leq}(c)[\treewidthvalue]$ are defined in such a way that 
for each $\treewidthvalue$-instructive tree decomposition $\abstractdecomposition$, and each local witness
$(r,\qconn,\Pconn)$, $(r,\qconn,\Pconn)\in \dynamizationfunction{\texttt{C-EConn}_{\leq}(c)}{\treewidthvalue}(\abstractdecomposition)$ if and only if
the following predicate is satisfied: 
	\begin{itemize}
		\item $\texttt{P-EConn}_{\leq}(c)[\treewidthvalue](\abstractdecomposition,\awitness) \equiv $
		there is a subset of edges $Y\subseteq \vertexset{\decompositiongraph{\abstractdecomposition}}$
		of size $r$ such that:
		$U(P)=\topbag{\abstractdecomposition}$; for each two labels $\vertexone,\vertextwo\in \topbag{\abstractdecomposition}$, $\vertexone$ and $\vertextwo$ are in
		the same cell of $\Pconn$ if and only if $\topmap{\abstractdecomposition}(\vertexone)$
		and $\topmap{\abstractdecomposition}(\vertextwo)$ belong to the same connected component in the
		graph $\decompositiongraph{\abstractdecomposition}\backslash Y$; furthermore,
 		$$
		\qconn = \left\{ 
		\begin{array}{ll}
		0 & \mbox{if $\decompositiongraph{\abstractdecomposition}\backslash Y$ is the empty graph;} \\
		1 & \mbox{if $P\neq \emptyset$ and every vertex in $\decompositiongraph{\abstractdecomposition}\backslash Y$
		is reachable from $\topmap{\abstractdecomposition}(U(P))$;} \\ 
		2 & \mbox{if $P=\emptyset$, $\decompositiongraph{\abstractdecomposition}\backslash Y$ is connected and not the empty graph;}\\
		3 & \mbox{if $P\neq \emptyset$ and some vertex in $\decompositiongraph{\abstractdecomposition}\backslash Y$ is not 
				reachable from $\topmap{\abstractdecomposition}(U(P))$,} \\ 
		  & \mbox{or $P=\emptyset$ and $\decompositiongraph{\abstractdecomposition}\backslash Y$ is disconnected.}
		\end{array}
		\right.
		$$
	\end{itemize}
		Each local witness can be represented using $2 +\lceil \log c \rceil +  (\treewidthvalue+1)\cdot \lceil \log (\treewidthvalue+2) \rceil = O(\log c + \treewidthvalue\log\treewidthvalue)$ bits.

\subsection{$\texttt{C-Hamiltonian}$}

The graph property of the DP-core $\texttt{C-Hamiltonian}$ is the set $\texttt{Hamiltonian}$ of all graphs that 
are Hamiltonian. This DP-core is defined by a straightforward adaptation of the standard algorithm for 
	testing Hamiltonicity parameterized by treewidth. See for instance \cite{ziobro2019finding}. 

For each $\treewidthvalue\in \N$, a local witness for $\texttt{C-Hamiltonian(c)}[\treewidthvalue]$
is a pair $(\beta,M)$ where $\beta:S\rightarrow \{0,1,2\}$ is a function whose domain $S$ is a subset of $[\treewidthvalue+1]$
and $M\subseteq \powersetchoosek{\beta^{-1}(1)}{2}$ is a matching that relates pairs of labels in $S$ that are sent 
to the value $1$. The transitions of $\texttt{C-Hamiltonian(c)}[\treewidthvalue]$ are defined in such a way 
that for each $\treewidthvalue$-instructive tree decomposition $\abstractdecomposition$, and each local witness
$\awitness$, $\awitness\in \dynamizationfunction{\texttt{C-Hamiltonian(c)}}{\treewidthvalue}(\abstractdecomposition)$
if and only if the following predicate is satisfied: 
\begin{itemize}
\item $\texttt{P-Hamiltonian}(c)[\treewidthvalue](\abstractdecomposition,\awitness) \equiv$
either $\decompositiongraph{\abstractdecomposition}$ is Hamiltonian and $\beta^{-1}(0) = \beta^{-1}(1) = \emptyset$, 
or there is a partition $\mathcal{P}$ of $\vertexset{\decompositiongraph{\abstractdecomposition}}$ into 
vertex-disjoint paths such that for each $\vertexone\in \topbag{\abstractdecomposition}$,
$\topmap{\abstractdecomposition}(\vertexone)$ has degree $\beta(\vertexone)$ in some path of $\mathcal{P}$.  
\end{itemize}
Each local witness can be represented using $\lceil \log\treewidthvalue \rceil \cdot (\treewidthvalue+1) = O(\treewidthvalue\log\treewidthvalue)$
bits. This yields a multiplicity of at most $2^{O(\treewidthvalue\cdot \log \treewidthvalue)}$, and consequently a deterministic state complexity of 
$2^{2^{O(\treewidthvalue\cdot \log \treewidthvalue)}}$. 
Dynamic programming algorithms that can be translated into DP-cores with better 
multiplicity have been devised in \cite{cygan2015parameterized,bodlaender2015deterministic}. For example, using a clever $\dpcore[\treewidthvalue].\cleaningfunctioncore$
function, that applies the rank-based approach introduced in \cite{bodlaender2015deterministic} to eliminate redundancies, 
one can guarantee that the number
of local witnesses in a useful witness set (i.e. the multiplicity of the DP-core) is always bounded by $2^{O(\treewidthvalue)}$. 
In other words, this clean function decreases the multiplicity of the DP-core without affecting the existence of 
a solution. As a consequence, the deterministic state complexity of the DP-core is upper bounded by
$\binom{2^{O(k\log k)}}{2^{O(k)}} = 2^{2^{O(k)}}$. 
A nice discussion about the rank based approach and other approaches to solve the Hamiltonian cycle problem on graphs 
of bounded treewidth is also present in \cite{ziobro2019finding}.

\subsection{$\texttt{C-NZFlow}(\Z_\flowValue)$}

We let $\Z_{\flowValue}=\{0,\dots,m-1\}$ be the set of integers modulo $\flowValue$. Let  $\agraph$ be a graph and $(\tailmap,\headmap)$
be an orientation of $\agraph$. In other words, $\tailmap:\edgeset{G}\to \vertexset{G}$ and $\headmap:\edgeset{G}\to \vertexset{G}$ are maps 
that specify the tail  $\tailmap(\anedge)$ and the head $\headmap(\anedge)$ of each edge $\anedge\in \edgeset{\agraph}$ ($\tailmap(\anedge)\neq \headmap(\anedge)$). For each $v\in \vertexset{\agraph}$, we let $\delta^-(v)=\{e | \; \tailmap(v)=e \}$ be the set of edges whose tail is $v$, and $\delta^+(v)=\{e | \; \headmap(v)=e \}$ be the set of edges whose head is $v$. We say that 
$v$ satisfies the flow equation if the following condition is satisfied. 

\begin{equation}
\label{equation:FlowEquation}
\sum_{e\in \delta^+(v)} \phi(e) = \sum_{e\in \delta^-(v)} \phi (e)
\end{equation}

\begin{definition}[Nowhere-Zero $\Z_{\flowValue}$-Flow]\label{definition:nowhereZeroFlow}
Let $\agraph$ be a graph and $(\tailmap,\headmap)$ be an orientation of $\agraph$. A nowhere-zero $\Z_{\flowValue}$-flow in $(\agraph,\tailmap,\headmap)$ is a function $\phi:\edgeset{\agraph}\to \Z_{\flowValue}$ 
satisfying the following conditions:
\begin{enumerate}
    \item \label{condition:flow equality} for each $v\in V$, $v$ satisfies the 
    flow equation, and
    \item for each $e\in E$, $\phi(e) \neq 0$.
\end{enumerate}

We say that $\phi:\edgeset{\agraph}\to \Z_{\flowValue}$ is a nowhere-zero $\Z_{\flowValue}$-flow in $\agraph$ if there is an orientation $(\tailmap,\headmap)$ such that $\phi$ is a
$\Z_{\flowValue}$-flow in $(\agraph,\tailmap,\headmap)$. 
\end{definition}

For each $\flowValue\in \N$, we let $\texttt{NZFlow}(\Z_\flowValue)$ be the graph property consisting of all graphs that admit a
nowhere-zero $\Z_\flowValue$-flow. For each $\treewidthvalue\in \N$, a local witness for $\texttt{C-NZFlow}(\Z_\flowValue)[\treewidthvalue]$
is a set of triples of the form $(\vertexone,\vertextwo,\flowAssignedValue)$ where $\vertexone$ and $\vertextwo$ belong to
$[\treewidthvalue+1]$ and $\flowAssignedValue\in \Z_{\flowValue}$. Such a local witness is final, meaning that 
$\texttt{C-NZFlow}(\Z_\flowValue)[\treewidthvalue].\finalwitnessgenericcore(\vertexone,\vertextwo,\flowAssignedValue) = 1$ if 
and only if $\flow(\flowWitness,\vertexone)$ is true for each $\vertexone\in \labels(\flowWitness)$. 
The transitions of $\texttt{C-MaxDeg}_{\geq}(d)[\treewidthvalue]$ 
are defined in such a way that for each $\treewidthvalue$-instructive tree decomposition 
$\abstractdecomposition$ and each local witness $\awitness$,
$\awitness \in \dynamizationfunction{\texttt{C-NZFlow}(\Z_m)}{\treewidthvalue}(\abstractdecomposition)$
if and only if the following predicate is satisfied: 
\begin{itemize}
\item $\texttt{P-NZFlow}(\Z_m)[\treewidthvalue](\abstractdecomposition,\awitness) \equiv $ 
there is an orientation $(\tailmap,\headmap)$ of the graph $\decompositiongraph{\abstractdecomposition}$,
		and a function $\phi:\edgeset{\decompositiongraph{\abstractdecomposition}}\to \Z_{\flowValue}$
such that the following conditions are satisfied. 
\begin{enumerate}
\item For each $\vertexone$ and $\vertextwo$ in 
$\topbag{\abstractdecomposition}$, and each $\flowAssignedValue\in \Z_{\flowValue}$, $(\vertexone,\vertextwo,\flowAssignedValue)\in \flowWitness$ if and only if $\sum_{\anedge} \phi(\anedge) = \flowAssignedValue$, where 
$\anedge$ ranges over all edges $\anedge\in \edgeset{\decompositiongraph{\abstractdecomposition}}$ with
$\tailmap(\anedge) = \topmap{\abstractdecomposition}(\vertexone)$, 
and
$\headmap(\anedge) =\topmap{\abstractdecomposition}(\vertextwo)$.  
\item For each $\vertexone\in \topbag{\abstractdecomposition}$, 
$\flow(\flowWitness,\vertexone) = \true$ if and only if
$\topmap{\abstractdecomposition}(\vertexone)$ satisfies the flow equation (Equation \ref{equation:FlowEquation}). In particular, 
if $\awitness$ is a final local witness, then for each $\vertexone \in \topbag{\abstractdecomposition}$, the vertex $\topmap{\abstractdecomposition}(\vertexone)$ satisfies the flow equation. 
\item Each vertex $\xvertex \in \vertexset{\decompositiongraph{\abstractdecomposition}}\setminus \image(\topmap{\abstractdecomposition})$, satisfies the flow equation
(Equation \ref{equation:FlowEquation}).
\end{enumerate}
\end{itemize}

Since a  local witness for $\texttt{C-NZFlow}(\Z_{\flowValue})[\treewidthvalue]$ is a set with $O(\treewidthvalue^2)$ triples, such a
witness can be represented by a binary vector of length $O(\treewidthvalue^2\cdot \log \flowValue)$, where $\log \flowValue$
bits are used to represent each flow value.

\subsection{$\texttt{C-Minor(H)}$} 

Parameterized algorithms for determining whether a pattern graph $H$ is a minor of a host graph $G$ parameterized by the branchwidth of 
the host graph have been devised in \cite{hicks2004branch},
and subsequently improved in \cite{adler2011faster}. Both algorithms operate with branch decompositions instead of tree decompositions.
The later algorithm works in time $O(2^{(2k+1)\cdot \log k} \cdot |\vertexset{H}|^{2k} \cdot 2^{2{|\vertexset{H}|}^2}\cdot |\vertexset{G}|)$,
where $\treewidthvalue$ is the width of the input branch decomposition. Partial solutions in both algorithms are combinatorial objects called 
{\em rooted packings}. 

In this section, we describe the structure of local witnesses of a DP-core $\texttt{C-Minor(H)}$ that solves the minor containment problem by analyzing 
$\treewidthvalue$-instructive tree decompositions, instead of branching decompositions. In our setting, partial solutions are objects 
called {\em quasimodels}. A local witness may be regarded as an encoding of the restriction of a quasimodel to the active vertices of a given $\treewidthvalue$-instructive
tree decomposition. We note that our local witnesses have bitlength $O(\treewidthvalue\cdot \log \treewidthvalue + |\vertexset{H}| + |\edgeset{H}|)$, and 
as a consequence, our DP-core decides whether a graph $\agraph$ of treewidth at most $\treewidthvalue$ has an $\bgraph$-minor 
in time $2^{O(\treewidthvalue\cdot \log \treewidthvalue + |\vertexset{H}| + |\edgeset{H}|)} \cdot (|\vertexset{G}| + |\edgeset{G}|)$. 


We
will need the following variant of the predicate
$\texttt{P-Conn}[\treewidthvalue](\abstractdecomposition,(\qconn,P))$ defined in Subsection \ref{subsection:ConnectivityCore}. This variant
is parameterized by a subset $X\subseteq \N$, where $X$ is meant to be a subset of $\vertexset{\decompositiongraph{\abstractdecomposition}}$. Note that this variant is obtained essentially, by
replacing $\decompositiongraph{\abstractdecomposition}$ with $\decompositiongraph{\abstractdecomposition}[X]$ and by requiring that $P$ is 
a partition of some subset of $\topbag{\abstractdecomposition}$, instead of a partition of $\topbag{\abstractdecomposition}$. 

	\begin{itemize}
		\item $\texttt{P-Conn}^{X}[\treewidthvalue](\abstractdecomposition,(\qconn,\Pconn)) \equiv$ 
		$X\subseteq \vertexset{\decompositiongraph{\abstractdecomposition}}$;
		$U(P)\subseteq \topbag{\abstractdecomposition}$ and for each $u\in \topbag{\abstractdecomposition}$, $u\in U(P)$ if and only if $\topmap{\abstractdecomposition}(u)\in X$;
		for each two labels $\vertexone,\vertextwo\in \topbag{\abstractdecomposition}$, $\vertexone$ and $\vertextwo$ are in
		the same cell of $\Pconn$ if and only if $\topmap{\abstractdecomposition}(\vertexone)$
		and $\topmap{\abstractdecomposition}(\vertextwo)$ belong to the same connected component in the
		graph $\decompositiongraph{\abstractdecomposition}[X]$; furthermore,
 		$$
		\qconn = \left\{ 
		\begin{array}{ll}
		0 & \mbox{if $\decompositiongraph{\abstractdecomposition}[X]$ is the empty graph;} \\
		1 & \mbox{if $P\neq \emptyset$ and every vertex in $\decompositiongraph{\abstractdecomposition}[X]$
		is reachable from $\topmap{\abstractdecomposition}(U(P))$;} \\ 
		2 & \mbox{if $P=\emptyset$, $\decompositiongraph{\abstractdecomposition}[X]$ is connected and not the empty graph;}\\
		3 & \mbox{if $P\neq \emptyset$ and some vertex in $\decompositiongraph{\abstractdecomposition}[X]$ is not 
				reachable from $\topmap{\abstractdecomposition}(U(P))$,} \\ 
		  & \mbox{or $P=\emptyset$ and $\decompositiongraph{\abstractdecomposition}[X]$ is disconnected.}
		\end{array}
		\right.
		$$
	\end{itemize}

Below, we define the notions of quasimodel and model of a graph  $H$ in a graph $G$. Intuitively, a model should be regarded as a 
certificate that $H$ is a minor of $G$, while a quasi-model is a substructure that can be potentially extended to a model. 

\begin{definition}[Quasimodels and Models]
\label{definition:Model}
Let $H$ and $G$ be graphs. A quasimodel of $H$ in $G$ is a pair of sequences
$M = ([X_x]_{x\in \vertexset{H}}, [y_e]_{e\in |\edgeset{H}|})$ 
satisfying the following
conditions: 
\begin{enumerate}
\item for each $x\in \vertexset{H}$, $X_x$ is a subset of $\vertexset{G}$, 
\item for each two distinct $x,x'\in \vertexset{H}$, $X_x$ is disjoint from $X_{x'}$, 
\item for each $e\in \edgeset{H}$ with endpoints $\{x,x'\}$, either $y_{e}$ is an edge in $\edgeset{G}$ with one endpoint in $X_{x}$ 
	and another endpoint in $X_{x'}$, or $y_e = 0$, 
\item for each two distinct $e, e'\in \edgeset{H}$, $y_e\neq y_{e'}$. 
\end{enumerate}
We say that $M$ is a {\em model} of $H$ in $G$ if additionally, for each $x\in \vertexset{H}$, $G[X_x]$ is a connected subgraph of $G$, and 
	for each $e\in \edgeset{H}$, $y_e > 0$.
\end{definition}

We say that a graph $H$ is a minor of a graph $G$ if there is a model of $H$ in $G$. 
The graph property of the DP-core $\texttt{C-Minor}(H)$ is the set of all graphs that have a model of $H$.
A local witness is a pair of tuples
\begin{equation}
\label{equation:LocalWitnessMinor}
\awitness  = ([(\qconn_x,\Pconn_x)]_{x\in \vertexset{H}}, [b_e]_{e\in E_H}) 
\end{equation}

where for each $x\in \vertexset{H}$, $\qconn_x\in \{0,1\}$, $\Pconn_x$ is a partition of some subset of $[\treewidthvalue+1]$,
and for each $e\in \edgeset{H}$, $b_e\in \{0,1\}$. Such a local witness is final if and only if $b_e=1$ for every $e\in \edgeset{H}$,
and for each $x\in \vertexset{H}$, $\qconn_x\neq 3$ and $\Pconn_x$ has at most one cell. Note that this last condition,
imposed on the pair $(\qconn_x,\Pconn_x)$, is just the condition for a local witness to be final with respect to the DP-core $\texttt{C-Conn}$ defined 
in Subsection \ref{subsection:ConnectivityCore}. Intuitively, 
this will be used to certify that for some $X_x\subseteq \vertexset{\decompositiongraph{\abstractdecomposition}}$, the 
induced subgraph $\decompositiongraph{\abstractdecomposition}[X_x]$ is connected. 

The transitions of $\texttt{C-Minor}(H)[\treewidthvalue]$ are defined in such a way that 
for each $\treewidthvalue$-instructive tree decomposition $\abstractdecomposition$, and each local witness
$\awitness = ([(\qconn_x,\Pconn_x)]_{x\in \vertexset{H}},[b_{e}]_{e\in \edgeset{H}})$, 
$\awitness\in \dynamizationfunction{\texttt{C-Minor}(H)}{\treewidthvalue}(\abstractdecomposition)$ if and only if
the following predicate is satisfied: 
\begin{itemize}
\item $\texttt{P-Minor}(H)[\treewidthvalue](\abstractdecomposition,\awitness) \equiv $ there exists a quasi-model $([X_{x}]_{x\in \vertexset{H}}, [y_e]_{e\in \edgeset{H}})$ 
of $H$ in $G$ such that
\begin{enumerate}
\item for each $x\in \vertexset{H}$, $\texttt{P-Conn}^{X_{x}}[\treewidthvalue](\abstractdecomposition,(\qconn_{x},\Pconn_{x}))=1$, and  
\item for each $e\in \edgeset{H}$, $b_e = 1$ if and only if $y_{e} > 0$. 
\end{enumerate}
\end{itemize}

Each local witness $([(\qconn_x,\Pconn_x)]_{x\in \vertexset{H}},[b_{e}]_{e\in \edgeset{H}})$ 
can be straightforwardly represented using 
$$|\vertexset{H}|\cdot (2+ \treewidthvalue\cdot \lceil \log (\treewidthvalue +2) \rceil) + \edgeset{H} = O(|\vertexset{H}|\cdot \treewidthvalue\cdot \log\treewidthvalue + |\edgeset{H}|)$$
bits, given that each partition $\Pconn_x$ can be represented by at most $\treewidthvalue\log \treewidthvalue$ bits. Nevertheless, this upper bound can be improved to 
$O(k\log k + |V_H| + |E_H|)$ by noting that one can assume that the vertices of $V_H$ are ordered (any fixed ordering) and that 
for each $x,x'\in \vertexset{H}$ with $x\neq x'$, each cell of $P_x$ is disjoint from each cell of $P_{x'}$. More specifically, we may represent 
the sequence $[P_x]_{x\in \vertexset{H}}$ as a sequence of symbols from the alphabet $\{1,\dots,k+1\}\cup \{ |, :\}$, where two consecutive occurrences of 
the symbol $|$ enclose labels occurring in some $P_x$, while the symbol $:$ is used to separate cells inside the partition. 
Numbers occurring after the $x$-th occurrence symbol $|$ and before the $(x+1)$-th occurrence of this symbol correspond to partition $P_x$. The symbol $:$ is used to separate cells 
within two consecutive occurrences of the symbols $|$. Since no number in the set $\{1,\dots,k+1\}$ occurs twice in the sequence,
the symbol $:$ occurs at most $k$ times, and the symbol $|$ occurs at most $|V_{H}|$ times, 
the length of this sequence is at most $(k+1) + k + |V_H|$. Note that the numbers in $\{1,\dots,k+1\}$ require each $O(\log k)$ bits to be 
represented, while the symbols $|$ and $:$ can be represented with $2$ bits. 
Therefore, the overall representation of the witness, including the vector $[b_{e}]_{e\in \edgeset{H}}$ has
$O(k\log k + |V_H| + |E_H|)$ bits.

\end{document}